\documentclass[11pt]{article}
\pdfoutput=1

\usepackage[T1]{fontenc}
\usepackage{longtable}
\usepackage[margin=1in]{geometry}
\usepackage[left]{lineno}
\usepackage{amsmath,amsthm,amssymb,amsfonts,titling,bbm,titlesec,bm,physics,enumitem,accents,setspace,fancyhdr,pdflscape}
\usepackage{graphicx}
\usepackage{float}
\usepackage{placeins}
\usepackage[font=footnotesize,labelfont=bf]{caption}
\usepackage{array}
\usepackage[title]{appendix}
\usepackage{afterpage}
\usepackage{listings}
\usepackage{prodint}
\usepackage{tocloft}
\usepackage{algorithm}
\usepackage{booktabs}
\usepackage{multirow}
\usepackage{subcaption}

\newlength{\continueindent}
\usepackage{etoolbox}
\newcommand{\slightspacing}{\setstretch{1.15}}
\slightspacing

\usepackage[dvipsnames]{xcolor}
\usepackage{siunitx}
\usepackage{soul}
\definecolor{Bleu}{RGB}{0,0,204}

\usepackage{tikz}
\usetikzlibrary{tikzmark,arrows.meta,calc}
\definecolor{CB5red}{HTML}{DA0000}
\definecolor{CB5purple}{HTML}{8000B3}

\usepackage[hypertexnames=false]{hyperref}
\hypersetup{
  colorlinks,
  citecolor=Bleu,
  linkcolor=Bleu,
  urlcolor=Bleu}

\usepackage{amsthm}

\makeatletter
\def\@endtheorem{\endtrivlist}
\makeatother

\usepackage{upgreek}
\usepackage{mathrsfs}
\usepackage[noend]{algpseudocode}
\usepackage[round]{natbib}
\usepackage{calc}
\usepackage{authblk}

\theoremstyle{plain}

\newtheorem{theorem}{Theorem}
\newtheorem{corollary}[theorem]{Corollary}

\newtheorem{lemma}[theorem]{Lemma}
\newtheorem{proposition}[theorem]{Proposition}
\newtheorem{assumption}{Assumption}

\DeclareMathOperator*{\argmin}{argmin}

\usepackage[capitalize,noabbrev]{cleveref}
\newcommand\independent{\protect\mathpalette{\protect\independenT}{\perp}}
\def\independenT#1#2{\mathrel{\rlap{$#1#2$}\mkern2mu{#1#2}}}

\DeclareFontFamily{U}{jkpmia}{}
\DeclareFontShape{U}{jkpmia}{m}{it}{<->s*jkpmia}{}
\DeclareFontShape{U}{jkpmia}{bx}{it}{<->s*jkpbmia}{}
\DeclareMathAlphabet{\mathfrak}{U}{jkpmia}{m}{it}
\SetMathAlphabet{\mathfrak}{bold}{U}{jkpmia}{bx}{it}

\DeclareMathOperator*{\Var}{Var}

\newcommand{\E}[2]{\ensuremath{{\mathbb E}_{#1}\left[#2\right]}}

\usepackage[hang,flushmargin]{footmisc}

\definecolor{codegreen}{rgb}{0,0.6,0}
\definecolor{codegray}{rgb}{0.5,0.5,0.5}
\definecolor{codepurple}{rgb}{0.58,0,0.82}
\definecolor{backcolour}{rgb}{0.95,0.95,0.92}

\lstdefinestyle{unadjusted}{
    backgroundcolor=\color{backcolour},
    commentstyle=\color{codegreen},
    keywordstyle=\color{blue},
    keywords={},
    numberstyle=\tiny\color{codegray},
    stringstyle=\color{codepurple},
    basicstyle=\fontfamily{zi4}\selectfont\footnotesize,
    breakatwhitespace=false,
    breaklines=true,
    captionpos=b,
    keepspaces=true,
    numbers=left,
    numbersep=5pt,
    showspaces=false,
    showstringspaces=false,
    showtabs=false,
    tabsize=2
    }

\lstdefinestyle{adjusted}{
    backgroundcolor=\color{backcolour},
    commentstyle=\color{codegreen},
    keywordstyle=\color{blue},
    morekeywords={w, calculate_weights, bootstrap_standard_error, weighted, T, weights},
    numberstyle=\tiny\color{codegray},
    stringstyle=\color{codepurple},
    basicstyle=\ttfamily\footnotesize,
    breakatwhitespace=false,
    breaklines=true,
    captionpos=b,
    keepspaces=true,
    numbers=left,
    numbersep=5pt,
    showspaces=false,
    showstringspaces=false,
    showtabs=false,
    tabsize=2
}

\lstdefinestyle{adjusted_cox}{
    backgroundcolor=\color{backcolour},
    commentstyle=\color{codegreen},
    moredelim=**[is][\bfseries\color{Bleu}]{|}{|},
    numberstyle=\tiny\color{codegray},
    stringstyle=\color{codepurple},
    basicstyle=\fontfamily{zi4}\selectfont\footnotesize,
    breakatwhitespace=false,
    breaklines=true,
    captionpos=b,
    keepspaces=true,
    numbers=left,
    numbersep=5pt,
    showspaces=false,
    showstringspaces=false,
    showtabs=false,
    tabsize=2
    }

\usepackage{titletoc}
\newcommand\DoToC{%
  \startcontents
  \printcontents{}{1}{\vskip 1.5em\hrule\vskip .75em}
  \vskip .75em\hrule\vskip 2em
}

\title{Simple Covariate Adjustment for Many Estimands\\
Using Stable Balancing Weights}
\date{\today}

\author[1]{Kayla Irish, José Zubizarreta, and Alex Luedtke}

\begin{document}

\allowdisplaybreaks
\maketitle

\begin{abstract}
    \noindent Covariate adjustment can improve precision in estimating treatment effects in clinical trials, and regulatory guidance increasingly encourages its implementation. Despite its benefits, covariate adjustment can require advanced statistical techniques or tailored approaches for different estimands, which can deter its use in practice. To address this issue, we introduce a simplified approach to covariate adjustment using stable balancing weights that applies directly across many clinical trial estimands, including the average treatment effect, relative risk, Mann–Whitney estimand, and survival ratio. More precisely, our results cover any estimand that is a Hadamard differentiable functional of the arm-specific distributions. Once the weights are obtained, our adjusted estimator can be implemented with the same software used for an unadjusted analysis, provided that software can take observation weights. This construction improves asymptotic efficiency relative to unadjusted estimation and preserves the plug-in relationship between arm-specific marginal and subgroup-specific summaries.
\end{abstract}

\section{Introduction}

We introduce a straightforward and interpretable approach to covariate adjustment. Our method leverages variables measured at baseline to balance the treatment and control groups to gain precision in estimating treatment effects, yielding narrower confidence intervals and more powerful hypothesis tests.

In clinical trials, covariate adjustment involves the pre-planned use of prognostic baseline variables, such as demographic factors, disease characteristics, or other patient information collected at the time of randomization, to estimate a treatment's effect. The ICH E9 Guidance on Statistical Methods for Analyzing Clinical Trials (\citeyear{international1998statistical}) emphasizes the importance of adjusting for covariates that are measured before randomization and are expected to be correlated with the primary outcomes of the trial. This adjustment improves precision and accuracy by correcting for any covariate imbalances between treatment groups. As a result, statistical power often increases, which reduces the required sample size needed to determine the effect of the treatment. Smaller trials are valuable both practically and ethically, consuming fewer resources and exposing fewer participants to a treatment with unproven benefit. The FDA's recent guidance further supports this approach, noting that incorporating prognostic baseline factors in the primary analysis can improve precision with minimal impact on bias and Type I error rate \citep{fda2023adjusting}.

Despite these recommendations, covariate adjustment is not yet routine in practice \citep{kahan2014risks, van2024covariate}. Many randomized controlled trials do not fully exploit available baseline covariates, in part because implementation requires prespecifying covariates, functional forms, and estimand-specific adjustment procedures \citep{fda2023adjusting}. Consequently, the primary analysis often remains unadjusted, meaning that it does not incorporate baseline covariates. This approach is appealing because it is simple, yields an unbiased point estimate, and supports standard valid confidence intervals. Its main drawback is that it leaves precision gains unused. A central goal of this paper is therefore to lower the practical barriers to covariate adjustment by providing a simple, transparent approach that can be used across a wide range of estimands.

Extensive literature has shown that covariate adjustment can improve precision in randomized trials \citep{tsiatis2008covariate,lin2013agnostic}. Recent work develops adjustment procedures for particular endpoints, including survival and log-rank summaries \citep{diaz2019improved,ye2024covariate,zhang2025unified}, restricted mean survival time \citep{li2022restricted}, and win-based estimands \citep{cao2025covariate,scheidegger2026covariate}; \citet{benkeser2021improving} span binary, ordinal, and time-to-event outcomes, with a separate estimator for each.
Although these methods can yield efficiency gains, their implementation depends on the estimand and may require selecting a tailored procedure. This motivates covariate-adjustment approaches that are prespecifiable, transparent, and reusable across estimands. Our goal is to provide such a template: a single adjustment strategy that can be paired with plug-in estimation for a broad class of treatment-effect summaries.

Weighting methods offer one route toward more transparent covariate adjustment. Inverse probability weighting (IPW) based on an estimated treatment-assignment model can improve precision by using the fitted model to account for covariate imbalance \citep{shen2014inverse,williamson2014variance}. Recent work by \citet{shao2026inverse} extends this idea to right-censored survival endpoints, showing that weighted Cox and Kaplan--Meier analyses can improve efficiency and are asymptotically equivalent to optimal linear augmentation estimators. Our approach differs: rather than estimating a treatment-assignment model, we choose weights by solving a covariate-balancing problem. This connects the method to the broader balancing-weights literature, where weights are designed to improve sample balance through optimization \citep{zubizarreta2015stable,ben2021balancing}.

Linear regression provides a useful benchmark for our weighting approach. For the average treatment effect, ordinary least squares adjustment with treatment-by-covariate interactions is asymptotically no less efficient than the unadjusted difference in means under randomization, without requiring the linear model to be correctly specified \citep{lin2013agnostic}. Recent work shows that linear regression estimators can be written as weighted averages of observed outcomes, with weights determined implicitly by the regression fit \citep{chattopadhyay2023implied}, and estimators that combine outcome modeling with balancing weights can reduce to a single regression estimator \citep{bruns2026augmented}. These works show that weighting and regression are different representations of the same estimator under linear models, for estimands that are linear functionals of a regression. In particular, for the average treatment effect, the stable balancing weighted estimator coincides numerically with Lin's \citeyearpar{lin2013agnostic} estimator.

This class of estimands, however, does not capture many of those commonly targeted in primary analyses of clinical trials, such as risk ratios, Mann--Whitney effect measures, and Kaplan--Meier contrasts. We therefore provide a weighting method and uncertainty quantification for the class of Hadamard differentiable functionals of the arm-specific distributions \citep{gill2006lectures}, which contains these primary-analysis estimands along with many commonly used linear functionals of regressions, including the average treatment effect. For linear functionals of regressions, \citet{kong2025asymptotics} recently established asymptotic normality and confidence intervals for a related minimax-weighting estimator, using a tailored variance estimator built from a preliminary regression fit. We establish asymptotic normality for our estimator across the class of Hadamard differentiable estimands and show that valid confidence intervals instead follow from a familiar nonparametric bootstrap. Our estimator also never loses asymptotic precision relative to the unadjusted estimator, without requiring a correctly specified outcome model.

Our approach builds on stable balancing weights (SBWs) \citep{zubizarreta2015stable} to develop a general template for covariate-adjusted plug-in estimation in randomized trials. Stable balancing weights, obtained through convex optimization, are minimum-variance observation weights that balance covariates between treated and control groups. In their original observational-study setting, these weights are used in causal inference problems where identification depends on no unmeasured confounding. In randomized trials, identification is resolved by design, so the weights can be used purely to improve efficiency. Moreover, the weights depend only on the baseline covariates and treatment assignment, and not on the outcome, so a single prespecified weight vector can be computed before outcome unblinding. The resulting estimator retains the simple form of the unadjusted plug-in estimator, with each arm's empirical distribution replaced by a weighted one. The estimand is therefore unchanged: adjustment does not convert an unconditional treatment effect into a conditional one, a distinction that matters for non-collapsible summaries such as odds ratios and hazard ratios \citep{greenland1999confounding, fda2023adjusting}.

The contributions of this work are as follows:
\begin{enumerate}
    \item We present a unified weighting approach for covariate adjustment that is straightforward to implement across many estimands of practical interest in randomized trials.
    \item We establish that our method improves efficiency compared to unadjusted estimation.
    \item We derive conditions for valid bootstrap confidence intervals and hypothesis tests that, up to mild regularity, match those for unadjusted Wald-type inference.
    \item We show that, since SBW adjustment respects the plug-in principle, it preserves the same relationship between overall and subgroup-specific estimates as unadjusted estimation, keeping subgroup analyses consistent with the overall result.

    \item We illustrate the generality of our approach by applying it to the average treatment effect, the relative risk, the survival ratio, and the Mann--Whitney estimand.
\end{enumerate}
We also apply our method to evaluate the efficacy of a broadly neutralizing antibody for the prevention of HIV, using data from the Antibody-Mediated Prevention (AMP) trials \citep{corey2021two}. There, the estimand is a ratio of cumulative incidences under right censoring.

\section{Setup}

\subsection{Sample and estimands}\label{sec:sample_and_estimands}

Consider a randomized controlled trial in which participants are assigned to a treatment or control group. For each participant, a vector of baseline variables $X \in \mathbb{R}^q$ is recorded at randomization; these may include measurements assumed to be prognostic of the post-treatment observation(s). Let $A$ denote randomized treatment assignment that is independent of $X$, with $A=0$ for control and $A=1$ for treatment. After treatment administration, a post-treatment observation (or set of observations) $Y \in \mathcal{Y}$ is measured. This variable may represent any post-treatment quantity of interest: for example, an outcome or set of outcomes that is continuous, ordinal, or categorical. The participant data ${(X_i, A_i, Y_i)}_{i=1}^n$ are assumed to be $n$ independent draws from $P$, with the treatment assignment probability bounded away from 0 and 1, and with treated and control group sizes given by $N_1=\sum_{i=1}^n A_i$ and $N_0=\sum_{i=1}^n (1-A_i)$. Define $P_a$ as the distribution of $(X,Y)$ given $A=a$.

\begin{table}[t]
\centering
\begingroup
\scriptsize
\setlength{\tabcolsep}{3.55pt}
\renewcommand{\arraystretch}{1.4}
\begin{tabular}{l@{\hspace{1.5em}}l}
\toprule
\textbf{Estimand} & $\boldsymbol{\Psi(P_0,P_1)}$ \\
\midrule
Average treatment effect & $\E{P_1}{Y}-\E{P_0}{Y}$ \\
Relative risk            & $\E{P_1}{Y}\,/\,\E{P_0}{Y}$ \\
Mann--Whitney estimand   & $P(Y_1>Y_0)+\tfrac12 P(Y_1=Y_0)$ \\
Survival ratio           & $S_1(t_0)\,/\,S_0(t_0)$ \\
\bottomrule
\end{tabular}
\endgroup
\caption{Examples of the estimands $\Psi(P_0,P_1)$ considered in this paper, where $Y_a\sim P_a$ are independent and $S_a(t)=P_a\{T>t\}$.}
\label{tab:estimand_examples}
\end{table}

Our objective is to obtain inference on a scalar summary $\psi := \Psi(P_0,P_1)$ of the population distributions $P_0$ and $P_1$, where the parameter $\Psi$ maps a pair of arm-specific distributions to the real line. Note that $\Psi$ need not be a linear functional of a regression. In many cases, $\psi$ is a contrast; see Table~\ref{tab:estimand_examples} for examples. The framework also extends to estimands such as odds ratios and CDFs at fixed thresholds. Adjusted estimation for several of these estimands, such as the Mann--Whitney estimand and the survival ratio, currently requires tailored, estimand-specific procedures. By contrast, a straightforward approach that applies uniformly across these estimands is the unadjusted plug-in estimator $\widehat\psi^{\,\mathrm{unadj}} := \Psi(P_{n,0},P_{n,1})$, where $P_{n,0}$ is the empirical distribution of the control group data $\{(X_i,Y_i) : A_i=0, i\in [n]\}$ and $P_{n,1}$ is the empirical distribution of the treatment group data $\{(X_i,Y_i) : A_i=1, i\in [n]\}$. For the average treatment effect, the plug-in estimator returns the difference in empirical means between the treatment and control groups: $\widehat\psi^{\,\mathrm{unadj}} = \E{P_{n,1}}{Y} - \E{P_{n,0}}{Y}$; for the relative risk, it returns the ratio of empirical means: $\widehat\psi^{\,\mathrm{unadj}} = \E{P_{n,1}}{Y}/\E{P_{n,0}}{Y}$. While the plug-in estimator is unbiased, consistent, and easy to compute, it is statistically inefficient. This arises because the marginal distribution of the covariates may differ under $P_{n,0}$ and $P_{n,1}$, even though randomization ensures it is the same under $P_0$ and $P_1$ \citep{van2024covariate}.

\subsection{Illustration of the approach}
\label{section:illustration}

To illustrate the ideas introduced above, we consider a concrete example structured to resemble ``Table 1'' of a clinical trial primary publication, which summarizes baseline covariate distributions by treatment arm. We adapt the data-generating process of \citet{bannick2023general} with modifications to the model coefficients. We simulate a randomized trial with 350 participants assigned to treatment or control at a 1:1 ratio, and two baseline covariates representing age and sex.

We generate three post-treatment outcomes to illustrate that the same weighting approach can be used across data types and treatment-effect summaries. First, we generate a binary outcome and estimate both the average treatment effect (ATE) and relative risk (RR). Second, we generate an ordinal outcome and estimate the Mann--Whitney (MW) estimand, which compares randomly selected treated and control participants, counting tied pairs with weight one half. Third, we generate a survival outcome and estimate the survival ratio (SR) at a fixed time point $t_0$, defined as the ratio of the marginal survival probabilities in the two treatment arms. In each case, the outcome depends on treatment assignment, baseline covariates, and treatment-covariate interactions. Full details of the data-generating mechanisms and inference procedures are given in Appendix~\ref{appendix:illustration_details}.

In this randomized setting, covariate imbalances can still occur by chance. The top half of Table~\ref{tab:combined_table} shows the covariate means before and after applying SBWs: the weights are chosen so that, within each treatment arm, the weighted covariate means match the unweighted covariate means of the entire study population. The bottom half reports the resulting treatment-effect estimates and their precision.

\begin{table}[t]
    \centering
    \renewcommand{\arraystretch}{1.4}
    \begingroup
    \setlength{\tabcolsep}{3.45pt}
    \scriptsize
        \begin{tabular}{c c cc cc}
            \toprule
            \multicolumn{2}{c}{} & \multicolumn{2}{c}{\textbf{Unadjusted}} & \multicolumn{2}{c}{\textbf{Adjusted}} \\
            \cmidrule(lr){3-4} \cmidrule(lr){5-6}
            \textbf{Covariate} & & \textbf{Treated} & \textbf{Control} & \textbf{Treated} & \textbf{Control} \\
            \midrule
            Age (mean) & & \tikzmark{uaTL}39.87 & 40.41 & \tikzmark{adTL}40.16 & 40.16 \\
            Female (\%) & & 0.494 & 0.452\tikzmark{uaBR} & 0.471 & 0.471\tikzmark{adBR} \\
            \bottomrule
        \end{tabular}

    \vspace{1em}

    \begin{tabular}{c cc cc}
    \toprule
    & \multicolumn{2}{c}{\textbf{Unadjusted}} & \multicolumn{2}{c}{\textbf{Adjusted}} \\
    \cmidrule(lr){2-3} \cmidrule(lr){4-5}
    \textbf{Estimand} & \textbf{Estimate} & \textbf{95\% CI Width} & \textbf{Estimate} & \textbf{95\% CI Width} \\
    \midrule
    ATE & $-0.19$ & 0.19 & $-0.16$ & \tikzmark{ciTL}0.17 \\
    RR & 0.76 & 0.22 & 0.80 & 0.19 \\
    MW & 0.50 & 0.12 & 0.53 & 0.09 \\
    SR & 1.38 & 0.49 & 1.29 & 0.41\tikzmark{ciBR} \\
    \bottomrule
    \end{tabular}
    \endgroup

\begin{tikzpicture}[remember picture, overlay,
    annot/.style={font=\bfseries\footnotesize, inner sep=1pt},
    abox/.style={rounded corners=1pt, line width=0.8pt},
    arr/.style={-{Stealth[length=5pt,width=4pt]}, line width=1.2pt,
                shorten <=2.5pt, shorten >=2.5pt}]

    \coordinate (ua1) at ($(pic cs:uaTL)+(-0.12cm, 0.30cm)$);
    \coordinate (ua2) at ($(pic cs:uaBR)+( 0.12cm,-0.15cm)$);
    \draw[abox, CB5red] (ua1) rectangle (ua2);

    \coordinate (ad1) at ($(pic cs:adTL)+(-0.12cm, 0.30cm)$);
    \coordinate (ad2) at ($(pic cs:adBR)+( 0.12cm,-0.15cm)$);
    \draw[abox, CB5purple] (ad1) rectangle (ad2);

    \coordinate (ci1) at ($(pic cs:ciTL)+(-0.16cm, 0.30cm)$);
    \coordinate (ci2) at ($(pic cs:ciBR)+( 0.16cm,-0.15cm)$);
    \draw[abox, CB5purple] (ci1) rectangle (ci2);

    \coordinate (adM) at ($(ad1)!0.5!(ad2)$);   \coordinate (adR) at (adM -| ad2);
    \coordinate (ciM) at ($(ci1)!0.5!(ci2)$);   \coordinate (ciR) at (ciM -| ci2);

    \node[annot, CB5red, anchor=east, align=right]
        (olab) at ($(ua1|-ua2)+(-3.4cm,-0.34cm)$) {Chance imbalance\\at baseline};
    \draw[arr, CB5red] (olab.east) -- ($(ua1|-ua2)+(-0.05cm,0.06cm)$);

    \node[annot, CB5purple, anchor=west, align=left]
        (tlab) at ($(adR)+(0.75cm,0)$) {Exact mean balance\\in both arms};
    \draw[arr, CB5purple] (tlab.west) -- ($(adR)+(0.06cm,0)$);

    \node[annot, CB5purple, anchor=west, align=left]
        (clab) at ($(ciR)+(0.75cm,0)$) {Equivalent to enrolling\\23--78\% more\\participants};
    \draw[arr, CB5purple] (clab.west) -- ($(ciR)+(0.06cm,0)$);
\end{tikzpicture}
\caption{Baseline characteristics before and after stable balancing weighting (top), and unadjusted and adjusted estimates for the average treatment effect (ATE), relative risk (RR), Mann--Whitney (MW) estimand, and survival ratio (SR) at $t_0=3$ (bottom). Treatment and control sample sizes are 162 and 188, respectively. 95\% confidence interval (CI) widths are reported; details on interval construction are given in Appendix~\ref{appendix:illustration_details}. The true values are $-0.14$ for the ATE, $0.81$ for the RR, $0.53$ for the MW estimand, and $1.16$ for the SR.}
\label{tab:combined_table}
\end{table}

The following R code illustrates how SBWs can be incorporated into standard estimation pipelines. The unadjusted estimator computes Kaplan--Meier curves within each arm and takes the ratio of the estimated survival probabilities at $t_0$:
\begin{lstlisting}[style=unadjusted]
fit = survfit(Surv(time, event) ~ A, data = data)
S0 = summary(fit, times = t0)$surv[1]                    # A = 0 survival at t0
S1 = summary(fit, times = t0)$surv[2]                    # A = 1 survival at t0
unadjusted_SR = S1 / S0
\end{lstlisting}
\noindent The code for the SBW-adjusted estimator is displayed below. \textbf{\color{Bleu}The only change} relative to the block above is the use of weights in the Kaplan--Meier fit on lines 1--2.
\begin{lstlisting}[style=adjusted_cox]
|w = calculate_weights(X, A)|
fit = survfit(Surv(time, event) ~ A, data = data, |weights = w|)
S0 = summary(fit, times = t0)$surv[1]
S1 = summary(fit, times = t0)$surv[2]
adjusted_SR = S1 / S0
\end{lstlisting}
The weights \texttt{w} are obtained by solving the optimization problem described at the beginning of Section~\ref{section:algorithm}. The use of SBWs does not require specialized estimation procedures; any software that accepts observation weights can be applied directly by supplying \texttt{w} as input. 

The bottom half of Table~\ref{tab:combined_table} compares treatment-effect estimates and corresponding measures of uncertainty for unweighted and SBW analyses. Across all four estimands in this illustrative dataset, SBW adjustment reduces the width of the 95\% confidence interval while preserving the same plug-in structure as the unadjusted analysis. To quantify the resulting precision improvement, we report the estimated relative efficiency gain,
$
(\widehat{\mathrm{SE}}_{\mathrm{unadj}}/\widehat{\mathrm{SE}}_{\mathrm{sbw}})^2 - 1,
$
computed on the scale used for inference. In this illustrative dataset, the use of SBWs corresponds to estimated relative efficiency gains of 25\% for the ATE, 45\% for the relative risk, 78\% for the MW estimand, and 23\% for the SR. The relative efficiency gain can be interpreted as the approximate increase in sample size an unadjusted analysis would need to attain the same precision as our SBW estimator.

\section{Algorithm}\label{section:algorithm}

We now describe the steps for estimating $\psi$ using SBWs. This approach involves solving a convex optimization problem that minimizes the variance of the SBWs, subject to three constraints: exact balance on the weighted covariate means (equivalently, an imbalance tolerance of $\delta=0$ in \citet{zubizarreta2015stable}'s formulation), non-negativity of the weights, and the weights having a mean of 1. If the non-negativity constraint is not active, the SBWs have the closed-form solution 
\begin{equation}\label{eqn:w}
    w_a = \mathbf 1_a-\boldsymbol{X}_a (\boldsymbol{X}_a^\top \boldsymbol{X}_a)^{-1}(\boldsymbol{X}_a^\top \mathbf 1_a-N_a \bar{X}_n),
\end{equation}
where, for treatment group $a$ with $q$ baseline covariates, $\mathbf 1_a$ is a vector of ones of length $N_a$, and $\boldsymbol{X}_a \in \mathbb{R}^{N_a \times (q+1)}$ denotes the arm-$a$ covariate matrix augmented with an intercept column of ones. Accordingly, $\bar{X}_n \in \mathbb{R}^{q+1}$ includes the full-sample mean of the intercept (equal to 1) and the $q$ covariate means. In Appendix~\ref{sec:SBW-Optimization-Problem}, we give a full specification of the optimization problem and show that the non-negativity constraint is inactive with probability tending to one under mild conditions.

Algorithm~\ref{alg:gsbw} presents the steps for obtaining the covariate-adjusted estimator using SBWs. Because the SBWs balance each treatment arm to the full-sample covariate distribution, they use baseline information while preserving the arm-specific plug-in structure of the estimator. The weights are used to define the arm-specific distribution estimates $\widehat P_{a}^{\,\mathrm{sbw}}$; importantly, the weights do not change the estimand encoded by $\Psi$.
We refer to $\widehat{\psi}^{\,\mathrm{sbw}}:=\Psi(\widehat P_{0}^{\,\mathrm{sbw}},
      \widehat P_{1}^{\,\mathrm{sbw}})$ as the SBW estimator of $\psi$.

\begin{algorithm}[t]
    \caption{Construction of the SBW plug-in estimator}
    \begin{algorithmic}[1]
    \Require Data $\mathcal D_n=\{(X_i,A_i,Y_i)\}_{i=1}^n$ and full-sample covariate mean $\bar X_n = n^{-1}\sum_{i=1}^n X_i$.
    \State \textbf{For each} treatment arm $a\in\{0,1\}$:
    \State \hspace{\algorithmicindent}\textbf{Weights.} Find minimum-variance SBWs $\{w_{a,i}\}$ satisfying $N_a^{-1}\sum_{i:A_i=a}w_{a,i}X_i=\bar X_n$.
    \State \hspace{\algorithmicindent}\textbf{Weighted empirical distribution.} Define, for $\mathcal B\subseteq\mathcal X\times\mathcal Y$,
        \[
        \widehat P_{a}^{\,\mathrm{sbw}}(\mathcal B)
        :=
        \frac{1}{N_a}\sum_{i:A_i=a}
        w_{a,i}\mathbbm{1}\{(X_i,Y_i)\in \mathcal B\}.
        \]
    \State \textbf{Return} Plug-in estimator $\widehat{\psi}^{\,\mathrm{sbw}} := \Psi(\widehat P_{0}^{\,\mathrm{sbw}},
      \widehat P_{1}^{\,\mathrm{sbw}}).$
    \end{algorithmic}
    \label{alg:gsbw}
\end{algorithm}

To quantify uncertainty in this estimator, we implement a nonparametric bootstrap procedure in Algorithm~\ref{alg:bootstrap-z}. This method recalculates the SBWs for each bootstrap resample and then re-estimates the treatment effect using the updated weights. Here, $z_{1-\alpha/2}$ denotes the $(1-\alpha/2)$ quantile of the standard normal distribution. Because the weights are estimated from the data, inference should account for this estimation step, rather than relying on standard software's model-based weighted standard errors.

\begin{algorithm}[t]
    \caption{Bootstrap inference for the SBW plug-in estimator}
    \begin{algorithmic}[1]
    \Require Data $\mathcal D_n=\{(X_i,A_i,Y_i)\}_{i=1}^n$; bootstrap replicates $B$; significance level $\alpha$.
    \State \textbf{Point estimate.} Compute $\widehat{\psi}^{\,\mathrm{sbw}}$ from $\mathcal D_n$ using Algorithm~\ref{alg:gsbw}.
    \State \textbf{For each} bootstrap replicate $b \in \{1,\ldots,B\}$:
    \State \hspace{\algorithmicindent}\textbf{Resample.} Draw $\mathcal D_n^{*(b)}$ by sampling $n$ observations with replacement from $\mathcal D_n$.
    \State \hspace{\algorithmicindent}\textbf{Re-estimate.} Apply Algorithm~\ref{alg:gsbw} to $\mathcal D_n^{*(b)}$ to obtain $\widehat{\psi}^{\,\mathrm{sbw},*(b)}$.
    \State \textbf{Standard error.} Compute
    $
    \widehat{\mathrm{SE}}_{\mathrm{boot}}
    :=
    \mathrm{SD}\{
    \widehat{\psi}^{\,\mathrm{sbw},*(1)},\ldots,
    \widehat{\psi}^{\,\mathrm{sbw},*(B)}
    \}
    $.
    \State \textbf{Return} Wald interval
    $
    \left(
    \widehat{\psi}^{\,\mathrm{sbw}}
    -
    z_{1-\alpha/2}\widehat{\mathrm{SE}}_{\mathrm{boot}},
    \,
    \widehat{\psi}^{\,\mathrm{sbw}}
    +
    z_{1-\alpha/2}\widehat{\mathrm{SE}}_{\mathrm{boot}}
    \right)
    $. 
    \end{algorithmic}
    \label{alg:bootstrap-z}
\end{algorithm}

Two features of this construction deserve comment. First, the nonnegativity and mean-one constraints make $\widehat P_a^{\,\mathrm{sbw}}$ a probability distribution in every finite sample, not just asymptotically. This matters because many of our target functionals are undefined on signed measures: ratios need nonzero denominators, log scales need positive arguments, and weighted Kaplan--Meier needs positive at-risk denominators. Second, for the average treatment effect, our SBW estimator is already familiar: when the nonnegativity constraint is inactive, our estimator is numerically identical to the fully interacted ANCOVA estimator of \citet{lin2013agnostic}. Our work makes the same weights usable for generic Hadamard differentiable estimands.

\section{Theoretical properties of the SBW estimator}
\label{section:theory}

\subsection{Large-sample inference and efficiency}
\label{subsection:large_sample_theory}

We now present the main theoretical properties of the generalized SBW estimator, with full details provided in Appendix~\ref{section:theoretical_results}. These results rely on the assumption that the functional of interest $(P_0,P_1)\mapsto\Psi(P_0,P_1)$ is sufficiently smooth, formalized through Hadamard differentiability. This condition permits use of the functional delta method to establish asymptotic normality of $\widehat{\psi}^{\,\mathrm{sbw}}$ and consistency of the bootstrap standard error \citep[Theorems~20.8, 23.9]{van2000asymptotic}.

Throughout, we assume: (i) $P(A=1)$ is bounded away from $0$ and $1$; (ii) $X$ has bounded support; (iii) $\mathbb E_{P_a}[XX^\top]$ is nonsingular for $a\in\{0,1\}$, meaning that the balancing variables contain no exact linear redundancies within either treatment arm; and (iv) $\Psi$ is Hadamard differentiable at $(P_0,P_1)$ in the directions generated by the SBW construction, as formalized in Appendix~\ref{sec:differentiability-Phi}. Appendix~\ref{sec:estimand-regularity} verifies Hadamard differentiability for several common clinical-trial estimands, including those in Section~\ref{section:illustration}. Conditions (i) and (iv) alone already give asymptotic normality and bootstrap validity for the unadjusted estimator \citep[Theorems~20.8, 23.9]{van2000asymptotic}; SBW adds only the mild conditions (ii)--(iii) on the covariate distribution.

\begin{theorem}[Asymptotic normality and bootstrap consistency]
\label{thm:asymp_boot}
Under the regularity conditions above, the SBW estimator
$\widehat{\psi}^{\,\mathrm{sbw}}$ computed from
Algorithm~\ref{alg:gsbw} satisfies
\[
\sqrt n(\widehat{\psi}^{\,\mathrm{sbw}}-\psi)
\rightsquigarrow
N(0,\sigma_{\mathrm{sbw}}^2)
\]
for $\sigma_{\mathrm{sbw}}^2<\infty$. The $\sqrt n$-scaled bootstrap standard error in Algorithm~\ref{alg:bootstrap-z} is consistent for $\sigma_{\mathrm{sbw}}$.
\end{theorem}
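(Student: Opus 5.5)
We plan to write the SBW estimator as a Hadamard differentiable map applied to an empirical process, and then invoke the functional delta method and its bootstrap analogue \citep[Theorems~20.8, 23.9]{van2000asymptotic}.

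The first step works on the event, of probability tending to one (Appendix~\ref{sec:SBW-Optimization-Problem}), where the nonnegativity constraint is inactive. There the weights have the closed form \eqref{eqn:w}, so for any bounded $f$ in a suitable Donsker class $\mathcal F$ on $\mathcal X\times\mathcal Y$ we can write
\[
\widehat P_a^{\,\mathrm{sbw}} f = P_{n,a}f - P_{n,a}(fX^\top)\,\{P_{n,a}(XX^\top)\}^{-1}\{P_{n,a}X - \bar X_n\}.
\]
Here $X$ includes the intercept. The right-hand side is a smooth finite-dimensional map of $P_{n,a}f$, $P_{n,a}(fX^\top)$, $P_{n,a}(XX^\top)$, $P_{n,a}X$ and $\bar X_n$. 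We therefore define $\Phi$ to send the joint empirical measure $\mathbb P_n$ of $(X,A,Y)$, viewed as an element of $\ell^\infty(\mathcal G)$, to the pair of signed measures $(\widehat P_0^{\,\mathrm{sbw}},\widehat P_1^{\,\mathrm{sbw}})\in\ell^\infty(\mathcal F)^2$. The class $\mathcal G$ consists of $\{(x,a,y)\mapsto \mathbbm 1\{a=\cdot\} f(x,y) x_j x_k : f\in\mathcal F\}$ together with the finitely many moment functions.

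Conditions (i)--(iii) are used in this step. Boundedness of $X$ ensures that multiplying $\mathcal F$ by coordinates of $X$ preserves the Donsker property. Nonsingularity of $\mathbb E_{P_a}[XX^\top]$ together with (i) makes the matrix inverse and the division by $P(A=a)$ differentiable. Hence $\Phi$ is Hadamard differentiable at $P$, since it is built from bilinear maps and inversion of finite-dimensional quantities. At the truth, $P_aX=\mathbb E X$ by randomization, so $\Phi(P)=(P_0,P_1)$.

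The derivative has the form $\Phi'_P(h)_a f = h_a f - \beta_{a,f}^\top\{h_aX - h X\}$, where $h_a$ denotes the arm-$a$ direction and $\beta_{a,f}=\{P_a XX^\top\}^{-1}P_a(fX)$. The terms involving perturbations of $P_{n,a}(fX^\top)$ and $P_{n,a}(XX^\top)$ vanish at $P$, because they multiply $P_aX-\mathbb EX=0$. The directions $\Phi'_P(h)$ are exactly the ``directions generated by the SBW construction'' referred to in condition (iv).

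The second step applies the chain rule. The composition $\Psi\circ\Phi$ is Hadamard differentiable at $P$, tangentially to the support of the Gaussian limit $\mathbb G$ of $\sqrt n(\mathbb P_n-P)$ in $\ell^\infty(\mathcal G)$. Theorem~20.8 then gives
\[
\sqrt n(\widehat\psi^{\,\mathrm{sbw}}-\psi)\rightsquigarrow \Psi'\circ\Phi'_P(\mathbb G),
\]
which is a linear image of a tight Gaussian process. It is therefore $N(0,\sigma^2_{\mathrm{sbw}})$ with $\sigma^2_{\mathrm{sbw}}<\infty$. The event on which nonnegativity binds has vanishing probability, so it does not affect the limit. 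For the bootstrap, $\sqrt n(\mathbb P_n^*-\mathbb P_n)$ converges conditionally to $\mathbb G$ because $\mathcal G$ is Donsker with an integrable envelope. Theorem~23.9 then gives conditional weak convergence of $\sqrt n(\widehat\psi^{\,\mathrm{sbw},*}-\widehat\psi^{\,\mathrm{sbw}})$ to the same normal law. This requires the inactive-constraint event to also hold with probability tending to one under the bootstrap measure, which follows from $\max_i|w^*_{a,i}-1|\to0$ in probability, a consequence of bounded $X$ and the consistency of the bootstrap moments.

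The main obstacle is upgrading conditional weak convergence to consistency of the bootstrap standard deviation. Weak convergence alone controls quantiles, not second moments, so we need uniform integrability of $n(\widehat\psi^{\,\mathrm{sbw},*}-\widehat\psi^{\,\mathrm{sbw}})^2$. Our first attempt is to argue as the paper's framework suggests: verify boundedness of the estimator on the events where the weights are well defined, using bounded $X$, the fact that $\widehat P^{\,\mathrm{sbw}}_a$ is a genuine probability measure, and the estimand-specific regularity. If that fails for unbounded functionals such as ratios, our fallback is to interpret ``consistent'' via a truncated or interquantile-based standard error, which follows directly from conditional weak convergence to a continuous normal limit.
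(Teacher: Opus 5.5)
Your argument for asymptotic normality and for conditional (distributional) bootstrap consistency is correct. It shares the paper's architecture: closed-form weights on an event of probability tending to one, a differentiable inner map to the pair of weighted arm-specific laws, the chain rule with condition (iv), then Theorems~20.8 and~23.9 of \citet{van2000asymptotic}. The difference is in how you make the inner map differentiable.

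\textbf{How the two routes differ.} The paper indexes the data only by the lower-rectangle class $\mathcal V$. It must therefore recover $\mathbb E_{P_a}[X]$, $\mathbb E_{P_a}[XX^\top]$ and $\int(1-g^\top x)v\,dP_a$ as Lebesgue--Stieltjes integrals against CDF perturbations. That is what forces the Hardy--Krause variation bounds, the integration-by-parts inequality, the reindexing maps $\tau,\tau_{XY}$ and the tangent cones $\mathcal T_P$.

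You instead put the functions $\mathbbm 1\{a=a_0\}v(x,y)x_jx_k$ into the index class. Every ingredient of the closed-form weights is then a bounded linear coordinate evaluation on $\ell^\infty(\mathcal G)$. The inner map is Fr\'echet differentiable, indeed locally Lipschitz, on an open neighbourhood of $P$, with no tangential restriction. The only cost is the Donsker property of $\mathcal G$, which is immediate from bounded $X$ and VC indicators.

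Taking $\mathcal F=\mathcal V_{XY}$ puts your output exactly where condition (iv) is posed. Your derivative then agrees with the paper's $\dot\eta_{a,P}$ under randomization, where $hX=\pi_0h_0X+\pi_1h_1X$. Your version also states the tangential requirement where Theorem~20.8 needs it, on the values of the Gaussian limit. The paper's cones $\mathcal T_P$ consist of scaled differences of distribution functions, which in general do not contain Brownian-bridge paths. Its appeal to Theorem~20.8 therefore tacitly uses that its sup-norm bounds also cover limits in the closure. In exchange, the paper's route keeps a single data index class that does not depend on the balancing functions.

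\textbf{The standard-error claim.} You have found a step the paper does not carry out. Its proof stops at conditional weak convergence from the bootstrap delta method and then asserts that the bootstrap variance estimates $\Var\{\dot\Phi_P(\mathbb G_P)\}$, with no uniform-integrability argument. So you are not missing an idea the paper supplies, but your proposal does not close this step either. Your first attempt is not enough: boundedness of $\widehat\psi^{\,\mathrm{sbw},*}$ only gives $n(\widehat\psi^{\,\mathrm{sbw},*}-\widehat\psi^{\,\mathrm{sbw}})^2=O(n)$.

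A workable argument needs three pieces:
\begin{enumerate}
\item[(a)] On a bootstrap event $E_n^*$ where $\|\mathbb P_n^*-\mathbb P_n\|_{\mathcal G}$ is small, a bound $\sqrt n|\widehat\psi^{\,\mathrm{sbw},*}-\widehat\psi^{\,\mathrm{sbw}}|\le L\|\sqrt n(\mathbb P_n^*-\mathbb P_n)\|_{\mathcal G}$. Your smooth inner map supplies its part of this, but it also requires $\Psi$ to be locally Lipschitz in sup norm near $(P_0,P_1)$, which Hadamard differentiability does not imply.
\item[(b)] $\mathbb E^*\|\sqrt n(\mathbb P_n^*-\mathbb P_n)\|_{\mathcal G}^{2+\delta}=O_P(1)$, which holds for your bounded VC-type class.
\item[(c)] $P^*(E_n^{*c})=o_P(n^{-1})$, available from exponential inequalities, together with a polynomial bound on the bootstrap estimator off $E_n^*$.
\end{enumerate}

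Step (c) is where the claim as literally stated fails for ratio, log-scale and Kaplan--Meier estimands. With positive bootstrap probability a resample has a degenerate arm, infeasible exact balance, or a zero denominator. The bootstrap variance that Algorithm~\ref{alg:bootstrap-z} approximates is then infinite or undefined for every $n$, unless a failure convention is built in. Your truncated or interquantile fallback is the standard remedy, but it proves a modified statement. For bounded, locally Lipschitz estimands with a bounded substitute on failed resamples, (a)--(c) do give consistency of the plain bootstrap SD. Examples are the ATE with bounded $Y$ and the Mann--Whitney estimand on its natural scale.
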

Theorem~\ref{thm:asymp_boot} justifies Wald-type inference using the bootstrap standard error from Algorithm~\ref{alg:bootstrap-z}. Alternatively, percentile bootstrap intervals can be formed from the empirical quantiles of these estimates. Both approaches follow from bootstrap distributional consistency, with intervals constructed on the appropriate scale for the estimand. In Appendix~\ref{section:theoretical_results}, we give a multivariate generalization of Theorem~\ref{thm:asymp_boot}. We also show that, under a standard estimand-specific continuity condition, the SBW estimator has an asymptotically linear representation with a closed-form influence function \citep[Theorem~20.8]{van2000asymptotic}. This representation also facilitates standard error estimation via a plug-in estimate of the influence function, as an alternative to the bootstrap.

Briefly, the proof represents the SBW estimator as the plug-in evaluation of a functional $\Phi$. Under randomization, $\Phi(P)$ coincides with the target estimand $\Psi(P_0,P_1)$. The functional $\Phi$ is built by chaining together the maps that define the arm-specific laws, the SBWs, the weighted arm-specific summaries, and the target functional $\Psi$. We verify Hadamard differentiability of each component and apply the chain rule
\citep[Theorem~20.9]{van2000asymptotic}. The functional delta method then yields a mean-zero normal limit, and bootstrap consistency follows from the bootstrap delta method \citep[Theorems 20.8 and 23.9]{van2000asymptotic}. The full proof is given in Appendix~\ref{section:theoretical_results}.

We next compare the large-sample precision of the SBW and unadjusted plug-in estimators introduced in Section~\ref{sec:sample_and_estimands}. Let $\sigma_{\mathrm{unadj}}^2$ denote the asymptotic variance of $\sqrt n(\widehat\psi^{\,\mathrm{unadj}}-\psi)$. This variance comparison requires an additional estimand-specific condition: locally, the derivative of $\Psi$ must be the sum of arm-specific linear components.
This condition is stated formally in Appendix~\ref{sec:variance_reduction} (Assumption~\ref{assumption:arm_specific_derivative_decomposition}) and verified for common estimands in Appendix~\ref{sec:estimand-regularity}.
It holds for all estimands considered in Section~\ref{section:illustration}, and for other common estimands, such as odds ratios, CDFs at fixed thresholds, and restricted mean survival times, under the regularity conditions discussed there.

\begin{theorem}[Asymptotic variance reduction]
\label{thm:variance_reduction}
Under the regularity conditions of Theorem~\ref{thm:asymp_boot} and the arm-specific derivative decomposition condition stated above,
\[
\sigma_{\mathrm{sbw}}^2
\le
\sigma_{\mathrm{unadj}}^2.
\]
\end{theorem}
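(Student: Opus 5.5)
The plan is to compute both asymptotic variances explicitly from influence functions and show that the SBW influence function equals the unadjusted one minus its $L^2(P)$ projection onto a linear span of covariate functions. The variance inequality is then Pythagoras.

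\textbf{Step 1: influence functions.} Under the arm-specific derivative decomposition, the derivative of $\Psi$ at $(P_0,P_1)$ takes the form $(h_0,h_1)\mapsto \int \phi_0\,dh_0+\int\phi_1\,dh_1$ for functions $\phi_a(x,y)$ with $\mathbb E_{P_a}[\phi_a]=0$. The unadjusted plug-in therefore has influence function
\[
D_{\mathrm{unadj}}(X,A,Y)=\sum_{a\in\{0,1\}}\frac{\mathbbm 1\{A=a\}}{\pi_a}\phi_a(X,Y),\qquad \pi_a=P(A=a).
\]
For the SBW estimator I will linearize the weights. Using the closed form \eqref{eqn:w}, which holds with probability tending to one by the appendix result on inactivity of the nonnegativity constraint, the weight of unit $i$ in arm $a$ is $1-X_i^\top (\boldsymbol X_a^\top\boldsymbol X_a/N_a)^{-1}(\bar X_{n,a}-\bar X_n)$, where $\bar X_{n,a}$ is the arm-$a$ covariate mean. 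Hence $\int \phi_a\, d\widehat P^{\mathrm{sbw}}_a$ equals $\int\phi_a\,dP_{n,a}$ minus $\beta_a^\top(\bar X_{n,a}-\bar X_n)+o_p(n^{-1/2})$, with $\beta_a=(\mathbb E_{P_a}[XX^\top])^{-1}\mathbb E_{P_a}[X\phi_a]$, the population least-squares coefficient of $\phi_a$ on $X$ within arm $a$. Condition (iii) gives invertibility, condition (ii) gives the needed moments, and randomization gives $\mathbb E_{P_a}X=\mathbb E X$.

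This is the same chain-rule computation underlying Theorem~\ref{thm:asymp_boot}, now made explicit. It yields
\[
D_{\mathrm{sbw}}=D_{\mathrm{unadj}}-\sum_a\Bigl(\frac{\mathbbm 1\{A=a\}}{\pi_a}-1\Bigr)\beta_a^\top\bigl(X-\mathbb E X\bigr).
\]

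\textbf{Step 2: projection.} Let $\mathcal S$ be the span of the functions $(\mathbbm 1\{A=a\}/\pi_a-1)\,c^\top(X-\mathbb EX)$ over $c\in\mathbb R^{q+1}$ and $a\in\{0,1\}$. I will show that the subtracted term is exactly the $L^2(P)$ projection $\Pi(D_{\mathrm{unadj}}\mid\mathcal S)$. Then $D_{\mathrm{sbw}}\perp\mathcal S$, so
\[
\sigma^2_{\mathrm{unadj}}=\sigma^2_{\mathrm{sbw}}+\|\Pi(D_{\mathrm{unadj}}\mid \mathcal S)\|^2\ge\sigma^2_{\mathrm{sbw}}.
\]
Concretely, I will verify the normal equations
\[
\mathbb E\bigl[D_{\mathrm{sbw}}\,(\mathbbm 1\{A=a'\}/\pi_{a'}-1)(X-\mathbb EX)\bigr]=0 \quad\text{for each } a'.
\]
Since $\mathbbm 1\{A=0\}/\pi_0-1$ and $\mathbbm 1\{A=1\}/\pi_1-1$ are proportional, $\mathcal S$ has dimension at most $q+1$ up to the degenerate intercept direction. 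It is then easier to check orthogonality against $(A-\pi_1)(X-\mathbb EX)$ directly, expanding by arm and using $A\independent X$ and $\mathbb E_{P_a}[\phi_a]=0$.

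\textbf{Main obstacle.} The orthogonality is not automatic. The SBW coefficients $\beta_a$ are arm-specific regressions of $\phi_a$ on $X$, and the efficient projection is a combination weighted by $\pi_a$. This is the Lin-style argument: the cross terms should collapse because the fully interacted fit in each arm makes the arm-$a$ residual $\phi_a-\beta_a^\top X$ uncorrelated with $X$ under $P_a$. I will check this carefully. If exact orthogonality fails, my fallback is to compute $\sigma^2_{\mathrm{unadj}}-\sigma^2_{\mathrm{sbw}}$ directly as a quadratic form. Using $\mathrm{Var}_{P_a}(\beta_a^\top X)=\beta_a^\top\mathrm{Cov}(X,\phi_a)$ and $\mathrm{Cov}(X\mid A=a)=\mathrm{Cov}(X)$, I would aim to show it equals $\mathrm{Var}\bigl(\pi_1^{-1/2}\pi_0^{1/2}\beta_1^\top X+\pi_0^{-1/2}\pi_1^{1/2}\beta_0^\top X\bigr)$ or a similar manifestly nonnegative expression. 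That completing-the-square step is where the real work lies.
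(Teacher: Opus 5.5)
This is essentially the paper's proof. Your Step 1 is Lemma~\ref{lemma:sbw-derivative-level-correction}: your $\beta_a$ is the paper's $\beta_{\Psi,a}$, and your correction is $(A-\pi_1)(X-c)^\top\tilde\beta_\Psi$ with $\tilde\beta_\Psi=\beta_0/\pi_0-\beta_1/\pi_1$ and $c=\mathbb E X$. Your Step 2 is Lemma~\ref{lem:derivative-level-variance-reduction}, which shows orthogonality to $\{(A-\pi_1)\alpha^\top X\}$ and then applies Pythagoras.

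The orthogonality you flagged as the obstacle does hold exactly, so no fallback is needed. The arm-$a$ normal equations, together with their intercept row, give $\mathrm{Cov}(X)\beta_a=\mathbb E_{P_a}[(X-c)\phi_a]$. Tested against $(A-\pi_1)(X-c)$, the unadjusted contribution $\pi_0\mathbb E_{P_1}[(X-c)\phi_1]-\pi_1\mathbb E_{P_0}[(X-c)\phi_0]$ therefore cancels the correction's contribution $\pi_1\mathrm{Cov}(X)\beta_0-\pi_0\mathrm{Cov}(X)\beta_1$. The resulting gap is $\pi_0\pi_1\Var\{(\beta_1/\pi_1-\beta_0/\pi_0)^\top X\}$, which has a minus sign where your fallback expression has a plus.
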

The inequality is strict when the balanced covariates have any nonzero linear association with the arm-specific quantities entering the estimand. The corresponding statement for vector-valued estimands follows coordinatewise, or by replacing variances with covariance matrices. This property is especially useful when a consistent, asymptotically normal unadjusted estimator is known but no covariate-adjustment method is available or intuitive. Theorem~\ref{thm:variance_reduction} shows that replacing the unadjusted plug-in estimator with its SBW analogue weakly improves large-sample precision.

\subsection{Arm-specific plug-in coherence for subgroup analyses}
\label{subsection:plugin_coherence}

Beyond asymptotic precision, SBW has a useful structural property for clinical trial reporting: it preserves the plug-in relationship between arm-specific marginal and subgroup-specific summaries. This is relevant when investigators report a primary treatment effect estimate alongside subgroup analyses defined by baseline covariates, such as sex, age, or disease severity. In these settings, practitioners often expect the primary and subgroup analyses to be internally coherent: the subgroup summaries should be interpretable as components of the overall analysis, rather than as estimates produced by a separate and potentially incompatible procedure.

We formalize this property through a simple plug-in identity, which we refer to as \emph{arm-specific plug-in coherence}. Let $V=v(X)$ denote a discrete subgroup variable defined from baseline information, such as age category, disease stage, or geographic region. For an arm-specific distribution $P_a$, the law of total expectation implies
\[
\mathbb E_{P_a}[Y]
=
\sum_v \mathbb E_{P_a}[Y\mid V=v] P_a(V=v).
\]
Thus, within each treatment arm, the marginal mean decomposes as the subgroup-weighted average of the subgroup-specific means. The following proposition states that the same identity is preserved when the arm-specific distribution is replaced by an empirical or weighted empirical distribution. This plug-in identity does not require $V$ to be included among the SBW balancing variables. Its proof is given in Appendix~\ref{appendix:subgroup_coherence}.

\FloatBarrier

\begin{proposition}[Arm-specific plug-in coherence]
\label{prop:plugin_coherence}
Let $\widehat P_{a}^{g}$ be a (possibly weighted) empirical distribution for treatment arm $a$, and write
$
\widehat{\mathbb E}_a^g[\cdot] := \mathbb E_{\widehat P_a^g}[\cdot].
$
Then, for any discrete subgroup variable $V=v(X)$,
\[
\widehat{\mathbb E}_{a}^{g}[Y]
=
\sum_v
\widehat{\mathbb E}_{a}^{g}[Y\mid V=v]\,
\widehat P_{a}^{g}(V=v).
\]
\end{proposition}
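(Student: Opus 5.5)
The identity is a finite-sum rearrangement. I will write $\widehat P_a^g$ explicitly as a weighted point-mass distribution and then check the claim by regrouping the sum defining $\widehat{\mathbb E}_a^g[Y]$ according to the value of $V$.

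Concretely, I will write $\widehat P_a^g = \sum_{i:A_i=a} \pi_i \delta_{(X_i,Y_i)}$ with $\pi_i\ge 0$ and $\sum_{i:A_i=a}\pi_i=1$. For the unweighted empirical distribution, $\pi_i = 1/N_a$. For the SBW distribution of Algorithm~\ref{alg:gsbw}, $\pi_i = w_{a,i}/N_a$. Nonnegativity and the mean-one constraint make this a probability measure, as noted after Algorithm~\ref{alg:bootstrap-z}. With this notation,
\[
\widehat{\mathbb E}_a^g[Y]=\sum_{i:A_i=a}\pi_i Y_i,\qquad \widehat P_a^g(V=v)=\sum_{i:A_i=a}\pi_i\mathbbm 1\{v(X_i)=v\}=:\widehat p_v .
\]
For $\widehat p_v>0$, the conditional expectation under this discrete measure is
\[
\widehat{\mathbb E}_a^g[Y\mid V=v]=\widehat p_v^{-1}\sum_{i:A_i=a}\pi_i Y_i\mathbbm 1\{v(X_i)=v\}.
\]

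The main step uses the partition $1=\sum_v \mathbbm 1\{v(X_i)=v\}$. Only finitely many values $v$ are attained in the sample, and the values $v$ with $\widehat p_v=0$ contribute nothing. Inserting the partition inside the sum defining $\widehat{\mathbb E}_a^g[Y]$ and exchanging the two finite sums gives
\[
\widehat{\mathbb E}_a^g[Y]=\sum_{v:\widehat p_v>0}\sum_{i:A_i=a}\pi_iY_i\mathbbm 1\{v(X_i)=v\}.
\]
Multiplying and dividing each inner sum by $\widehat p_v$ then yields $\sum_v \widehat{\mathbb E}_a^g[Y\mid V=v]\,\widehat P_a^g(V=v)$. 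This is simply the law of total expectation applied to the probability measure $\widehat P_a^g$.

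The only delicate point is the convention for the empty-subgroup terms. If some $v$ in the range of $V$ has $\widehat P_a^g(V=v)=0$, the conditional mean is undefined, and I will adopt the usual convention that such terms are zero or omitted. This situation arises either because no arm-$a$ observation falls in that subgroup or because all their weights are zero. The argument requires no property of the SBWs beyond their defining a probability measure, so it needs neither $V$ among the balancing covariates nor an inactive nonnegativity constraint. Even signed weights would suffice provided $\widehat p_v\neq0$ for every $v$.
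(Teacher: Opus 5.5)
Your proof is correct and follows essentially the same route as the paper: write each conditional mean as a ratio, multiply by $\widehat P_a^g(V=v)$, sum over $v$, and collapse the sum using $\sum_v \mathbbm{1}\{V=v\}=1$. Your explicit handling of zero-mass subgroups is slightly more careful than the paper's, which leaves that case implicit. Under nonnegative weights such subgroups contribute nothing, and under signed weights you correctly require $\widehat p_v\neq 0$.
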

Proposition~\ref{prop:plugin_coherence} applies directly to both the unadjusted plug-in estimator and the SBW estimator. For the unadjusted estimator, $\widehat P_{a}^{g}=P_{n,a}$, and for SBW, $\widehat P_{a}^{g}=\widehat P_{a}^{\,\mathrm{sbw}}$. Thus, for each method and treatment arm, the overall arm-specific mean can be reconstructed exactly from that arm's subgroup means and subgroup proportions. This is an arm-specific identity, not a statement that the overall treatment effect can always be reconstructed from subgroup treatment effects; we discuss this stronger aggregation property in Appendix~\ref{appendix:subgroup_coherence}.

Model-based adjusted estimators do not automatically share this property. For example, covariate adjustment via augmented inverse-probability weighting (AIPW) has increasingly been recommended as a general approach for randomized trials \citep{fda2023adjusting, bannick2023general}; such estimators can be constructed to respect subgroup decompositions when the overall and subgroup summaries are derived from a common fitted object. If instead they are fit separately, however, the resulting summaries may not arise from a single empirical, weighted empirical, or fitted distribution, and arm-specific plug-in coherence can fail.

We illustrate this property in a simulation where subgroup membership is both prognostic and
effect-modifying. We generate randomized trials with $n=500$ participants and 1000 Monte Carlo replicates. For each participant, we draw $X\sim N(0,1)$, define the subgroup indicator $V=\mathbbm 1\{X>0\}$, randomize treatment as $A\sim\mathrm{Bernoulli}(1/2)$, and generate $Y \sim N(2\sin\{2X\}
+
A\{-0.5+2.2V\}, 0.75^2).$

For each simulated dataset, we computed overall and subgroup-specific arm-specific means using the unadjusted plug-in estimator, the SBW plug-in estimator balancing $X$, and AIPW. For AIPW, the propensity score was estimated by the empirical treatment probability, and separate linear outcome regressions using $X$ were fit for the overall and subgroup analyses. Table~\ref{tab:subgroup_coherence_simulation} reports the maximum absolute arm-specific coherence gap,
\begin{equation} \label{eq:coherence-gap}
    \max_{a\in\{0,1\}} 
\left|
\widehat{\mathbb E}_a[Y]
-
\sum_v
\widehat{\mathbb E}_a[Y\mid V=v]\,
\widehat P_a(V=v)
\right|.
\end{equation}

\begin{table}[t]
\centering
\begingroup
\scriptsize
\setlength{\tabcolsep}{3.55pt}
\begin{tabular}{lcc}
\toprule
\textbf{Estimator}
&
\textbf{Median gap}
&
\textbf{95th percentile gap}
\\
\midrule
Unadjusted
&
$1\times 10^{-16}$
&
$3\times 10^{-16}$
\\
SBW
&
$8\times 10^{-16}$
&
$2\times 10^{-15}$
\\
AIPW, separate fits
&
$0.04$
&
$0.11$
\\
\bottomrule
\end{tabular}
\endgroup
\caption{Simulation illustration of arm-specific plug-in coherence. Values are summaries across 1000 Monte Carlo replicates. The arm-specific coherence gap is calculated in \eqref{eq:coherence-gap}.}
\label{tab:subgroup_coherence_simulation}
\end{table}

The arm-specific coherence gap is zero up to numerical precision for the unadjusted and SBW plug-in estimators. In contrast, the separately fit AIPW analyses yield nonzero arm-specific gaps, because the overall and subgroup summaries are not constrained to arise from a single fitted distribution. A separate treatment-effect aggregation diagnostic is considered in Appendix~\ref{appendix:subgroup_coherence}, where we show that SBW balancing $X$ reduces the median common-weight ATE aggregation gap relative to the unadjusted estimator. Stable balancing weighting therefore preserves the arm-specific plug-in structure of the unadjusted estimator, so marginal and subgroup summaries remain internally consistent. Appendix~\ref{appendix:subgroup_coherence} reports an analogous consistency property for the relative risk, where SBW's violation rate is roughly half to two-thirds that of AIPW and the unadjusted estimator.

\section{Simulations}

We use simulations to assess whether SBW, despite its simple and estimand-agnostic form, remains competitive with more tailored covariate-adjusted estimators. We consider two outcome settings: a binary-outcome simulation targeting the ATE and a time-to-event simulation targeting the survival ratio. To isolate modeling choices from data structure, both simulations use the same underlying covariate structure and vary only which transformations of the covariates are available to the estimators. This design introduces controlled misspecification without altering the true outcome-generating features.

\subsection{Average treatment effect}

We first consider randomized trials with binary outcomes. The estimand is the ATE, and power is evaluated using two-sided tests at level $\alpha = 0.05$.

In both simulations, covariates are sampled with replacement from a cleaned NHANES dataset containing ten baseline variables, including demographic and laboratory measurements. The data-generating process applies oscillatory and nonlinear transformations to seven covariates while leaving three covariates untransformed. Denoting the transformed covariates by $X_{\text{true}}$, treatment is randomized as $A\sim\mathrm{Bernoulli}(1/2)$, and outcomes $Y\in\{0,1\}$ follow
\[
P(Y=1\mid X_{\text{true}},A)=\mathrm{expit}\!\big(X_{\text{true}}^\top\beta+\tau A\big),
\]
where $\beta$ is fixed and $\tau$ controls the marginal treatment effect.

To assess sensitivity to model misspecification, we vary which covariate representation is available for adjustment. The approximately nonlinear view exposes the original, untransformed NHANES variables and is intentionally misspecified relative to the data-generating process. The moderately nonlinear view includes engineered transformations designed to resemble, but not exactly match, the data-generating process's features. The approximately linear view exposes the exact data-generating process's features $X_{\text{true}}$, so that the logistic regression is linear in the observed features. Appendix~\ref{sec:appendix-simulation-results} quantifies the linear approximation of these views using $R^2$ from regressions of the true conditional mean on treatment and the observed features.

Within each covariate view, we vary the adjustment set to study the effects of dimension and prognostic strength. Adjustment sets either include all ten covariates, three selected covariates, or a trigonometric basis expansion of three selected covariates. For the three-covariate settings, we consider low, medium, and high prognostic strength. A complete specification of the data-generating process, including coefficients, covariate views, and adjustment-set construction, is given in Appendix~\ref{sec:appendix-simulation-results}.

We compare four estimators. The unadjusted estimator is the raw difference in sample means, with standard error based on the arm-specific sample variances. The SBW estimator constructs SBWs separately within each treatment arm by targeting the overall covariate means of the chosen adjustment set. We also consider two AIPW estimators with the propensity score estimated by the empirical treatment probability: AIPW (linear), which uses a linear model for the outcome regression, and AIPW (RF), which uses cross-fitted random forests fit separately by arm for the outcome regression. For AIPW (linear) and AIPW (RF), inference uses Wald standard errors. For SBW, standard errors are obtained via a nonparametric bootstrap with $1500$ resamples. Numerical failures are recorded and summarized in Appendix~\ref{sec:appendix-simulation-results}.

Sample sizes and effect sizes are calibrated to achieve approximately 90\% power at $\alpha=0.05$ for the unadjusted estimator. The small, medium, and large sample-size settings correspond to total trial sample sizes of 288, 1134, and 7086, respectively.

\begin{figure}[t]
    \centering
    \includegraphics[width=1\linewidth]{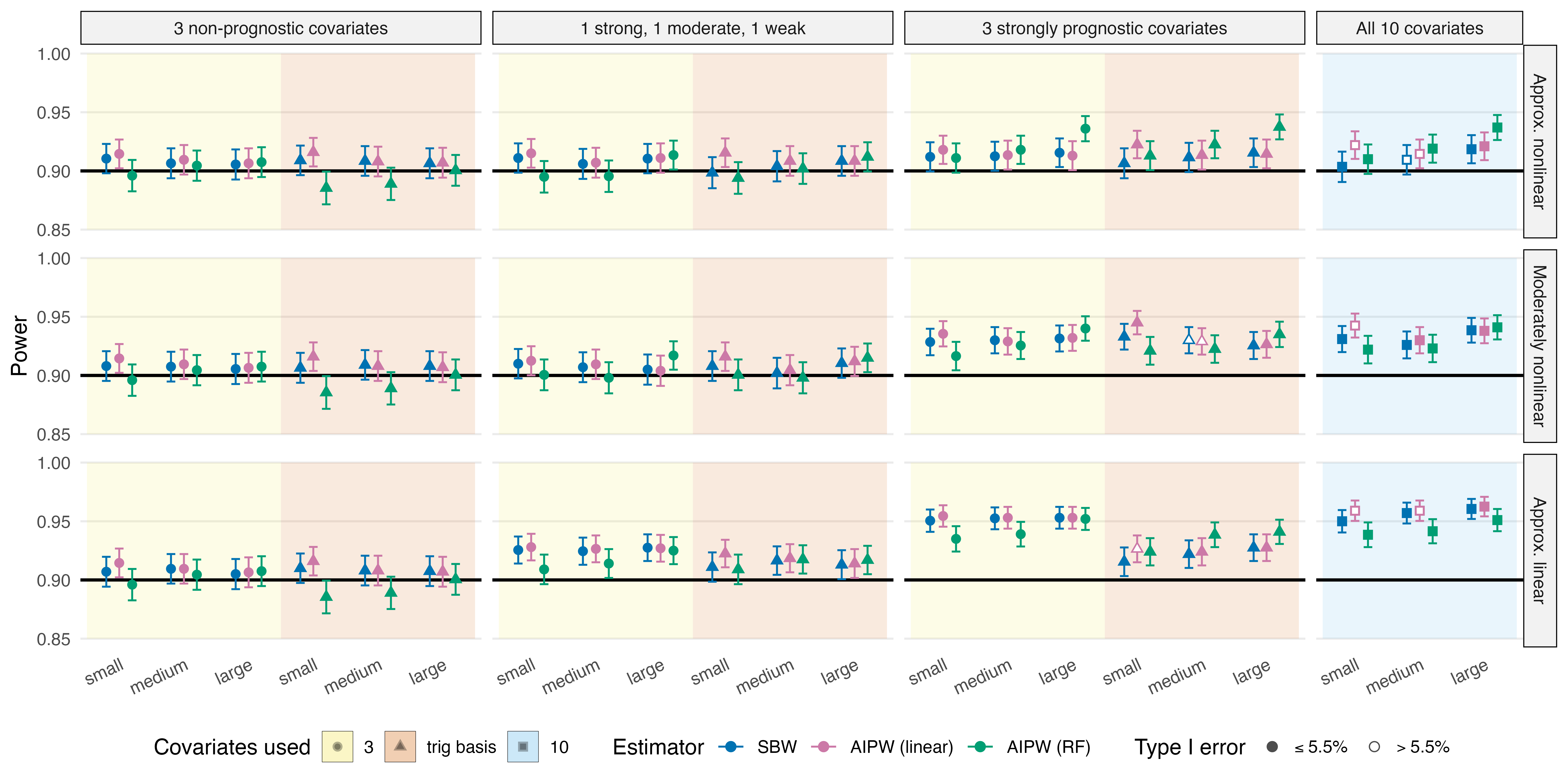}
    \caption{Power for the ATE (95\% Wald CIs) across sample sizes, covariate views, and covariate subsets.}
    \label{fig:binarysim_power}
\end{figure}

Each factorial cell, defined by sample size, covariate view, dimension, and prognostic level, is evaluated using 2000 Monte Carlo replicates. In each replicate we resample covariates, randomize treatment, generate outcomes from the fixed data-generating process, construct the chosen view, select the relevant covariate subset, and compute all four estimators. Figure~\ref{fig:binarysim_power} summarizes empirical power with 95\% Wald confidence intervals for SBW, AIPW (linear), and AIPW (RF) across views, sample sizes, and covariate subsets.

Across most settings, all three covariate-adjusted estimators achieve power near or above the 90\% design target. Power gains are largest when the available covariates are closely aligned with the outcome-generating mechanism and when the adjustment set includes strongly prognostic covariates. This pattern is most apparent in the approximately linear view and, to a lesser extent, in the moderately nonlinear view. In the approximately nonlinear view, adjustment provides smaller gains unless the available covariates are strongly prognostic.

The covariate-subset comparisons suggest that using all ten covariates performs similarly to, and often slightly better than, restricting to only the three most prognostic variables. We do not observe evidence of instability from this modestly higher-dimensional adjustment for SBW, AIPW (linear), or AIPW (RF). In contrast, the trig-basis setting does not consistently improve performance relative to using the corresponding three covariates alone, and in some settings slightly attenuates the gains from adjustment.

Type I error control is generally close to nominal across estimators. The largest upward deviations occur for AIPW (linear), and to a lesser extent SBW, most often in higher-dimensional small- and medium-sample settings. The AIPW (RF) does not exceed the 5.5\% threshold in these results, although it is occasionally slightly conservative. Consequently, power comparisons involving AIPW (linear) should be interpreted alongside the corresponding Type I error results, since some of its power gains occur in settings with mild Type I error inflation. Additional details are reported in Appendix~\ref{sec:appendix-simulation-results}.

\subsection{Survival ratio}\label{subsection:SR}

We next study covariate adjustment for time-to-event outcomes, focusing on a survival ratio estimand evaluated at a fixed analysis time $t_0$. This simulation uses the same NHANES covariate views and covariate-subset regimes as in the binary-outcome simulation, but replaces the binary outcome model with a time-to-event data-generating process.

For each replicate, event times are generated from an exponential proportional-hazards model with covariate-dependent hazard
\[
\lambda(t \mid X_{\text{true}}, A) = \lambda_0 \exp\!\big(X_{\text{true}}^\top\beta + \tau A\big),
\]
where $\lambda_0$ is fixed and $\tau$ governs the marginal treatment effect on the hazard scale. Independent censoring times are generated as $C\sim \mathrm{Exp}(\lambda_c)$, with $\lambda_c$ chosen to obtain approximately 15\% of observations censored across each sample size. We observe $\tilde T = \min(T,C)$. The estimand is the survival ratio at time $t_0$,
\[
\mathrm{SR}(t_0) = \frac{S_1(t_0)}{S_0(t_0)},
\qquad
S_a(t) = P_a\{T > t\},
\]
Neither $\mathrm{SR}(t_0)$ nor its arm-specific numerator and denominator is a linear functional of a regression; each is a Kaplan--Meier functional, which is still Hadamard differentiable under standard censoring conditions.

Inference is based on two-sided tests at level $\alpha=0.05$ using log-scale intervals. We set $t_0=5$ throughout. Calibration of $(n,\tau,\lambda_c)$ values and the corresponding true $\mathrm{SR}(t_0)$ for coverage calculations follows the same strategy as in the binary-outcome setting and is reported in Appendix~\ref{sec:appendix-simulation-results}. The small, medium, and large sample-size settings correspond to total trial sample sizes of 340, 1294, and 7100, respectively.

We compare four estimators of $\mathrm{SR}(t_0)$. The unadjusted estimator uses the Kaplan--Meier estimator within each arm and reports $\widehat S_1(t_0)/\widehat S_0(t_0)$ as a point estimate with a confidence interval based on Greenwood standard errors on the log--log scale. The SBW and IPW estimators compute weighted Kaplan--Meier curves using, respectively, SBWs and stabilized IPW based on a logistic propensity score model for $A$ given the available covariates \citep{shao2026inverse}. For both SBW and IPW, uncertainty is quantified using a nonparametric bootstrap with $1500$ resamples, with a Wald interval on the log scale. If a bootstrap replicate fails, we substitute the unadjusted estimator for that replicate. Bootstrap failures are rare in these simulations and are summarized in Appendix~\ref{sec:appendix-simulation-results}.

Finally, we include the \texttt{CFsurvival} estimator of \citet{westling2024inference}, a cross-fitted estimator of covariate-adjusted treatment-specific survival curves. Unlike the Kaplan--Meier-based unadjusted, SBW, and IPW estimators, which rely on independent censoring, \texttt{CFsurvival} allows censoring to depend on baseline covariates through nuisance estimation. In the present simulations, censoring is independent of both event times and covariates, so the corresponding estimands coincide. Inference uses the package's influence-function-based standard error to construct log-scale Wald intervals; numerical failures are recorded and summarized in Appendix~\ref{sec:appendix-simulation-results}.

\begin{figure}[t]
    \centering
    \includegraphics[width=1\linewidth]{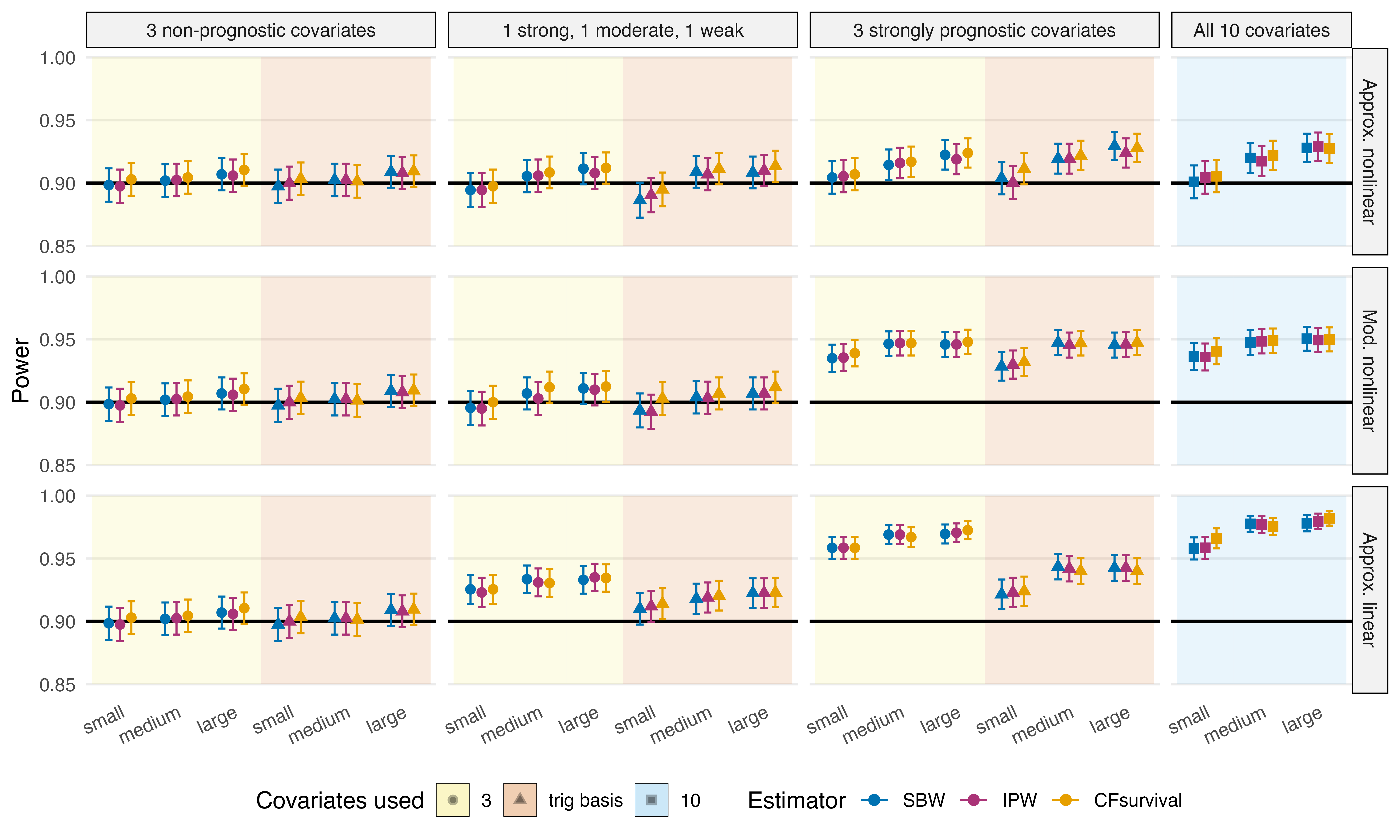}
    \caption{Power for the SR (95\% Wald CIs) across sample sizes, covariate views, and covariate subsets. No empirical Type I error simulation result exceeded 5.5\% in these settings.}
    \label{fig:survsim_power}
\end{figure}

Figure~\ref{fig:survsim_power} summarizes empirical power across views, sample sizes, and covariate subsets. Across most settings, the results are similar to those for the binary outcome: covariate adjustment is most beneficial when the available features are more closely aligned with the true outcome mechanism and when the adjusted covariates are more prognostic. The largest gains occur in the approximately linear view, especially when all ten covariates are used or when the low-dimensional adjustment set contains the most prognostic variables. Similar, though smaller, gains appear in the moderately nonlinear view under strongly prognostic adjustment. When only weakly prognostic covariates are available, adjusted and unadjusted methods perform similarly, indicating that adjustment offers limited efficiency gains in these settings but does not lead to substantial power loss.

The approximately nonlinear view remains the most challenging, with smaller gains from adjustment than in the moderately nonlinear and approximately linear views. Adjustment with strongly prognostic covariates or all ten covariates still improves power in some settings, but the gains are modest. The trig-basis setting does not consistently improve performance relative to using the corresponding three covariates alone; in the approximately linear view, it often attenuates the gains obtained from the original selected covariates, but without producing large losses relative to the unadjusted analysis. Type I error is uniformly close to nominal across all methods and settings. Overall, the survival simulations show that the simple, interpretable, and multipurpose SBW procedure in Algorithm~\ref{alg:gsbw} captures efficiency gains when prognostic covariates are available and remains competitive with more tailored survival-specific methods.

\section{Application to Antibody-Mediated Prevention trials}

We illustrate the proposed covariate-adjusted estimator using data from the Antibody-Mediated Prevention (AMP) trials, two parallel randomized, double-blind, placebo-controlled phase~2b studies evaluating the broadly neutralizing monoclonal antibody VRC01 for prevention of HIV-1 acquisition \citep{corey2021two}. The trials enrolled populations at elevated risk for HIV infection, including cisgender men and transgender persons in the Americas and Europe (HVTN~704/HPTN~085) and heterosexual women in sub-Saharan Africa (HVTN~703/HPTN~081). Our analysis includes $n=3039$ participants with complete time-to-event and treatment information. Although the primary analysis did not demonstrate statistically significant prevention efficacy against HIV-1 acquisition, the AMP trials are important proof-of-concept studies for antibody-based HIV prevention and continue to inform the design of subsequent efficacy trials.

In these trials, we consider the marginal prevention efficacy at a fixed late time point $\tau$, defined as
\[
\text{PE}(\tau) = 1 - \frac{P_1\{T \le \tau\}}{P_0\{T \le \tau\}},
\]
with $a=1$ corresponding to high-dose VRC01 and $a=0$ to placebo, and $T$ denotes the time to HIV-1 diagnosis. In words, $\text{PE}(\tau)$ is the proportional reduction in $\tau$-time risk of HIV-1 acquisition attributable to VRC01. As with the survival ratio in Section~\ref{subsection:SR}, neither $\mathrm{PE}(\tau)$ nor its arm-specific numerator and denominator, $P_1\{T\le\tau\}$ and $P_0\{T\le\tau\}$, is a linear functional of a regression. Like the odds and hazard ratios noted in the introduction, the risk ratio underlying $\mathrm{PE}(\tau)$ is a non-collapsible summary \citep{greenland1999confounding}, which complicates covariate-adjusted estimation. We take $\tau = 86$ weeks post-enrollment and pool the two trials; the published primary analyses instead report each trial separately with the two antibody dose groups pooled. Inference is carried out on the log cumulative-incidence-ratio scale. The unadjusted Kaplan-Meier estimator of $\text{PE}(\tau)$ yields $28.8\%$ with a Wald-based 95\% confidence interval of $(-4.1\%,\;51.3\%)$ and a two-sided Wald p-value of $0.079$, indicating suggestive but not significant evidence of a protective effect.

We next apply the proposed covariate-adjusted estimator, incorporating baseline covariates that are plausibly prognostic for HIV acquisition and available across the pooled trials. We adjust for continuous age at enrollment and country of enrollment using SBWs and conduct inference using a bootstrap-based Wald test. The covariate-adjusted estimator yields a similar point estimate of $29.2\%$, with a bootstrapped 95\% Wald confidence interval of $(-3.1\%,\;51.4\%)$ and a corresponding p-value of $0.072$. Relative to the unadjusted analysis, covariate adjustment yields a relative efficiency gain of $2.1\%$ and an effective sample size increase of approximately $65$ participants.
This modest gain is expected when only a small set of moderately prognostic baseline covariates is available; larger gains have been observed when adjustment uses strongly prognostic baseline variables \citep{kahan2014risks, thompson2015covariate}.

\section{Discussion}

Covariate adjustment improves efficiency in randomized trials, yet its practical adoption can be limited by methodological complexity and uncertainty about model specification. We have proposed a weighting-based adjustment approach that directly balances prognostic covariates while minimizing weight variability. The resulting estimators are simple to implement, broadly applicable across estimands, and are asymptotically at least as efficient as their unadjusted counterparts under randomization. These properties make SBWs a practical default strategy when conventional model-based adjustments are unavailable or difficult to justify.

Several directions for future work remain. First, many modern trials use more complex randomization or monitoring schemes, including covariate-adaptive randomization, group sequential designs, and adaptive trial designs. Extending the present framework to explicitly incorporate these settings would be valuable, particularly for understanding how weight construction or recalculation should interact with the randomization procedure and interim analyses.  Second, the choice of covariates and transformations used for balance deserves further study. While balancing baseline prognostic variables improves precision, determining which moments, interactions, or nonlinear transformations to include could potentially be informed by variable selection or data-adaptive procedures, as has been explored for other covariate-adjusted estimators \citep{balzer2024adaptive, liu2025coadvise}. Finally, although the original framework of \citet{zubizarreta2015stable} permits approximate balance through a user-specified imbalance tolerance, our formulation sets this tolerance to zero and therefore enforces exact covariate balance. Allowing controlled imbalance in randomized trials may reduce weight variability in settings with extreme or highly collinear covariates and could improve finite-sample performance. Characterizing this trade-off formally is an important area for future research.

\section*{Acknowledgments}

We thank Peter Gilbert for his assistance in accessing the AMP trial data and for his guidance in interpreting the results. This work was supported by the Patient-Centered Outcomes Research Institute (PCORI, ME-2024C2-39990, ME-2024C2-40180) and the National Institute on Aging (P01 AG032952); the content is solely the responsibility of the authors and does not necessarily represent the official views of the funding agency.

\bibliography{citations}

\appendix

\setcounter{equation}{0}
\renewcommand{\theequation}{S\arabic{equation}}
\setcounter{theorem}{0}
\setcounter{figure}{0}
\setcounter{table}{0}
\renewcommand{\thetheorem}{S\arabic{theorem}}
\renewcommand{\thefigure}{S\arabic{figure}}
\renewcommand{\thetable}{S\arabic{table}}
\renewcommand{\thealgorithm}{S\arabic{algorithm}}

\section*{\LARGE Appendices}

\DoToC

\section{Stable balancing weights optimization problem}\label{sec:SBW-Optimization-Problem}

We now provide the full specification of the optimization problem used to construct the SBWs within each treatment group $a\in\{0,1\}$. Our formulation follows the convex optimization framework of \citet{zubizarreta2015stable}, adapted to our notation and normalization.

We retain the notation from Section~\ref{section:algorithm}. For treatment group $a\in\{0,1\}$, let $\mathbf X_a\in\mathbb R^{N_a\times(q+1)}$ denote the covariate matrix for units in arm $a$, augmented with an intercept column, and let
$
\bar X_n := \frac1n\sum_{i=1}^n X_i
$
denote the full-sample covariate mean, including the intercept mean equal to $1$. We seek weights $w_a\in\mathbb R^{N_a}$ that reweight arm $a$ to match these full-sample means.

\subsection{Quadratic program}

The SBWs are defined as the solution to
\begin{equation}
\label{eq:sbw_primal}
\begin{aligned}
\min_{w_a \in \mathbb{R}^{N_a}}
& \quad \|w_a - \mathbf 1_a\|_2^2 \\
\text{subject to}
& \quad \mathbf{X}_a^\top w_a = N_a \bar X_n, \\
& \quad w_a \ge 0.
\end{aligned}
\end{equation}
The objective minimizes the squared $\ell_2$ distance between $w_a$ and the uniform weights $\mathbf 1_a$, subject to exact mean balance.

In \citet{zubizarreta2015stable}, covariate balance is imposed through inequality constraints of the form
\[
\left|
\frac{1}{N_a}\mathbf X_a^\top w_a-\bar X_n
\right|
\le \delta,
\]
where $\delta$ is a user-specified imbalance tolerance. In our formulation, we set $\delta=0$, thereby enforcing exact equality of weighted covariate means with the full-sample means. This yields the constraint $\mathbf X_a^\top w_a=N_a\bar X_n.$

Because $\mathbf X_a$ includes an intercept column, this condition also implies
$
\frac{1}{N_a}\sum_{i:A_i=a} w_{a,i}=1,
$
that is, the weights have mean one within each treatment group. This normalization preserves the scale of standard plug-in estimators: for example, an unweighted sample mean can be converted to its SBW-adjusted analogue by incorporating $w_{a,i}$ without introducing an additional normalization factor. Apart from this scaling and the choice $\delta=0$, the formulation is equivalent to that of \citet{zubizarreta2015stable}. The nonnegativity constraint $w_a\ge0$ ensures that the adjusted estimator remains a convex combination of observed outcomes and avoids extrapolation beyond the empirical support.

\subsection{Closed-form solution without the nonnegativity constraint}

If the nonnegativity constraint is omitted, the problem reduces to a linearly constrained quadratic program with equality constraints only. Let $\tilde w_a^{\mathrm{cf}}$ denote this equality-constrained solution, where the superscript ``cf'' denotes closed form. The tilde distinguishes the closed-form weights from the constrained quadratic program weights used in practice.

The Lagrangian for the equality-constrained problem is
\[
\mathcal L(w_a,\lambda)
=
\frac12\|w_a-\mathbf 1_a\|_2^2
-
\lambda^\top\bigl(\mathbf X_a^\top w_a-N_a\bar X_n\bigr).
\]
The first-order condition with respect to $w_a$ gives $w_a-\mathbf 1_a-\mathbf X_a\lambda=0,$ so $w_a=\mathbf 1_a+\mathbf X_a\lambda.$ Substituting this expression into the balance constraint gives $ \mathbf X_a^\top \mathbf 1_a+\mathbf X_a^\top\mathbf X_a\lambda = N_a\bar X_n, $ and therefore we have $ \lambda = -(\mathbf X_a^\top\mathbf X_a)^{-1} \bigl(\mathbf X_a^\top \mathbf 1_a-N_a\bar X_n\bigr). $ Thus the equality-constrained, closed-form solution is
\begin{equation}
\label{eq:sbw_closed_form}
\tilde w_a^{\mathrm{cf}}
=
\mathbf 1_a
-
\mathbf X_a(\mathbf X_a^\top\mathbf X_a)^{-1}
\bigl(\mathbf X_a^\top \mathbf 1_a-N_a\bar X_n\bigr),
\end{equation}
which matches equation \eqref{eqn:w} in the main text.

If the closed-form solution $\tilde w_a^{\mathrm{cf}}$ is nonnegative, then it satisfies all constraints in \eqref{eq:sbw_primal}: it satisfies exact balance by construction, and it satisfies nonnegativity by assumption. Since $\tilde w_a^{\mathrm{cf}}$ was obtained by solving the equality-constrained problem, it minimizes the objective among all weights satisfying exact balance. Therefore, no weight satisfying both exact balance and nonnegativity can have a smaller objective value. Hence $\tilde w_a^{\mathrm{cf}}$ also solves \eqref{eq:sbw_primal}. By strict convexity of the objective, this solution is unique. When $\tilde w_a^{\mathrm{cf}}$ has negative components, it is not feasible for \eqref{eq:sbw_primal}; in that case, the nonnegativity constraints affect the solution, and the weights must be computed numerically by solving the quadratic program.

In Appendix~\ref{sec:equivalence-constrained-sbw}, we show that under mild conditions, the closed-form weights are nonnegative with probability tending to one. Consequently, the estimator based on the closed-form weights coincides with the practical SBW estimator computed from \eqref{eq:sbw_primal} with probability tending to one.

\section{Common estimands compatible with SBW inference}
\label{sec:estimand-regularity}

This section is meant as a quick reference for common estimands that satisfy the smoothness requirements used by the SBW theory. We say below that an estimand falls under the SBW inference theory if the map $(P_0,P_1)\mapsto\Psi(P_0,P_1)$ satisfies the Hadamard differentiability condition required for asymptotic normality and bootstrap consistency, stated formally at the beginning of Appendix~\ref{sec:differentiability-Phi}. For the variance-reduction result, we additionally need the derivative of $\Psi$ to separate into arm-specific pieces along the directions used in the variance comparison, stated formally in Assumption~\ref{assumption:arm_specific_derivative_decomposition}.

For the examples below, this second condition is usually automatic, since many common estimands have the same structure: first compute one or more regular summaries within each treatment arm, and then combine the arm-specific summaries using a differentiable formula. In that case, perturbing $P_0$ only changes the arm-$0$ summary, perturbing $P_1$ only changes the arm-$1$ summary, and the derivative separates into the required arm-specific components.

Standard Hadamard differentiability results for arm-specific plug-in estimands apply to the local SBW directions used in the theory. The Hadamard differentiability condition is imposed on the estimand as a function of the arm-specific laws, and although the SBW estimator uses reweighted empirical distributions, the relevant local perturbations can be treated as perturbations of valid arm-specific probability laws. Under randomization, the population SBW coefficient is zero at the target law, so the population closed-form SBW weight function is identically one. By a similar continuity argument to the one used to show asymptotic inactivity of the nonnegativity constraint in Appendix~\ref{sec:equivalence-constrained-sbw}, the population closed-form weights remain strictly positive in a neighborhood of the target law.

In practice, existing bootstrap theory for the corresponding unadjusted plug-in estimator can be a useful guide. If the unadjusted estimator is known to admit a valid nonparametric bootstrap confidence interval, this often points to the Hadamard differentiability needed for SBW inference. The remaining SBW-specific details must still be checked, including compatibility with the arm-specific perturbations and, for the variance-reduction result, the derivative decomposition in Assumption~\ref{assumption:arm_specific_derivative_decomposition}.

Throughout this section, smooth transformations are understood to be evaluated away from their singularities. For example, ratios require denominators bounded away from zero, odds and logit transformations require probabilities bounded away from zero and one, standard deviations require positive variances, and linear-projection coefficients require nonsingular moment matrices. We refer to these below as the relevant nondegeneracy conditions.

\begin{enumerate}
\item \textbf{Means, moments, response probabilities, and CDF values.}
Arm-specific means, bounded moments, response probabilities, and CDF values at fixed thresholds fall directly under Lemmas~\ref{lemma:expectation} and~\ref{lemma:vector_expectation}, provided the relevant integrands or threshold indicators satisfy the bounded Hardy--Krause variation condition. Differences of these quantities, including average treatment effects, risk differences, survival probabilities with fully observed event times, and differences in CDF values at fixed thresholds, therefore fall under the SBW inference theory.

\item \textbf{Smooth contrasts and moment-based summaries.}
Ratios of means, relative risks, marginal odds ratios, log relative risks, log odds ratios, variances, standard deviations, correlations, restricted means with fully observed event times, restricted mean time lost, and number needed to treat or harm fall under the SBW theory by Lemmas~\ref{lemma:expectation} and~\ref{lemma:vector_expectation} plus the chain rule, under the relevant nondegeneracy conditions summarized in the paragraph above this list. Each of these summaries is a smooth function of the arm-specific summaries in item 1, so the chain rule preserves their additive arm-specific split.

\item \textbf{Quantiles and medians.}
Quantiles require an additional inverse-map condition. Let $q_a(u)$ be the arm-$a$ $u$th quantile. A sufficient condition is that the arm-specific CDF is continuous at $q_a(u)$ and crosses the level $u$ there: for small $\epsilon>0$, $F_a\{q_a(u)-\epsilon\}<u<F_a\{q_a(u)+\epsilon\}$. This is implied, for example, by a positive continuous density at $q_a(u)$. Under this condition, the quantile map is Hadamard differentiable \citep[Lemma~3.9.23]{van1996weak}. Differences of arm-specific quantiles follow by the chain rule; ratios follow when the denominator quantile is bounded away from zero; and the variance-reduction condition holds by the same chain-rule argument as in item 2. Quantiles should not be claimed to fall under this theory when the target distribution has a flat region, a nonunique crossing, or a boundary value at the quantile of interest.

\item \textbf{Right-censored survival estimands based on Kaplan--Meier.}
For right-censored outcomes, the input is the distribution of the observed follow-up time and event indicator. The estimands in this item require independent censoring and sufficient follow-up through the relevant fixed time range. More precisely, if $C$ denotes the censoring time, the censoring survival function $P(C\ge t)$ should be bounded away from zero over the time range being analyzed. Under independent censoring and sufficient follow-up through the fixed time range, the Kaplan--Meier estimator is obtained by composing the empirical observed-data law with product-integral and ordinary integration maps. This composition is Hadamard differentiable \citep[Sections 4 and 6]{gill2006lectures}, so the following estimands fall under the SBW inference theory:

\begin{enumerate}
\item \textbf{Survival probabilities and fixed-time contrasts.} An arm-specific survival probability $S_a(t_0)$ at a fixed time $t_0$, the survival difference $S_1(t_0)-S_0(t_0)$, the survival ratio $S_1(t_0)/S_0(t_0)$, and the risk ratio $F_1(t_0)/F_0(t_0)$, where $F_a=1-S_a$ is the arm-specific cumulative incidence, follow from differentiability of the Kaplan--Meier map and the chain rule.

\item \textbf{Restricted mean survival time (RMST).} The estimand $\mathrm{RMST}_a(\tau)=\int_0^\tau S_a(t)\,dt$ follows because integration over the fixed interval $[0,\tau]$ is a continuous linear operation. Restricted mean survival time differences and ratios then follow by the chain rule. The same reasoning applies to restricted mean event-free time when the event-free survival curve is estimated by Kaplan--Meier.

\item \textbf{Survival medians and other survival quantiles.} These summaries combine the Kaplan--Meier map with the quantile map. The same inverse-map condition applies with $S_a$ in place of $F_a$: $S_a$ must be continuous at the quantile level and cross it there, e.g. a positive local slope of $S_a$ at $1/2$ for the median.
Classical inference for median survival uses this type of inversion condition \citep{brookmeyer1982confidence}. If the median is not reached during follow-up, or if the curve is locally flat at the crossing, this appendix does not justify SBW inference for the median.
\end{enumerate}

\item \textbf{Mann--Whitney, win probability, and win ratio estimands.}
For finite ordinal outcomes, the Mann--Whitney estimand $P(Y_1>Y_0)+\tfrac12P(Y_1=Y_0)$ is a finite sum of products of arm-specific outcome probabilities, so it follows from Lemmas~\ref{lemma:expectation} and~\ref{lemma:vector_expectation} plus finite-dimensional differentiability of products. More generally, pairwise-comparison estimands can often be written as $\iint \phi(y_1,y_0)\,dP_1(y_1)\,dP_0(y_0)$ for a fixed comparison rule $\phi$. Their first-order derivatives separate into the term from perturbing $P_1$ with $P_0$ fixed and the term from perturbing $P_0$ with $P_1$ fixed, so the variance-reduction condition holds. Win probabilities have this same form, and win odds or win ratios follow by the chain rule when the corresponding denominator is bounded away from zero; see \citet{buyse2010generalized} and \citet{pocock2012win} for pairwise-comparison and win-ratio estimands. For right-censored or tied outcomes, additional survival-specific arguments are needed; see \citet{dobler2016mann}.
\end{enumerate}

\section{Additional details on illustration of the approach}
\label{appendix:illustration_details}

This appendix gives the data-generating mechanisms and inference details for the illustrative example in Section~\ref{section:illustration}. Treatment is assigned by simple randomization at a 1:1 ratio. The baseline covariates are $X_b\sim\mathrm{Bernoulli}(1/2)$ and $X_c\sim\mathrm{Uniform}(35,45)$.

For the binary outcome, $Y$ is drawn from a Bernoulli distribution with success probability
\[
p =
\operatorname{logit}^{-1}\left[
0.5 - 0.3 A + 2 X_b - A X_b + 1.5 (X_c-40) - A (X_c-40) + 0.3 (1-A) (X_c-40)^2
\right].
\]
Under this data-generating process, the outcome has a true mean of about $0.76$ in the control group and $0.62$ in the treated group.

For the ordinal outcome, we first generate
$
Z = 0.6(X_c-40) + 0.3X_b + 0.35A + \varepsilon,
$
where $\varepsilon$ follows a standard logistic distribution, and then discretize $Z$ into three ordered categories, with larger values taken to be better.

For the survival outcome, event times are drawn from an Exponential distribution with individual hazard rate
\[
\lambda =
\exp\left[
-2 - 0.5 A + 0.4 X_b - 0.3 A X_b + 0.3 (X_c - 40) + 0.2 A (X_c - 40)
\right].
\]
Censoring times are independently drawn from an Exponential$(0.05)$ distribution.

For the unadjusted estimators, Wald confidence intervals are constructed for the ATE and relative risk using standard error-based normal approximations. Relative risk inference is performed on the log scale. For the MW estimand, inference is performed on the logit scale, following recommendations for Mann--Whitney-type effect measures \citep{perme2019confidence}; standard errors are computed using the nonparametric bootstrap. For the SR at $t_0=3$, uncertainty is quantified on the log scale using a log-log Greenwood approximation based on the Kaplan--Meier estimators.

The estimated relative efficiency gain reported in Section~\ref{section:illustration} is
$
(\widehat{\mathrm{SE}}_{\mathrm{unadj}}/\widehat{\mathrm{SE}}_{\mathrm{sbw}})^2 - 1.
$
This quantity is computed on the scale used for inference: the natural scale for the ATE, the log scale for RR and SR, and the logit scale for MW. The relative efficiency gain can be interpreted as the approximate increase in sample size an unadjusted analysis would need to attain the same precision as our SBW estimator.

\section{Additional details on subgroup coherence}
\label{appendix:subgroup_coherence}

Here, we provide additional details for the discussion of arm-specific plug-in coherence in Section~\ref{subsection:plugin_coherence}. We first prove Proposition~\ref{prop:plugin_coherence}. We then briefly discuss how this property differs from stronger notions of treatment-effect aggregation across subgroups, and how balance constraints can be used to align subgroup composition across treatment arms.

\subsection{Proof of arm-specific plug-in coherence}

\begin{proof}[Proof of Proposition~\ref{prop:plugin_coherence}]
Fix treatment arm $a\in\{0,1\}$ and let $\widehat P_a$ be an empirical or weighted empirical distribution. For any subgroup level $v$ such that $\widehat P_a(V=v)>0$, the plug-in conditional mean is
\[
\widehat{\mathbb E}_a[Y\mid V=v]
=
\frac{\int y\,\mathbbm{1}\{V=v\}\,d\widehat P_a(x,y)}
{\widehat P_a(V=v)}.
\]
Multiplying both sides by $\widehat P_a(V=v)$ and summing over the possible values of $V$, we have
\[
\sum_v
\widehat{\mathbb E}_a[Y\mid V=v]\widehat P_a(V=v)
=
\sum_v
\int y\,\mathbbm{1}\{V=v\}\,d\widehat P_a(x,y).
\]
Since $V$ is discrete and its levels form a partition of the sample space,
$
\sum_v \mathbbm{1}\{V=v\}=1.
$
Therefore,
\[
\sum_v
\widehat{\mathbb E}_a[Y\mid V=v]\widehat P_a(V=v)
=
\int y
\left\{
\sum_v \mathbbm{1}\{V=v\}
\right\}
d\widehat P_a(x,y)
=
\int y\,d\widehat P_a(x,y)
=
\widehat{\mathbb E}_a[Y].
\]
\end{proof}

\subsection{Distinction from aggregation of treatment effects}
\label{appendix:strong_compatibility}

Arm-specific plug-in coherence is an accounting identity for arm-specific summaries. It should not be confused with the stronger requirement that an overall treatment effect equal a fixed weighted average of subgroup-specific treatment effects. To see the distinction, suppose for simplicity that the estimand is the ATE, $ \Psi(P_0,P_1) = \mathbb E_{P_1}[Y]-\mathbb E_{P_0}[Y], $ and let
\[
\Psi_v(P_0,P_1)
=
\mathbb E_{P_1}[Y\mid V=v]-\mathbb E_{P_0}[Y\mid V=v].
\]
Let $Q$ denote a common subgroup distribution used to aggregate the subgroup-specific treatment effects. An aggregation identity of the form
\[
\Psi(P_0,P_1)
=
\sum_v \Psi_v(P_0,P_1)Q(V=v)
\]
requires the subgroup weights used to aggregate the subgroup-specific treatment effects to align with the relevant arm-specific subgroup distributions. In general,
\[
\mathbb E_{P_a}[Y]
=
\sum_v \mathbb E_{P_a}[Y\mid V=v]P_a(V=v),
\]
so replacing the arm-specific subgroup distribution $P_a(V=v)$ with a common subgroup distribution $Q(V=v)$ introduces a remainder term unless the subgroup distributions agree.

For the additive case, the corresponding remainder can be written explicitly:
\begin{align*}
R(P_0,P_1;Q)
:=
\sum_v
&
\mathbb E_{P_1}[Y\mid V=v]
\{P_1(V=v)-Q(V=v)\}
- \mathbb E_{P_0}[Y\mid V=v]
\{P_0(V=v)-Q(V=v)\}.
\end{align*}
Then
\[
\Psi(P_0,P_1)
-
\sum_v \Psi_v(P_0,P_1)Q(V=v)
=
R(P_0,P_1;Q).
\]
Thus, exact aggregation using the subgroup distribution $Q$ holds when this remainder is zero. One sufficient condition is $P_1(V=v)=P_0(V=v)=Q(V=v)$ for each subgroup level $v$. In finite samples, this condition may fail to hold exactly under the unadjusted empirical distributions, even in a randomized trial, because randomization balances covariates only in expectation.

For the SBW estimator, the same decomposition can be applied with $P_a$ replaced by the SBW-weighted empirical distribution $\widehat P_{a}^{\,\mathrm{sbw}}$. Let $P_n$ denote the full-sample empirical distribution of $(X,A,Y)$, so that $ P_n(V=v) = \frac{1}{n}\sum_{i=1}^n \mathbbm{1}\{V_i=v\} $. If the subgroup indicators $\mathbbm{1}\{V=v\}$ are included among the balancing functions, then the SBW constraints imply
\[
\widehat P_{n,1}^{\,\mathrm{sbw}}(V=v)
=
\widehat P_{n,0}^{\,\mathrm{sbw}}(V=v)
=
P_n(V=v)
\]
for each included subgroup level $v$, provided the balance constraints are feasible. In that case, $ R( \widehat P_{n,0}^{\,\mathrm{sbw}}, \widehat P_{n,1}^{\,\mathrm{sbw}}; P_n) = 0 $. We report and discuss this aggregation gap in the next subsection's simulations because it captures a discrepancy practitioners may notice when comparing an overall treatment effect with subgroup-specific effects.

\subsection{Additional simulation diagnostics}
\label{appendix:subgroup_coherence_simulation}

We now report additional diagnostics from the simulation in Section~\ref{subsection:plugin_coherence}.  Table~\ref{tab:appendix_subgroup_coherence_simulation} reports both the arm-specific coherence gap and the common-weight ATE aggregation gap. The arm-specific coherence gap is as in \eqref{eq:coherence-gap}, and the common-weight ATE aggregation gap is
\[
\left|
\widehat\psi_{\mathrm{ATE}}
-
\sum_v \widehat\psi_{\mathrm{ATE},v} P_n(V=v)
\right|.
\]
For SBW, we report two versions: one balancing $X$ only, denoted SBW($X$), and
one balancing both $X$ and the subgroup indicator $V$, denoted SBW($X,V$). For
AIPW, we similarly report separately fit overall and subgroup analyses using
either $X$ or $(X,V)$ in the working outcome regression.

\begin{table}[t]
\centering
\begin{tabular}{lcccc}
\toprule
Estimator
&
Median arm gap
&
95th pct. arm gap
&
Median ATE gap
&
95th pct. ATE gap
\\
\midrule
Unadjusted
&
$1\times 10^{-16}$
&
$3\times 10^{-16}$
&
$0.09$
&
$0.26$
\\
SBW($X$)
&
$8\times 10^{-16}$
&
$2\times 10^{-15}$
&
$0.06$
&
$0.17$
\\
SBW($X,V$)
&
$8\times 10^{-16}$
&
$2\times 10^{-15}$
&
$3\times 10^{-11}$
&
$1\times 10^{-10}$
\\
AIPW $(X)$
&
$0.04$
&
$0.11$
&
$0.08$
&
$0.23$
\\
AIPW $(X,V)$
&
$0.05$
&
$0.14$
&
$0.004$
&
$0.02$
\\
\bottomrule
\end{tabular}
\caption{Additional simulation diagnostics for subgroup coherence. Values are summaries across 1000 Monte Carlo replicates. The arm-specific gap measures failure of the arm-specific plug-in identity. The ATE gap measures failure of a stronger common-weight aggregation identity for the additive treatment effect.}
\label{tab:appendix_subgroup_coherence_simulation}
\end{table}

Several points are worth noting. First, the arm-specific plug-in identity holds to numerical precision for both the unadjusted estimator and the SBW estimators, regardless of whether $V$ is explicitly included in the balance functions. This is the identity established in Proposition~\ref{prop:plugin_coherence}. Second, the common-weight ATE aggregation gap is a distinct diagnostic. It is not zero for the unadjusted estimator, because the realized subgroup distributions differ across treatment arms. Balancing $X$ reduces the median common-weight ATE aggregation gap, and explicitly balancing $V$ eliminates this gap up to numerical precision. Relative to the true ATE of approximately $0.60$, these median gaps correspond to about $15\%$, $9.5\%$, and $0\%$ of the true effect, respectively.

For AIPW, we observe that when the overall and subgroup analyses are fit separately using only $X$, both the arm-specific coherence gap and the common-weight ATE aggregation gap are nonzero. The median arm-specific gap is $0.04$, or about $7\%$ of the true ATE, and the 95th percentile is $0.11$, or about $19\%$ of the true ATE. Adding $V$ to the working outcome regression greatly reduces the common-weight ATE aggregation gap, from a median of $0.08$ to $0.004$. However, it does not restore arm-specific plug-in coherence: the median arm-specific gap is $0.05$, or about $8\%$ of the true ATE, and the 95th percentile is $0.14$, or about $23\%$ of the true ATE. Thus, incorporating subgroup structure into a model-based estimator can help with treatment-effect aggregation, but it does not by itself make separately fit analyses arise from a single empirical, weighted empirical, or fitted distribution. 

To make the discrepancy concrete, we examined one simulated dataset near the 90th percentile of the arm-specific gap for the separately fit AIPW analysis using $X$. In this replicate, the AIPW analysis reported
\[
\widehat{\mathbb E}_0[Y] = 0.17,
\qquad
\widehat{\mathbb E}_1[Y] = 0.70,
\]
whereas the subgroup-weighted reconstructions of these same arm-specific means were
\[
\sum_v \widehat{\mathbb E}_0[Y\mid V=v]\widehat P_0(V=v) = 0.07,
\qquad
\sum_v \widehat{\mathbb E}_1[Y\mid V=v]\widehat P_1(V=v) = 0.79.
\]
Thus, the reported arm-specific marginal means differed from their subgroup reconstructions by approximately $0.10$ and $0.09$, respectively. The same replicate had an overall AIPW ATE of $0.53$, while the common-weighted subgroup ATE was $0.80$, a discrepancy of $0.27$. For SBW($X$) in the same replicate, the arm-specific plug-in gaps were zero up to numerical precision, and its common-weight ATE aggregation gap was $0.18$. When $V$ was also included in the SBW balance functions, the common-weight ATE aggregation gap was zero up to numerical precision.

Model-based estimators can also be constructed to satisfy coherence identities, but doing so generally requires care. For example, one may fit a single joint model that includes subgroup indicators and relevant interactions, and then compute all overall and subgroup-specific summaries from that same fitted object. In such cases, model-based summaries can be made to reconcile. The distinction we emphasize is that this reconciliation is automatic for plug-in estimators. Once an empirical or weighted empirical distribution has been constructed, marginal and subgroup-specific summaries are obtained by applying the same plug-in rules to the same underlying distribution. Therefore, the arm-specific coherence identity holds by construction.

Coherence can also be assessed for treatment-effect summaries rather than arm-specific means. We investigate whether the marginal RR estimate may fall outside the convex hull of its own subgroup-specific RR estimates, $\{\widehat{RR}_{V=0},\widehat{RR}_{V=1}\}$, which we refer to below as a violation. In a simulation using the same $X$, $V$, $A$ as in the main text but with a binary outcome, with $\mathrm{logit}\,p(X,A) = -1.0+\sin(2X)+0.5V-0.25X+A(0.15+1.0V)$ and $Y\mid A,X\sim\mathrm{Bernoulli}\{p(X,A)\}$, SBW($X$)'s violation rate was consistently below both the unadjusted estimator's and AIPW($X$)'s -- the latter two were comparable to each other -- across $n=200$ to $2000$ (at $n=500$: 7\% for SBW($X$) vs.\ 13\% unadjusted and 11\% for AIPW($X$); at $n=200$: 9\% vs.\ 16\% and 17\%, respectively). Balancing $V$ restores exact coherence for SBW (0\% violations), and AIPW($X,V$) is close but not exact (1.1\% violations). The same ordering holds under variants of this data-generating process with baseline intercepts from $-1.5$ to $-1.0$.

\section{Additional results from simulations}\label{sec:appendix-simulation-results}

We describe two simulation studies: one with a binary outcome and one with a survival outcome. Both use the same NHANES-based covariates, a shared nonlinear transformation, and the same covariate views and adjustment-set constructions. They differ in their outcome-generating mechanisms, calibration procedures, estimands, and compared estimators.

\subsection{Shared simulation design}\label{sec:simulation_design}

\subsubsection{NHANES preprocessing and nonlinear covariate transformations}

Covariates are drawn from a cleaned version of the NHANES 2011--2012 dataset and consist of the ten variables listed in Table~\ref{tab:dgp_summary}. Continuous variables are standardized to mean zero and unit variance, and binary variables are left unchanged. We denote the resulting original covariates by $X_{*,1}, \dots, X_{*,10}$.

We transform $X_{*}$ into nonlinear features $X = (X_1, \dots, X_{10})$, which contribute to the outcome-generating mechanisms in both simulations. Table~\ref{tab:dgp_summary} gives the transformations and coefficients.

\begin{table}[ht!]
\centering
\begin{tabular}{lllr}
\hline
NHANES Variable & Original ($X_{*,j}$) & Transformed ($X_j$) & $\beta_j$ \\
\hline
Age & $X_{*,1}$ & $\sin(10 X_{*,1})$ & 0.70 \\
Sodium & $X_{*,2}$ & $\cos(X_{*,2}^2)$ & -0.70 \\
Glucose & $X_{*,3}$ & $\log(|X_{*,3}| + 1)$ & 0.60 \\
Creatinine & $X_{*,4}$ & $\exp(-X_{*,4})$ & 0.25 \\
Blood Urea Nitrogen & $X_{*,5}$ & $\sqrt{|X_{*,5}|}$ & -0.20 \\
Total Protein & $X_{*,6}$ & $\sin(5 X_{*,6})$ & 0.25 \\
Globulin & $X_{*,7}$ & $X_{*,7}^2$ & 0.02 \\
Female & $X_{*,8}$ & $X_{*,8}$ & 0.01 \\
Osmolality & $X_{*,9}$ & $X_{*,9}$ & -0.01 \\
Cholesterol & $X_{*,10}$ & $X_{*,10}$ & 0.00 \\
\hline
\end{tabular}
\caption{Summary of covariates, transformations, and coefficients used in both simulation studies.}
\label{tab:dgp_summary}
\end{table}

Treatment is randomized independently as $A \sim \mathrm{Bernoulli}(1/2)$. Larger values of $\abs{\beta_j}$ define more prognostic covariates. Neither simulation includes treatment-covariate interactions. 

\subsubsection{Covariate views}

To study how estimator performance depends on the degree of linearity between the outcome surface and the available covariates, we vary the covariate representation given to estimators while holding the data-generating process fixed. Each estimator uses one of three covariate views:

\begin{enumerate}
    \item \textbf{Untransformed (``Approximately nonlinear'') view}: The estimator observes the original NHANES covariates $X_*$. Since the outcome models depend on nonlinear transformations of these variables, this view induces substantial linear model misspecification.

    \item \textbf{Moderately nonlinear view}: The estimator observes engineered features that approximate, but do not exactly match, the true transformations:
    \begin{alignat*}{2}
    \tilde X_1 &= \sin(9.25 X_{*,1}), \qquad&
    \tilde X_2 &= \cos(X_{*,2}^3), \\
    \tilde X_3 &= \log(X_{*,3}^2 + 1),&
    \tilde X_4 &= \exp(-2 X_{*,4}), \\
    \tilde X_5 &= \sqrt{|X_{*,5}|},&
    \tilde X_j &= X_{*,j}, \quad j=6,\ldots,10,
    \end{alignat*}
    after which non-binary variables are standardized. This view represents intermediate linear model misspecification.

    \item \textbf{Oracle (``Approximately linear'') view}: The estimator observes the true transformed covariates $X$, corresponding to correct specification of the outcome model.
\end{enumerate}

The same covariate views are used in the survival simulation.

\subsubsection{Covariate subsets: dimension and prognostic strength}

Within each covariate view, we vary both the dimension and prognostic strength of the adjustment set. Dimension determines the number and form of covariates used, while prognostic strength determines which covariates are selected.

We consider three dimension settings:
\begin{enumerate}
\item \textbf{High}: all 10 covariates are included. Since all covariates are used, prognostic strength does not vary in this setting.

\item \textbf{Low}: three covariates are included, selected according to the prognostic-strength level described below.

\item \textbf{Trig basis}: the same three covariates as in the low-dimensional setting are used to construct trigonometric basis features. Each selected continuous covariate $x$ is first rescaled to $[-\pi,\pi]$, after which $\sin(x)$ and $\cos(x)$ are included. For the first selected covariate, we also include $\sin(2x)$ if that covariate is continuous. Binary covariates are not transformed.
\end{enumerate}

For the low and trig-basis settings, we vary prognostic strength by selecting covariates according to the magnitudes of the coefficients $\beta_j$ in Table~\ref{tab:dgp_summary}:
\begin{enumerate}
\item \textbf{Low}: the three least prognostic covariates $(X_8, X_9, X_{10})$;
\item \textbf{Medium}: one strongly, one moderately, and one weakly prognostic covariate, specifically $(X_1, X_4, X_7)$;
\item \textbf{High}: the three most prognostic covariates $(X_1, X_2, X_3)$.
\end{enumerate}

This design separates the effects of covariate representation, dimension, and prognostic strength on estimator performance.

\subsection{Binary simulation details}

Conditional on $(X, A)$, the binary outcome is generated as
\[
P(Y=1 \mid X, A)
=
\mathrm{expit}\!\big(X^\top \beta + \tau A\big),
\]
where $\mathrm{expit}(u) = (1 + e^{-u})^{-1}$. The coefficient vector $\beta$ is fixed across simulation settings, while $\tau$ controls the marginal treatment effect.

\subsubsection{Calibration of covariate views}

For the binary simulation, we quantify the linearity of each covariate view by regressing the true conditional mean $\mu(X,A)$ on the observed features and treatment. Larger $R^2$ values indicate a better linear approximation of the true conditional mean.

We compute this diagnostic using simulated datasets of size $10^7$ for each value of $\tau$. Table~\ref{tab:nonlinearity_calibration} shows that the oracle view is nearly linear, the moderately nonlinear view is intermediate, and the untransformed view induces substantial misspecification.

\begin{table}[ht!]
\centering
\begin{tabular}{lccc}
$\tau$ (total sample size) & Approx.\ nonlinear & Moderately nonlinear & Approx.\ linear \\
\hline
1.0 ($n=288$)  & 0.318 & 0.570 & 0.955 \\
0.5 ($n=1134$) & 0.239 & 0.534 & 0.967 \\
0.2 ($n=7086$) & 0.217 & 0.529 & 0.972 \\
\hline
\end{tabular} 
\caption{$R^2$ values from regressing the true conditional mean on the observed features and treatment.}
\label{tab:nonlinearity_calibration}
\end{table} 

\subsubsection{Power and sample size}

We calibrate sample sizes to achieve approximately 90\% power at $\alpha = 0.05$ for detecting the ATE. For each value of $\tau$, we simulate a dataset of size $10^7$ from the data-generating process, estimate the marginal risks $\mathbb{E}(Y \mid A=1)$ and $\mathbb{E}(Y \mid A=0)$, and use these estimates in a two-sample proportion power calculation. The resulting total sample sizes are shown in Table~\ref{tab:nonlinearity_calibration}.

\subsubsection{Estimators and inference}

We compare four estimators of the ATE, defined as the difference in marginal risks. All confidence intervals are constructed at the 95\% level.

\begin{enumerate}
    \item \textbf{Unadjusted:} The difference in sample outcome means, with standard error
    \[
    \left\{
    \frac{\widehat{\Var}_1(Y)}{N_1}
    +
    \frac{\widehat{\Var}_0(Y)}{N_0}
    \right\}^{1/2},
    \]
    where $N_a = \sum_{i=1}^n \mathbbm{1}(A_i=a)$, and $\widehat{\Var}_a(Y)$ denotes the sample variance of the outcomes among participants assigned to arm $a$.

    \item \textbf{SBW:} Stable balancing weights are computed separately within each treatment arm to match the overall covariate means. We first compute the closed-form weights in \eqref{eq:sbw_closed_form}. If any weights are negative, we instead solve the corresponding quadratic program with non-negativity constraints. The ATE is estimated as a weighted difference in means. Standard errors are obtained using the nonparametric bootstrap with $1500$ resamples. If the SBW point estimate fails in a Monte Carlo replicate, the unadjusted estimator is substituted and a failure indicator is recorded. If weight computation fails in a bootstrap replicate, the unadjusted estimator is substituted for that resample and the bootstrap failure rate is recorded.

    \item \textbf{AIPW (linear):} Outcome regressions  are fit using a linear model for $Y$ on treatment and the selected covariates. The propensity score is estimated by the empirical treatment probability,
    $
    \hat{\pi}_1 = \frac{1}{n} \sum_{i=1}^n A_i.
    $
    The ATE is estimated using the resulting AIPW score, and standard errors are computed from its empirical influence function.

    \item \textbf{AIPW (RF):} Outcome regressions are estimated using random forests fit separately within each treatment arm. We use $2$-fold cross-fitting: models are trained on one fold and evaluated on the other, and predictions are combined across folds. Random forests are fit using the \texttt{ranger} package with 1000 trees and minimum node size 5. The ATE is estimated by substituting the cross-fitted outcome predictions and the same constant propensity estimate $\hat{\pi}_1$ as above into the AIPW score. Standard errors are computed from the empirical influence function.
\end{enumerate}

\subsubsection{Implementation details}

Each simulation scenario is defined by a combination of sample size, covariate view, dimension, and prognostic level. For each scenario, we generate 2000 Monte Carlo replicates. Within each replicate, covariates are resampled from NHANES, treatment is assigned at random, outcomes are generated from the logistic model, and all estimators are computed.

\subsection{Survival simulation details}

\subsubsection{Data-generating mechanism}

Event times are generated from a proportional hazards model with exponential baseline hazard. Conditional on $(X,A)$,
\[
\lambda(t \mid X,A)
=
\lambda_0 \exp\!\big(X^\top \beta + \tau A\big),
\]
where $\lambda_0=0.1$ and $\beta$ is the shared coefficient vector from Table~\ref{tab:dgp_summary}. Negative values of $\tau$ correspond to beneficial treatment effects.

Independent censoring times are generated as $ C \sim \mathrm{Exponential}(\lambda_c). $ The observed time is $\tilde T = \min(T,C)$. We estimate the SR $ S_1(t_0)/S_0(t_0) $ at $t_0=5$.

\begin{table}[ht!]
\centering
\begin{tabular}{lcccc}
$\tau$ (total sample size) & $\lambda_c$ & Cens.\ rate & Power & $S_1(t_0)/S_0(t_0)$ \\
\hline
$-0.831$ ($n=340$)  & 0.00718 & 0.150 & 0.899 & 1.302 \\
$-0.414$ ($n=1294$) & 0.00924 & 0.151 & 0.900 & 1.160 \\
$-0.173$ ($n=7100$) & 0.01050 & 0.150 & 0.900 & 1.068 \\
\hline
\end{tabular}
\caption{Calibrated survival simulation parameters. Censoring rates and power are estimated from the final calibration check; the survival ratio is the Monte Carlo truth at $t_0=5$.}
\label{tab:surv_calibration}
\end{table}

\subsubsection{Calibration of treatment effect and censoring}

For each sample size, we calibrated the censoring rate $\lambda_c$ and treatment-effect parameter $\tau$ to target approximately 15\% censoring and 90\% power at $\alpha=0.05$. Power was evaluated using the unadjusted Kaplan--Meier estimator for $\log{S_1(t_0)/S_0(t_0)}$, with rejection based on whether the Wald interval excluded zero. We selected $\lambda_c$ to yield approximately the target censoring rate, calibrated $\tau$ conditional on $\lambda_c$, and then checked both power and censoring by simulation. For each calibrated setting, the true values of $S_a(t_0)$ were computed by Monte Carlo integration over the covariate distribution, averaging $\exp{-\lambda_0\exp(X^\top\beta+\tau a)t_0}$ separately for $a=0$ and $a=1$. The resulting survival ratio was used as the truth for evaluating bias and coverage.

\subsubsection{Estimators and inference}

We compare four estimators of the survival ratio $S_1(t_0)/S_0(t_0)$.

\begin{enumerate}
    \item \textbf{Unadjusted:} Kaplan--Meier curves are estimated separately by treatment arm. Inference is performed on the log-ratio scale, $\log \hat S_1(t_0) - \log \hat S_0(t_0)$,
    using a delta-method standard error based on Greenwood's formula for the Kaplan--Meier estimates, implemented through the log--log transformation \citep[Section~1.4]{kalbfleisch2002statistical}.

    \item \textbf{SBW:} Stable balancing weights are computed as in the binary simulation and applied within a weighted Kaplan--Meier estimator. Standard errors are obtained using the nonparametric bootstrap with $1500$ resamples. If weight computation or survival estimation fails in a bootstrap replicate, the unadjusted estimator is substituted and the failure rate is recorded.

    \item \textbf{IPW:} Stabilized inverse probability weights are estimated from a logistic propensity score model and applied within a weighted Kaplan--Meier estimator. Inference is performed using the same bootstrap procedure as for SBW.

    \item \textbf{CFsurvival:} We use \texttt{CFsurvival} to estimate the survival ratio with a cross-fitted, influence-function-based estimator. Event and censoring nuisance functions are estimated using Kaplan--Meier, Cox, and Weibull working models, with $5$-fold cross-fitting. Confidence intervals are constructed on the log scale using the estimated influence function.

    In rare cases, \texttt{CFsurvival} produced numerical errors. These replicates were treated as failures, replaced by the unadjusted estimator, and recorded.
\end{enumerate}

\subsubsection{Implementation details}

Each simulation scenario is defined by sample size, covariate view, dimension, and prognostic level, as described in Section~\ref{sec:simulation_design}. For each scenario, we generate 2000 Monte Carlo replicates. Within each replicate, covariates are resampled from NHANES, event and censoring times are generated, and all estimators are computed. Bootstrap inference for SBW and IPW is performed separately within each replicate.

\subsection{Additional binary simulation results}

We present additional diagnostics for the binary outcome simulations, focusing on Type I error, confidence interval coverage, bias, and numerical stability.

Empirical Type I error rates are shown in Figure~\ref{fig:binary_t1e}. Overall, all estimators remain close to the nominal $5\%$ level. The largest upward deviations are observed for AIPW (linear), and to a lesser extent SBW, most often in higher-dimensional settings. These deviations are modest and generally around $5.5\%$--$6.0\%$.

\begin{figure}
    \centering
    \includegraphics[width=1\linewidth]{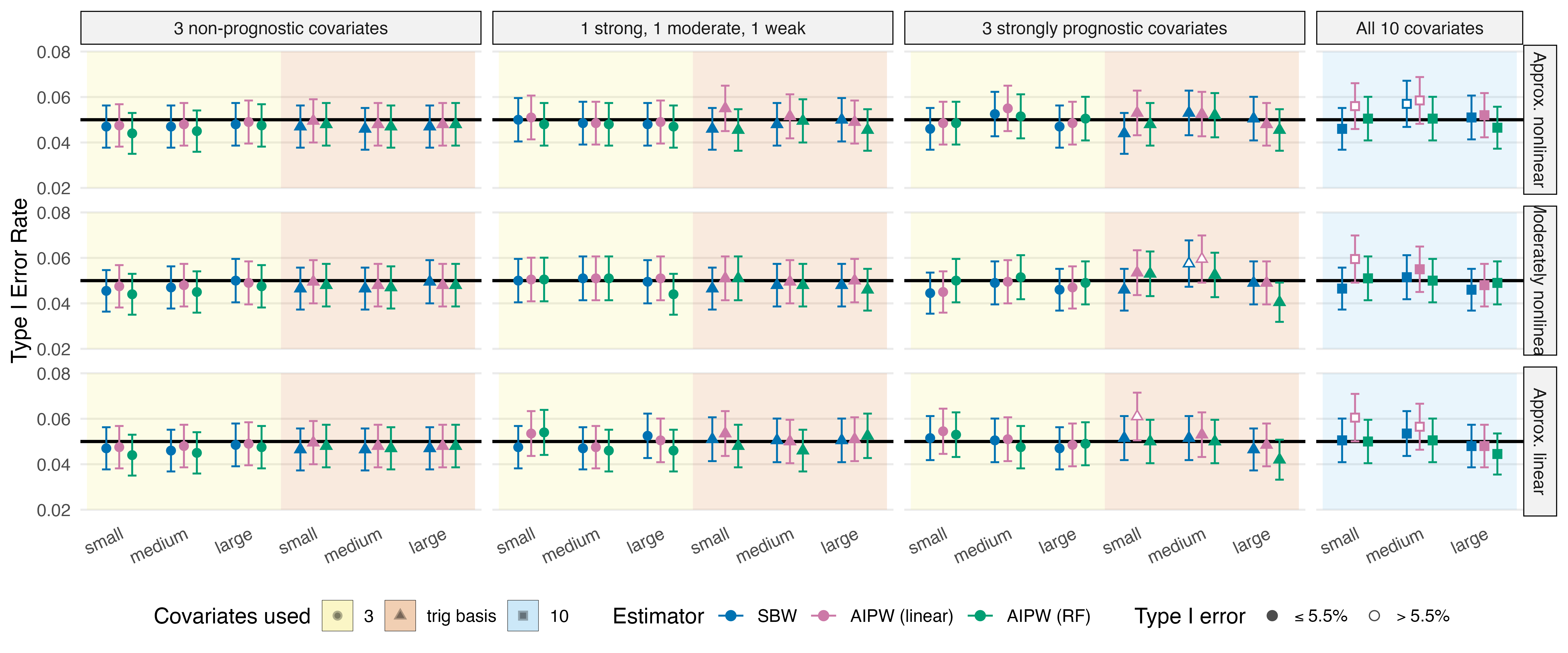}
    \caption{Empirical Type I error for the ATE across sample sizes, covariate views, and covariate subsets. The horizontal line marks the nominal 5\% level.}
    \label{fig:binary_t1e}
\end{figure}

Empirical $95\%$ confidence interval coverage in the power simulations is shown in Figure~\ref{fig:binary_coverage}. Coverage is close to nominal across most settings, with occasional mild undercoverage for AIPW (linear) and SBW in smaller-sample or more complex adjustment settings.

\begin{figure}
    \centering
    \includegraphics[width=1\linewidth]{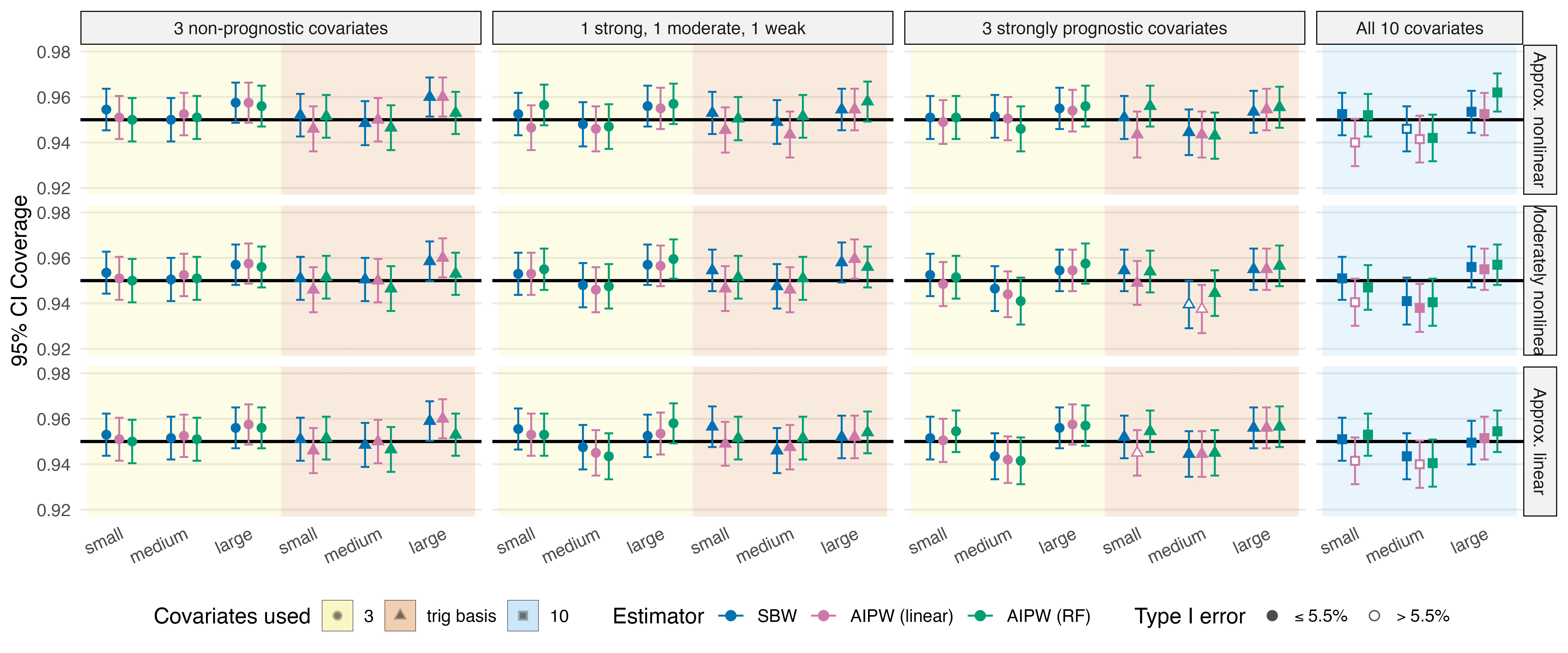}
    \caption{Empirical 95\% CI coverage for the ATE across sample sizes, covariate views, and covariate subsets. The horizontal line marks nominal coverage.}
    \label{fig:binary_coverage}
\end{figure}

Bias is negligible across the binary simulations: empirical mean bias is on the order of $10^{-4}$, with no meaningful pattern across covariate views, sample sizes, or adjustment sets.

The SBW estimator is computationally stable in these simulations. The point estimate failure rate is zero across all Monte Carlo replications, and at least one bootstrap failure occurs in only $0.094\%$ of Monte Carlo replicates (236 out of 252{,}000). When failures do occur, they arise from the quadratic program (\texttt{solve.QP} in the \texttt{quadprog} package) reporting that the constraints are inconsistent, implying that no feasible weight vector satisfies the exact balance conditions. This is likely due to near-collinearity or extreme covariate values shrinking the feasible region, and is observed only in the smallest sample size, primarily under the trig-basis view.

\subsection{Additional survival simulation results}

We present additional diagnostics for the survival outcome simulations. Empirical Type I error rates are shown in Figure~\ref{fig:surv_t1e}. All estimators maintain Type I error close to the nominal $5\%$ level. No cell exceeds $5.5\%$, and the largest value is approximately $5.5\%$.

\begin{figure}
\centering
\includegraphics[width=1\linewidth]{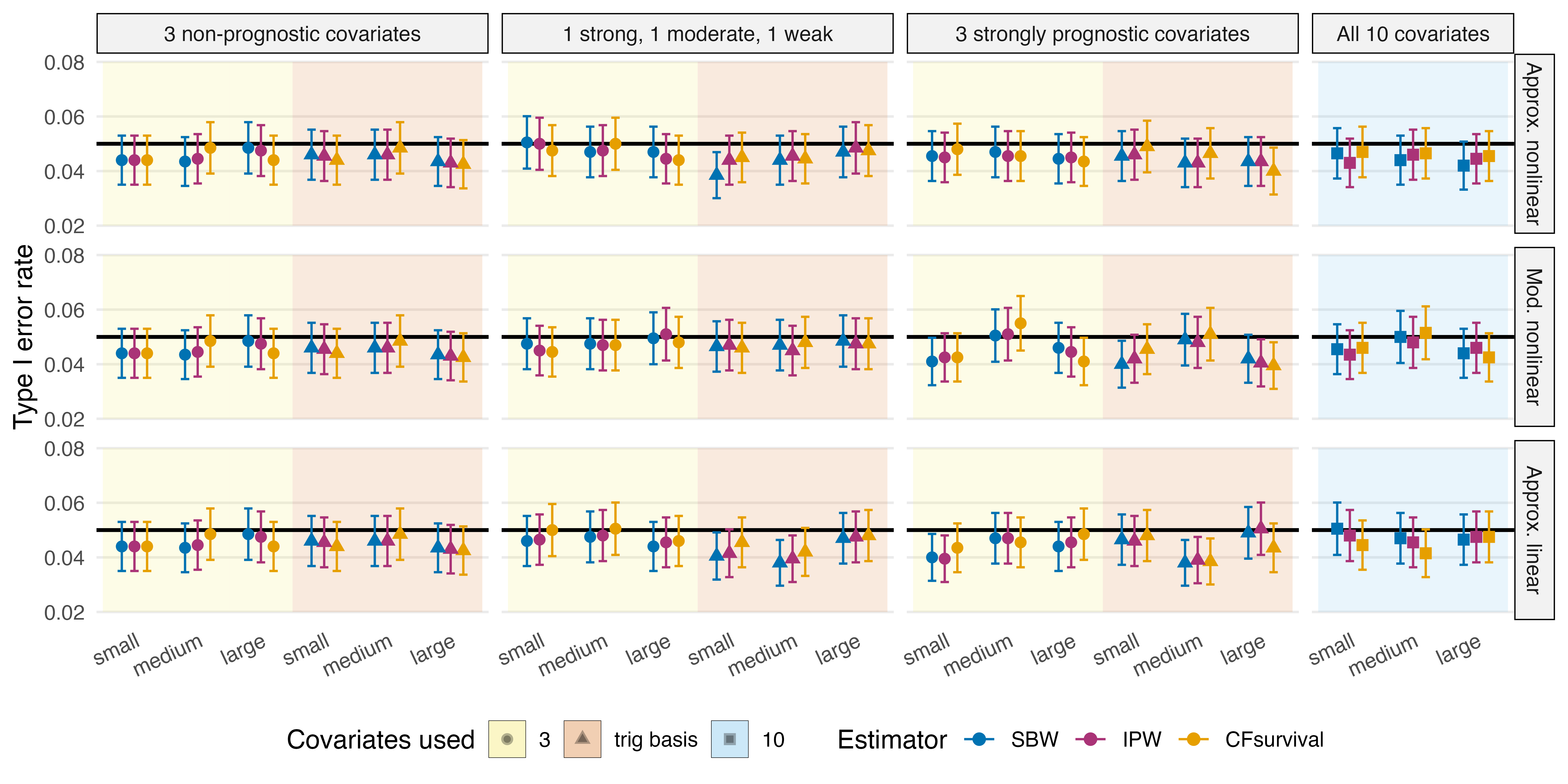}
\caption{Empirical Type I error for the SR across sample sizes, covariate views, and covariate subsets. The horizontal line marks the nominal 5\% level. No cell exceeded 5.5\%.}
\label{fig:surv_t1e}
\end{figure}

Empirical $95\%$ confidence interval coverage is shown in Figure~\ref{fig:surv_coverage}. Coverage remains close to nominal across estimators and settings, with only minor deviations.

\begin{figure}
\centering
\includegraphics[width=1\linewidth]{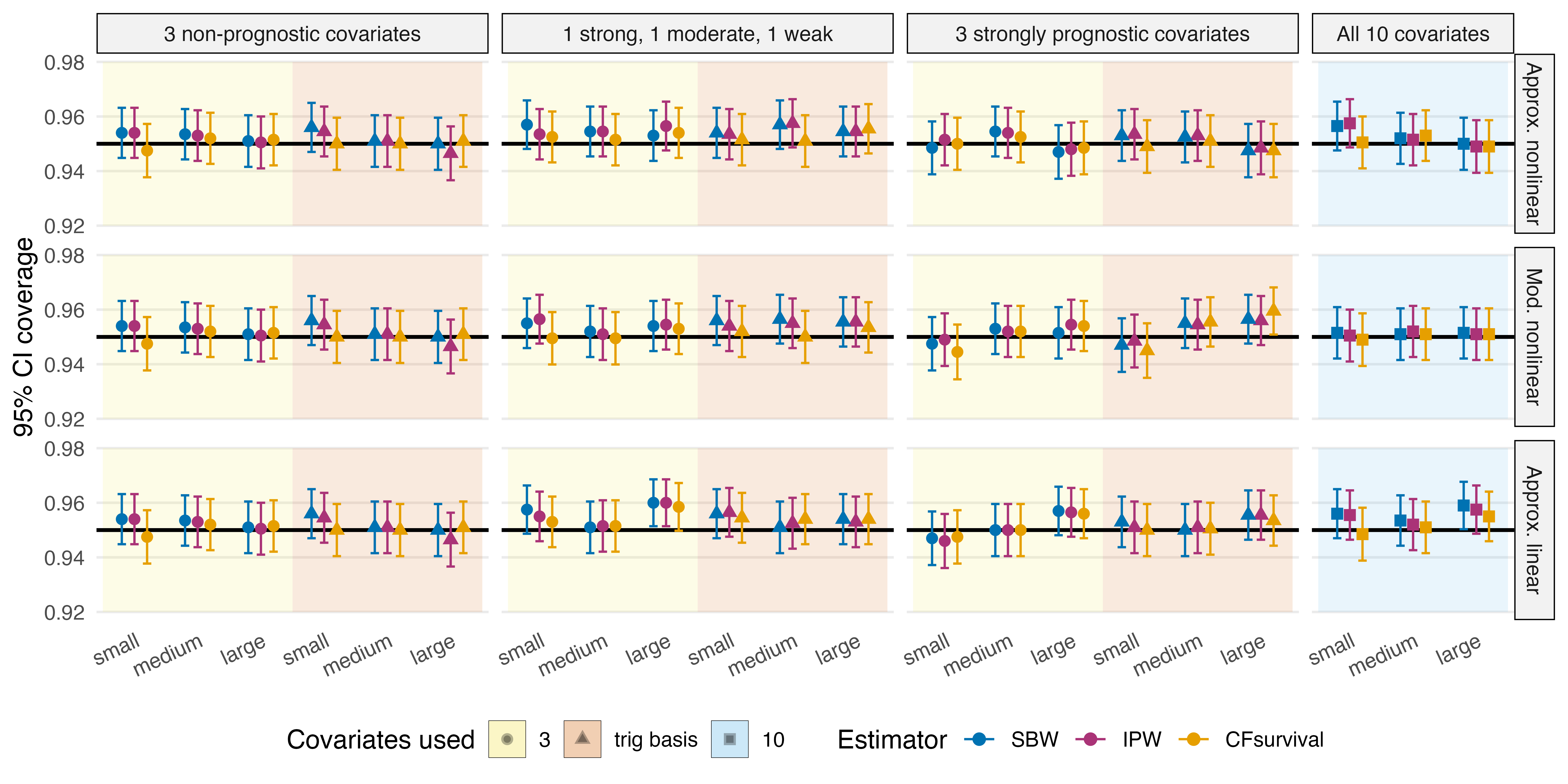}
\caption{Empirical 95\% CI coverage for the SR across sample sizes, covariate views, and covariate subsets. The horizontal line marks nominal coverage.}
\label{fig:surv_coverage}
\end{figure}

Mean signed bias on the original SR scale was small across the survival simulations and decreased with sample size. In the smallest sample-size setting, CFsurvival showed the largest finite-sample bias, with mean signed bias approaching 0.02 in some settings, corresponding to roughly 1.5\% of the true estimand value reported in Table~\ref{tab:surv_calibration}. This bias was systematically positive across covariate views and adjustment sets. In one cell, a single finite but extreme CFsurvival estimate inflates the mean signed bias to 0.34; that replicate is not flagged as a numerical failure, since the estimate is finite. Both SBW and IPW were closer to zero, with mean signed bias below 0.01 in the smallest sample size. By the medium and large sample sizes, bias was negligible for all estimators.

The SBW estimator has no observed Monte Carlo failures across scenarios, and bootstrap failures are rare: at least one bootstrap failure occurs in $0.034\%$ of Monte Carlo replicates (86 out of 252{,}000), all in the smallest sample size and almost all under the trig-basis view. CFsurvival occasionally produces non-finite estimates, in which case the estimate is replaced with the unadjusted estimator and the failure is recorded. These events are rare (29 of 252{,}000 replications, 0.01\%) and occur primarily in the smallest sample size (23 of 29 failures).

\section{Theoretical results}\label{section:theoretical_results}

\subsection{Hadamard differentiability of the SBW functional}

\subsubsection{Overview}

Below, we study the large-sample behavior of the SBW estimator by viewing it as a plug-in estimator. The central object is a functional $\Phi$, constructed so that its empirical plug-in value coincides with the SBW estimator. The main task is to show that $\Phi$ is Hadamard differentiable at $P$, in the sense described below. This allows us to apply the functional delta method to show asymptotic normality and bootstrap validity \citep[Theorems~20.8,~23.9]{van2000asymptotic}. Under an additional condition, the same derivative also yields an asymptotically linear representation and an influence function.

We use the subscript $n$ in two standard ways: to denote sample size in empirical quantities, such as $P_n$, and to index generic sequences in differentiability arguments, such as $t_n$ and $h_n$. The relevant meaning is determined by the object being indexed.

Following \citet[Section~20.2]{van2000asymptotic}, we use the following notion of Hadamard differentiability, with an explicit convention for the domain of the derivative when the tangential set is not itself a linear space. Let $\phi:\mathbb D_\phi\to\mathbb E$ be a map defined on a subset $\mathbb D_\phi$ of a normed space $\mathbb D$, taking values in another normed space $\mathbb E$, and let $\theta\in\mathbb D_\phi$. Let $\mathbb D_0\subseteq\mathbb D$ denote the set of allowable tangential directions. We say that $\phi$ is Hadamard differentiable at $\theta$ tangentially to $\mathbb D_0$ if there exists a continuous linear map $\dot\phi_\theta:\mathcal L_0\to\mathbb E$, with $ \mathcal L_0 := \overline{\operatorname{span}(\mathbb D_0)} \subseteq \mathbb D, $ where the closure is taken in the norm of $\mathbb D$, such that the following condition holds: for every sequence $t_n\downarrow0$ and every sequence $h_n\in\mathbb D$ with $h_n\to h\in\mathbb D_0$ and $\theta+t_nh_n\in\mathbb D_\phi$ for all $n$,
\[
\left\|
\frac{\phi(\theta+t_nh_n)-\phi(\theta)}{t_n}
-
\dot\phi_\theta[h]
\right\|_{\mathbb E}
\to0.
\]

Thus $\mathbb D_0$ determines the directions along which differentiability is checked, while $\mathcal L_0$ is the closed linear space on which we define the derivative. If $\mathbb D_0$ is already a closed linear subspace of $\mathbb D$, then $\mathcal L_0=\mathbb D_0$. In the arguments below, we sometimes obtain the derivative formula first for allowable directions $h\in\mathbb D_0$, or for finite linear combinations of such directions. If the resulting map extends continuously and linearly to $\mathcal L_0$, that extension is the derivative used in the definition above.

Let $\mathcal Z:=\mathcal X\times\{0,1\}\times\mathcal Y$ denote the support of $Z=(X,A,Y)$, where $\mathcal X\subset\mathbb R^{q+1}$ is bounded and includes the intercept coordinate. Let $\mathcal P$ denote the class of probability laws on $\mathcal Z$ considered below. We assume throughout this section that $\mathcal P$ contains the population law, the empirical laws, and the point masses $\delta_z$ for $z\in\mathcal Z$. Here $\mathcal Y$ may be a subset of Euclidean space or a finite outcome space, including ordinal or unordered categorical outcomes. For real-valued or ordinal outcomes, inequalities such as $x\le s$ and $y\le t$ are interpreted coordinatewise, with the usual ordinal ordering used for ordinal $Y$. For unordered categorical outcomes, the $Y$-threshold indicators can be replaced by categorical indicators; references below to CDFs should then be read as referring to the corresponding indicator-indexed probability maps. The differentiability arguments are analogous.

Define the joint indicator class
\begin{equation}
\label{eq:V-class}
\mathcal V
:=
\left\{
v_{s,a,t}:(x,a',y)\mapsto
\mathbf 1\{x\le s,\ a'\le a,\ y\le t\}:
(s,a,t)\in \mathcal X \times\{0,1\}\times\mathcal Y
\right\}. \nonumber
\end{equation}
For any distribution $P$ of $Z$, the map
\[
    (s,a,t)\mapsto Pv_{s,a,t}
    =
    P(X\le s,A\le a,Y\le t)
\]
is the joint cumulative distribution function of $(X,A,Y)$, or the corresponding indicator-indexed probability map when $Y$ is unordered categorical.

We identify each $P\in\mathcal P$ with its evaluation map $v\mapsto Pv$ on $\mathcal V$. Define the embedded domain
\begin{equation}
\label{eq:D-Phi-domain}
\mathbb D_\Phi
:=
\{v\mapsto Pv:P\in\mathcal P\}
\subseteq \ell^\infty(\mathcal V). \nonumber
\end{equation}

We also define the arm-specific indicator class
\begin{equation}
\label{eq:Vprime-class}
\mathcal V_{XY}
:=
\left\{
v_{s,t}:(x,y)\mapsto \mathbf 1\{x\le s,\ y\le t\}:
(s,t)\in \mathcal X \times\mathcal Y
\right\}, \nonumber
\end{equation}
and the corresponding arm-specific embedded domain
\begin{equation}
\label{eq:D-Vprime-domain}
\mathbb D_{\mathcal V_{XY}}
:=
\{v\mapsto P' v:P' \text{ is a probability law on }\mathcal X\times\mathcal Y\}
\subseteq \ell^\infty(\mathcal V_{XY}). \nonumber
\end{equation}

The construction of the SBW functional proceeds by composing simpler maps. First, define
\begin{equation}
\label{eq:lambda}
\lambda:
\mathbb D_\Phi
\to
\mathbb D_{\mathcal V_{XY}}\times\mathbb D_{\mathcal V_{XY}}\times(0,1),
\qquad
\lambda(P):=(P_0,P_1,\pi_1(P)),
\end{equation}
where $P_a$ denotes the conditional law of $(X,Y)\mid A=a$, and $\pi_1(P)=P(A=1)$. For $a\in\{0,1\}$, define the coordinate projection
\begin{equation}
\label{eq:projection}
\mathrm{pr}_a:
\ell^\infty(\mathcal V_{XY})\times\ell^\infty(\mathcal V_{XY})\times\mathbb R
\to
\ell^\infty(\mathcal V_{XY}),
\qquad
\mathrm{pr}_a(Q_0,Q_1,r):=Q_a.
\end{equation}
When restricted to $\mathbb D_{\mathcal V_{XY}}\times\mathbb D_{\mathcal V_{XY}}\times(0,1),$ this map sends $(P_0,P_1,\pi)$ to the arm-specific law $P_a\in\mathbb D_{\mathcal V_{XY}}$.

Next, when $\mathbb E_{P_a}[XX^\top]$ is invertible, define the population SBW coefficient in arm $a$ by
\begin{equation}
\label{eq:gamma}
\gamma_a(P_0,P_1,\pi)
=
\mathbb E_{P_a}[XX^\top]^{-1}
\big[
\mathbb E_{P_a}[X]
-
\{(1-\pi)\mathbb E_{P_0}[X]+\pi\mathbb E_{P_1}[X]\}
\big].
\end{equation}
When $\mathbb E_{P_a}[XX^\top]$ is not invertible, set $\gamma_a(P_0,P_1,\pi)=0$, so that the weighting map leaves $P_a$ unchanged. Equivalently, when $(P_0,P_1,\pi)=\lambda(P)$,
\[
\gamma_a\{\lambda(P)\}
=
\mathbb E_{P_a}[XX^\top]^{-1}
\left\{
\mathbb E_{P_a}[X]-\mathbb E_P[X]
\right\}.
\]
Thus $\gamma_a\{\lambda(P)\}$ is the population analogue of the closed-form balancing coefficient.

Given an arm-specific law $P_a$ and a coefficient vector $g\in\mathbb R^{q+1}$, define the weighted distribution map
\begin{equation}
\label{eq:f}
f:
\mathbb D_{\mathcal V_{XY}}\times\mathbb R^{q+1}
\to
\ell^\infty(\mathcal V_{XY}),
\qquad
f(P_a,g)(v)
=
\int (1-g^\top x)v(x,y)\,P_a(dx,dy),
\quad v\in\mathcal V_{XY}.
\end{equation}
The codomain is $\ell^\infty(\mathcal V_{XY})$, rather than $\mathbb D_{\mathcal V_{XY}}$, because the weighted law may not be a probability law for arbitrary $g$.

We then define the arm-specific weighted distribution functional
\begin{equation}
\label{eq:eta-map}
\eta_a:
\mathbb D_\Phi\to \ell^\infty(\mathcal V_{XY}),
\qquad
\eta_a(P)
:=
f\bigl(\mathrm{pr}_a\{\lambda(P)\},
        \gamma_a\{\lambda(P)\}\bigr).
\end{equation}
With $d$ denoting the dimension of the estimand, the preceding maps determine a unique SBW functional on $\mathbb D_\Phi$. Define
\begin{equation}
\label{eq:Phi-composition}
\Phi:
\mathbb D_\Phi\to\mathbb R^d,
\qquad
\Phi(P)
=
\Psi\bigl(\eta_0(P),\eta_1(P)\bigr).
\end{equation}
This construction has two important consequences. First, on the randomized trial model
\[
\mathcal M:=\{P\in\mathcal P:A\independent X\text{ under sampling from }P\},
\]
the functional agrees with the target parameter:
\begin{equation}
\label{eq:Phi-target-on-model}
    \Phi(P)=\Psi(P_0,P_1),
    \qquad P\in\mathcal M.
\end{equation}
Indeed, if $P\in\mathcal M$, then $\mathbb E_{P_a}[X]=\mathbb E_P[X]$ for $a\in\{0,1\}$, so $\gamma_a\{\lambda(P)\}=0$. Hence $f(P_a,\gamma_a\{\lambda(P)\})=P_a$, and therefore $\eta_a(P)=P_a$. Second, when evaluated at the empirical distribution $P_n$, the same
construction gives the closed-form SBW estimator $\Phi(P_n)=\tilde\psi_n^{\mathrm{\,cf}}.$ Here $\tilde\psi_n^{\mathrm{\,cf}}$ denotes the estimator obtained from the closed-form SBWs, i.e., the solution to the equality-constrained problem without imposing nonnegativity. The practical SBW estimator, denoted $\widehat{\psi}^{\,\mathrm{sbw}}$ in Algorithm~\ref{alg:gsbw} of the main text, is computed from the constrained quadratic program \eqref{eq:sbw_primal}. In Appendix~\ref{sec:equivalence-constrained-sbw}, we show that these two estimators coincide with probability tending to one. Thus $\Phi$ is statistically aligned with the target estimand and asymptotically aligned with the estimator computed in practice.

The proof of Hadamard differentiability proceeds by establishing differentiability of the component maps $\lambda$, $\mathrm{pr}_a$, $\gamma_a$, $f$, and $\Psi$, and then applying the chain rule for Hadamard differentiability. Combining these component derivatives yields the derivative of $\Phi$.

\subsubsection{Expectation functionals}\label{sec:expectation_functionals}

We begin by establishing Hadamard differentiability of expectation-type functionals, which underpin the maps $\lambda$, $\gamma_a$, and $f$ introduced in the previous section. These results allow us to handle integrals of the form $\int g_v\,dP$ uniformly over $v\in\mathcal V_{XY}$, which will be used repeatedly in the differentiability analysis of $\Phi$. Lemma~\ref{lemma:expectation} treats the scalar-valued case, while Lemma~\ref{lemma:vector_expectation} extends the result to vector-valued integrands by a coordinatewise argument.

Let the observed data vector $Z$ have dimension $k$. Since the support of $Z$ is bounded, we may regard the corresponding distribution functions as defined on a compact rectangle $\mathcal Z$, after rescaling coordinates if needed. Let $\mathbb D_S$ denote the multivariate Skorokhod space of functions $F:\mathcal Z\to\mathbb R$ that are right-continuous with left limits (càdlàg) in each coordinate, equipped with the sup norm
\[
\|F\|_\infty := \sup_{z\in\mathcal Z} |F(z)|.
\]
Let $\mathbb D_\varphi:=\{(s,a,t)\mapsto Pv_{s,a,t} : P\in\mathcal{P}\}\subset\mathbb D_S$ be the set of multivariate distribution functions on $\mathcal Z$. For $F\in\mathbb D_\varphi$, define the tangent cone
\begin{equation}
\label{eq:Dphi-DF}
\mathbb D_F
:=
\{\alpha(F_1-F):F_1\in\mathbb D_\varphi,\ \alpha>0\}
\subset\mathbb D_S. 
\end{equation}

The set $\mathbb D_F$ captures the allowable tangential directions, but it is not generally closed under arbitrary linear combinations. To make the derivative a linear map, we enlarge $\mathbb D_F$ to the closed linear subspace generated by $\mathbb D_F$: that is, we take the span of $\mathbb D_F$ and then close it in the sup norm. Define
\begin{equation}
\label{eq:LF}
\mathcal L_F
:=
\overline{\operatorname{span}(\mathbb D_F)}
\subseteq \mathbb D_S.
\end{equation}
Thus $\mathbb D_F$ is the set of allowable tangential directions, while
$\mathcal L_F$ is the closed linear space on which the derivative will
be defined.

For each $v=v_{s,t}\in\mathcal V_{XY}$, let $g_v:\mathcal Z\to\mathbb R$ be a càdlàg function of bounded Hardy--Krause variation on $\mathcal Z$, denoted $\|\cdot\|_{HK}$ (see \cite{owen2005multidimensional}; equivalent to the sectional variation of \cite{gill1993multivariate}). In the lemma below, we let $\varphi$ denote the functional
\begin{equation}
\label{eq:expectation-functional}
\varphi:\mathbb D_\varphi \to \ell^\infty(\mathcal V_{XY}),
\qquad
\varphi(F)(v) := \int_{\mathcal Z} g_v(z)\, dF(z),
\qquad v\in\mathcal V_{XY}.
\end{equation}

\begin{lemma}[Hadamard differentiability of expectation functionals]
\label{lemma:expectation}
Fix $F\in\mathbb D_\varphi$. If \\ $\sup_{v\in\mathcal V_{XY}} \|g_v\|_{HK} < \infty$, then the map $\varphi$ in \eqref{eq:expectation-functional} is Hadamard differentiable at $F$ tangentially to $\mathbb D_F$. Its derivative is
the continuous linear map $\dot\varphi_F:\mathcal L_F\to \ell^\infty(\mathcal V_{XY}).$ For $h\in\mathcal L_F$, this derivative is given by 
\begin{equation}
    \dot\varphi_F[h](v)
    =
    \int_{\mathcal Z}g_v(z)\,dh(z),
    \qquad v\in\mathcal V_{XY}. \label{eq:varphi-dot}
\end{equation}
\end{lemma}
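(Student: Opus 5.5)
The map $\varphi$ is the restriction to $\mathbb D_\varphi$ of a linear map, so the plan is to prove the stronger statement that this linear map is bounded in the sup norm on the relevant space. Hadamard differentiability, with derivative equal to the map itself, then follows at once.

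\textbf{Step 1: a bound by integration by parts.} For $h$ in $\operatorname{span}(\mathbb D_F)$, each such $h$ is a finite linear combination of differences of distribution functions. Hence $h$ is càdlàg and of bounded Hardy--Krause variation, and $\int g_v\,dh$ is well defined as a Lebesgue--Stieltjes integral. The key tool is the multivariate integration-by-parts formula for Hardy--Krause functions (Owen 2005; Gill, van der Laan and Wellner 1993). Using it, I would write
\[
\int_{\mathcal Z} g_v\,dh
\]
as a finite sum over faces of the rectangle $\mathcal Z$. Each term is an integral of a restriction of $h$ against the signed measure generated by the corresponding section of $g_v$, or an evaluation of $h$ at a corner. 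This yields a bound of the form
\[
\Bigl|\int g_v\,dh\Bigr|\le C\,\|g_v\|_{HK}\,\|h\|_\infty ,
\]
with $C$ depending only on the dimension $k$. Taking the supremum over $v$ and using $\sup_v\|g_v\|_{HK}<\infty$ gives
\[
\|\dot\varphi_F[h]\|_{\ell^\infty(\mathcal V_{XY})}\le M\|h\|_\infty .
\]

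\textbf{Step 2: extension to $\mathcal L_F$.} Because the map $h\mapsto\bigl(v\mapsto\int g_v\,dh\bigr)$ is linear and Lipschitz in the sup norm on $\operatorname{span}(\mathbb D_F)$, it extends uniquely to a continuous linear map on the closure $\mathcal L_F$. I would take \eqref{eq:varphi-dot} to be defined by this extension. The integration-by-parts expression stays meaningful for limits $h\in\mathcal L_F$, since it involves only $h$ integrated against finite measures built from $g_v$.

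\textbf{Step 3: the difference quotient.} Take $t_n\downarrow0$ and $h_n\to h\in\mathbb D_F$ with $F+t_nh_n\in\mathbb D_\varphi$. Then $h_n=(F_n-F)/t_n$ for distribution functions $F_n$, so $h_n\in\operatorname{span}(\mathbb D_F)$. Linearity gives
\[
\frac{\varphi(F+t_nh_n)-\varphi(F)}{t_n}=\dot\varphi_F[h_n].
\]
The Step 1 bound then gives
\[
\|\dot\varphi_F[h_n]-\dot\varphi_F[h]\|\le M\|h_n-h\|_\infty\to0 ,
\]
which is exactly the required convergence.

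\textbf{Main obstacle.} The main obstacle is Step 1: getting a version of the multivariate integration-by-parts inequality that is uniform over $v$. This requires care with boundary terms on the lower-dimensional faces and with the càdlàg conventions at jump points, so that boundary atoms are not double counted. I would first try to cite the Owen/Gill bound directly; failing that, I would prove it by induction on the dimension, one coordinate at a time. A secondary point is the step of Step 2 that the extended integral on $\mathcal L_F$ is well defined, which the Lipschitz extension handles abstractly.
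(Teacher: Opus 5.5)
Your proposal is correct and follows essentially the same route as the paper. Linearity of the Lebesgue--Stieltjes integral in the integrator turns the difference quotient into $\int g_v\,dh_n$. The uniform integration-by-parts bound $|\int g_v\,d(h_n-h)|\le c_k\|g_v\|_{HK}\|h_n-h\|_\infty$, which the paper simply cites from Eq.~19 of \citet{gill1993multivariate} rather than proving, then gives the convergence and also the unique continuous linear extension from $\operatorname{span}(\mathbb D_F)$ to $\mathcal L_F$; you establish boundedness first and deduce differentiability, while the paper does the reverse, which is immaterial.
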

\begin{proof}

Let $\epsilon_n\downarrow0$ and $h_n\to h$ in $\|\cdot\|_\infty$, with $h_n,h\in\mathbb D_F$ and $F+\epsilon_n h_n\in\mathbb D_\varphi$. We verify the Hadamard differentiability conditions for the candidate derivative $\dot\varphi_F$ in \eqref{eq:varphi-dot}. By linearity of the Lebesgue--Stieltjes integral in the integrator,
\[
\frac{\varphi(F+\epsilon_n h_n)-\varphi(F)}{\epsilon_n}(v)
=
\int_{\mathcal Z} g_v\,dh_n.
\]
Therefore,
\[
\left\|
\frac{\varphi(F+\epsilon_n h_n)-\varphi(F)}{\epsilon_n}
-
\dot\varphi_F[h]
\right\|_{\ell^\infty(\mathcal V_{XY})}
=
\sup_{v\in\mathcal V_{XY}}
\left|
\int_{\mathcal Z} g_v\,d(h_n-h)
\right|.
\]
By the multivariate Lebesgue--Stieltjes integration-by-parts bound in Eq.~19 of \citet{gill1993multivariate}, there exists a constant $c_k$, depending only on the dimension of $\mathcal Z$, such that
\[
\left|
\int_{\mathcal Z} g_v\,d(h_n-h)
\right|
\le
c_k \|g_v\|_{HK}\|h_n-h\|_\infty .
\]
Taking suprema over $v\in\mathcal V_{XY}$ gives
\[
\sup_{v\in\mathcal V_{XY}}
\left|
\int_{\mathcal Z} g_v\,d(h_n-h)
\right|
\le
c_k
\left(\sup_{v\in\mathcal V_{XY}}\|g_v\|_{HK}\right)
\|h_n-h\|_\infty
\to0.
\]
Thus $\varphi$ is Hadamard differentiable at $F$ tangentially to
$\mathbb D_F$.

It remains to specify the linear domain of the derivative. We extend the displayed formula from $\mathbb D_F$ to $\operatorname{span}(\mathbb D_F)$ by linearity of the Lebesgue--Stieltjes integral in the integrator. For continuity, the same integration-by-parts bound gives, for any $h_1,h_2\in\operatorname{span}(\mathbb D_F)$,
\[
\|\dot\varphi_F[h_1]-\dot\varphi_F[h_2]\|_{\ell^\infty(\mathcal V_{XY})}
\le
c_k
\left(\sup_{v\in\mathcal V_{XY}}\|g_v\|_{HK}\right)
\|h_1-h_2\|_\infty .
\]
Hence $\dot\varphi_F$ is bounded and linear on $\operatorname{span}(\mathbb D_F)$. Since $\operatorname{span}(\mathbb D_F)$ is dense in $\mathcal L_F$, the map of \eqref{eq:varphi-dot} extends uniquely to a continuous linear map on $\mathcal L_F$.
\end{proof}

\bigskip We now extend Lemma~\ref{lemma:expectation} to vector-valued integrands, which will be needed for maps such as $\gamma_a$ that involve vector and matrix expectations. Adopt the setup and notation of Lemma~\ref{lemma:expectation}. Fix $m\in\mathbb N$, and for each $v\in\mathcal V_{XY}$ let
\[
g_v:\mathcal Z\to\mathbb R^m,
\qquad
g_v=(g_{v,1},\dots,g_{v,m})^\top,
\]
where the $j$th coordinate $g_{v,j}:\mathcal Z\to\mathbb R$ is a
càdlàg function. Define
\[
\varphi:\mathbb D_\varphi \to
\ell^\infty(\mathcal V_{XY};\mathbb R^m),
\qquad
\varphi(F)(v)
:=
\int_{\mathcal Z} g_v(z)\, dF(z)
=
\begin{pmatrix}
\int_{\mathcal Z} g_{v,1}(z)\, dF(z)\\
\vdots\\
\int_{\mathcal Z} g_{v,m}(z)\, dF(z)
\end{pmatrix}.
\]
Equip $\ell^\infty(\mathcal V_{XY};\mathbb R^m)$ with
the norm
\[
\|f\|_{\ell^\infty(\mathcal V_{XY};\mathbb R^m)}
:=
\sup_{v\in\mathcal V_{XY}}\|f(v)\|_2 .
\]

\begin{lemma}[Vector-valued extension]
\label{lemma:vector_expectation}
Fix $F\in\mathbb D_\varphi$. Suppose that, for each
$j=1,\ldots,m$,
\[
\sup_{v\in\mathcal V_{XY}}\|g_{v,j}\|_{HK}<\infty.
\]
Then $\varphi$ is Hadamard differentiable at $F$ tangentially to $\mathbb D_F$. Its derivative is the continuous linear map $ \dot\varphi_F: \mathcal L_F\to \ell^\infty(\mathcal V_{XY};\mathbb R^m). $ For $h\in\mathcal L_F$, this derivative is given by
\[
\dot{\varphi}_F[h](v)
=
\int_{\mathcal Z} g_v(z)\,dh(z)=
\begin{pmatrix}
\int_{\mathcal Z} g_{v,1}(z)\, dh(z)\\
\vdots\\
\int_{\mathcal Z} g_{v,m}(z)\, dh(z)
\end{pmatrix},
\qquad v\in\mathcal V_{XY}.
\]
\end{lemma}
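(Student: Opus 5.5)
The plan is to reduce Lemma~\ref{lemma:vector_expectation} to $m$ applications of Lemma~\ref{lemma:expectation}, one per coordinate, and then reassemble the coordinates using the equivalence of norms on $\mathbb R^m$. For each $j\in\{1,\ldots,m\}$, define the scalar functional $\varphi_j(F)(v):=\int_{\mathcal Z} g_{v,j}\,dF$. By hypothesis $\sup_v\|g_{v,j}\|_{HK}<\infty$, so Lemma~\ref{lemma:expectation} applies directly. It shows that $\varphi_j$ is Hadamard differentiable at $F$ tangentially to $\mathbb D_F$, with continuous linear derivative $\dot\varphi_{F,j}[h](v)=\int g_{v,j}\,dh$ on $\mathcal L_F$. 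We then set $\dot\varphi_F[h](v):=(\dot\varphi_{F,1}[h](v),\ldots,\dot\varphi_{F,m}[h](v))^\top$, which is the stated formula.

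First, I would check that this candidate derivative is a well-defined continuous linear map into $\ell^\infty(\mathcal V_{XY};\mathbb R^m)$. Linearity holds coordinatewise. For boundedness, use $\|x\|_2\le\sum_j|x_j|$ for $x\in\mathbb R^m$. This gives
\[
\sup_v\|\dot\varphi_F[h](v)\|_2\le\sum_{j=1}^m\sup_v|\dot\varphi_{F,j}[h](v)|\le\sum_{j=1}^m C_j\|h\|_\infty ,
\]
where $C_j=c_k\sup_v\|g_{v,j}\|_{HK}$ is the bound established in the proof of Lemma~\ref{lemma:expectation}. In particular, $\dot\varphi_F[h]$ lies in the stated codomain.

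Second, I would verify the Hadamard condition. Take sequences $t_n\downarrow0$ and $h_n\to h\in\mathbb D_F$ with $F+t_nh_n\in\mathbb D_\varphi$; the same sequences are admissible for every $\varphi_j$. Applying the same elementary norm inequality to the difference quotient gives
\[
\sup_v\Bigl\|\tfrac{\varphi(F+t_nh_n)-\varphi(F)}{t_n}(v)-\dot\varphi_F[h](v)\Bigr\|_2\le\sum_{j=1}^m\Bigl\|\tfrac{\varphi_j(F+t_nh_n)-\varphi_j(F)}{t_n}-\dot\varphi_{F,j}[h]\Bigr\|_{\ell^\infty(\mathcal V_{XY})}.
\]
Each of the finitely many terms on the right tends to zero by Lemma~\ref{lemma:expectation}, so the left side does as well.

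No step here is a genuine obstacle. The only point requiring care is that Hadamard differentiability of each coordinate is uniform over $v$ in the $\ell^\infty$ sense, not merely pointwise in $v$, so that the sum over coordinates inherits uniformity. This holds because $m$ is fixed and finite. Extension to $\mathcal L_F$ is already handled coordinatewise by Lemma~\ref{lemma:expectation}.
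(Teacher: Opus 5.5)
Your proposal is correct and follows the paper's proof. Both apply Lemma~\ref{lemma:expectation} to each coordinate $\varphi_j$ and reassemble with a finite-dimensional norm inequality, and linearity, continuity and the extension to $\mathcal L_F$ all come from the coordinate derivatives. The only difference is cosmetic: you bound the Euclidean norm by the sum of the coordinatewise sup-norms, while the paper uses $\sup_v(\sum_j|\cdot|^2)^{1/2}\le(\sum_j\sup_v|\cdot|^2)^{1/2}$, and either works because $m$ is finite.
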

\begin{proof}
Write $\varphi=(\varphi_1,\dots,\varphi_{m})^\top$
with
\[
\varphi_j(F)(v)
=
\int_{\mathcal Z} g_{v,j}(z)\, dF(z).
\]
Each coordinate satisfies Lemma~\ref{lemma:expectation}, hence, for
$h\in\mathbb D_F$,
\[
\dot{\varphi}_{j,F}[h](v)
=
\int_{\mathcal Z} g_{v,j}(z)\,dh(z).
\]
Let $\epsilon_n\downarrow0$ and $h_n\to h$ in $\|\cdot\|_\infty$,
with $h_n,h\in\mathbb D_F$ and
$F+\epsilon_n h_n\in\mathbb D_\varphi$. Then
\begin{align*}
\left\|
\frac{\varphi(F+\epsilon_n h_n)-\varphi(F)}{\epsilon_n}
-
\dot{\varphi}_F[h]
\right\|_{\ell^\infty(\mathcal V_{XY};\mathbb R^m)}
&=
\sup_{v \in\mathcal V_{XY}}
\left\|
\frac{\varphi(F+\epsilon_n h_n)(v)-\varphi(F)(v)}{\epsilon_n}
-
\dot{\varphi}_F[h](v)
\right\|_2 \\
&\le 
\left(
\sum_{j=1}^m
\left\|
\frac{\varphi_j(F+\epsilon_n h_n)-\varphi_j(F)}{\epsilon_n}
-
\dot{\varphi}_{j,F}[h]
\right\|_{\ell^\infty(\mathcal V_{XY})}^2
\right)^{1/2} \\
&\to 0,
\end{align*}
with the last line following from Lemma~\ref{lemma:expectation}. Thus
$\varphi$ is Hadamard differentiable at $F$ tangentially to
$\mathbb D_F$, with the displayed derivative on $\mathbb D_F$.

It remains to specify the linear domain of the derivative. For each coordinate $j$, Lemma~\ref{lemma:expectation} gives a continuous linear map $\dot\varphi_{j,F}:\mathcal L_F\to\ell^\infty(\mathcal V_{XY}).$ For $h\in\mathcal L_F,$ define $\dot\varphi_F[h] := \bigl(\dot\varphi_{1,F}[h],\dots, \dot\varphi_{m,F}[h]\bigr)^\top. $ This map is linear coordinatewise. To show continuity, let $h_1,h_2\in\mathcal L_F$, and observe that
\begin{align*}
    \|\dot\varphi_F[h_1]-\dot\varphi_F[h_2]\|_{\ell^\infty(\mathcal V_{XY};\mathbb R^m)}
&=
\sup_{v\in\mathcal V_{XY}}
\left(
\sum_{j=1}^m
\left|
\dot\varphi_{j,F}[h_1](v)
-
\dot\varphi_{j,F}[h_2](v)
\right|^2
\right)^{1/2} \\
&\le
\left(
\sum_{j=1}^m
\|\dot\varphi_{j,F}[h_1]-\dot\varphi_{j,F}[h_2]\|_{\ell^\infty(\mathcal V_{XY})}^2
\right)^{1/2}.
\end{align*}
Each coordinate map is continuous on $\mathcal L_F$ by Lemma~\ref{lemma:expectation}, so the right-hand side tends to zero when $h_1-h_2\to0$ in $\mathcal L_F$. Hence
$\dot\varphi_F:\mathcal L_F\to
\ell^\infty(\mathcal V_{XY};\mathbb R^m)$ is continuous and linear. 
\end{proof}

\subsubsection{A bookkeeping map from empirical-process notation to CDF notation}
\label{sec:bookkeeping-map}

Before analyzing the map $\lambda$, we make explicit the relationship between two ways of representing the same distributional information. The empirical process naturally views a probability law through its action on indicator functions $v\in\mathcal V$, that is, through the map $v\mapsto Pv$. In contrast, the expectation-functional results in subsection~\ref{sec:expectation_functionals} are stated for CDF-indexed objects in the Skorokhod space $\mathbb D_S$. The following bookkeeping map connects these two representations.

Define
\begin{equation}
\label{eq:tau-map}
\tau:\ell^\infty(\mathcal V)\to\ell^\infty(\mathcal Z),
\qquad
\tau(Q)(s,a,t):=Qv_{s,a,t}. \nonumber
\end{equation}
In words, $\tau$ sends the indicator-indexed map $v\mapsto Qv$ to the threshold-indexed map $(s,a,t)\mapsto Qv_{s,a,t}$. Thus, $\tau$ simply changes the indexing of $Q$: instead of viewing $Q$ as a bounded map on the indicator class $\mathcal V$, we view $\tau(Q)$ as a bounded map on the threshold space $\mathcal Z$. When $Q$ corresponds to a distribution---meaning that $Q=P\in\mathbb{D}_\Phi$---this reindexing map makes $\tau(P)$ coincide with the CDF $F_P$; indeed,
\[
\tau(P)(s,a,t)
=
Pv_{s,a,t}
=
P(X\le s,\ A\le a,\ Y\le t)
=
F_P(s,a,t).
\]
For a generic $Q\in\ell^\infty(\mathcal V)$, however, $\tau(Q)$ may not be a distribution function or even a càdlàg function. This is why the codomain of $\tau$ is the larger space $\ell^\infty(\mathcal Z)$.

\begin{lemma}[Hadamard differentiability of the bookkeeping map]
\label{lemma:tau}
The map $\tau$ is linear and continuous. Consequently, $\tau$ is Hadamard differentiable at every $Q\in\ell^\infty(\mathcal V)$ with derivative $\dot\tau_Q[h]=\tau(h).$
\end{lemma}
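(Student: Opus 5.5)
The plan is to verify linearity and boundedness of $\tau$ directly from the definition. Then we invoke the elementary fact that a bounded linear map between normed spaces is Hadamard differentiable everywhere, with derivative equal to itself.

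\textbf{Linearity.} For $Q_1,Q_2\in\ell^\infty(\mathcal V)$ and scalars $\alpha,\beta$, evaluation at the fixed indicator $v_{s,a,t}$ is linear in $Q$. Hence
\[
\tau(\alpha Q_1+\beta Q_2)(s,a,t)=\alpha Q_1v_{s,a,t}+\beta Q_2v_{s,a,t}
\]
for every $(s,a,t)\in\mathcal Z$, which gives $\tau(\alpha Q_1+\beta Q_2)=\alpha\tau(Q_1)+\beta\tau(Q_2)$.

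\textbf{Boundedness.} Every threshold $(s,a,t)\in\mathcal Z$ indexes an element $v_{s,a,t}\in\mathcal V$. Therefore
\[
\|\tau(Q)\|_{\ell^\infty(\mathcal Z)}=\sup_{(s,a,t)\in\mathcal Z}|Qv_{s,a,t}|\le\sup_{v\in\mathcal V}|Qv|=\|Q\|_{\ell^\infty(\mathcal V)}.
\]
In fact equality holds, since the map $(s,a,t)\mapsto v_{s,a,t}$ is onto $\mathcal V$; either way $\tau$ is a contraction. A bounded linear map is continuous, and this also shows that $\tau(Q)$ indeed lies in $\ell^\infty(\mathcal Z)$.

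\textbf{Differentiability.} Take $t_n\downarrow0$ and $h_n\to h$ in $\ell^\infty(\mathcal V)$. By linearity,
\[
\frac{\tau(Q+t_nh_n)-\tau(Q)}{t_n}=\tau(h_n).
\]
Consequently,
\[
\|\tau(h_n)-\tau(h)\|_{\ell^\infty(\mathcal Z)}=\|\tau(h_n-h)\|_{\ell^\infty(\mathcal Z)}\le\|h_n-h\|_{\ell^\infty(\mathcal V)}\to0.
\]
This is Hadamard differentiability at $Q$ with derivative $\dot\tau_Q[h]=\tau(h)$, which is itself continuous and linear. The argument holds on the whole space $\ell^\infty(\mathcal V)$, and hence tangentially to any subset of directions and on any closed span $\mathcal L_0$.

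There is no genuine obstacle here. The only point needing care is that the index map $(s,a,t)\mapsto v_{s,a,t}$ is well defined, so that the sup over $\mathcal Z$ is dominated by the sup over $\mathcal V$, and this holds by the definition of $\mathcal V$.
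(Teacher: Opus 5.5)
Your proposal is correct and follows the paper's proof step for step. You get linearity from the definition and the norm bound $\|\tau(Q)\|_{\ell^\infty(\mathcal Z)}\le\|Q\|_{\ell^\infty(\mathcal V)}$ (the paper states equality, which you also note), then use linearity to reduce the difference quotient to $\tau(h_n)$, so convergence follows from $\|\tau(h_n-h)\|_{\ell^\infty(\mathcal Z)}\le\|h_n-h\|_{\ell^\infty(\mathcal V)}\to 0$.
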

\begin{proof}
Linearity follows immediately from the definition. Also, for any $Q\in\ell^\infty(\mathcal V)$,
\[
\|\tau(Q)\|_{\ell^\infty(\mathcal Z)}
=
\sup_{(s,a,t)\in\mathcal Z}|Q(v_{s,a,t})|
=
\sup_{v\in\mathcal V}|Q(v)|
=
\|Q\|_{\ell^\infty(\mathcal V)}.
\]
Thus $\tau$ is a bounded linear map, hence continuous.

Now let $\epsilon_n\downarrow0$ and $h_n\to h$ in $\ell^\infty(\mathcal V)$. By linearity,
\[
\frac{\tau(Q+\epsilon_n h_n)-\tau(Q)}{\epsilon_n}
=
\tau(h_n).
\]
Therefore,
\[
\left\|
\frac{\tau(Q+\epsilon_n h_n)-\tau(Q)}{\epsilon_n}
-
\tau(h)
\right\|_{\ell^\infty(\mathcal Z)}
=
\|\tau(h_n-h)\|_{\ell^\infty(\mathcal Z)}
=
\|h_n-h\|_{\ell^\infty(\mathcal V)}
\to0.
\]
Finally, the derivative map $h\mapsto \dot\tau_Q[h]=\tau(h)$ is linear and continuous because $\tau$ itself is linear and continuous. Hence $\tau$ is Hadamard differentiable at $Q$ tangentially to $\ell^\infty(\mathcal V)$.
\end{proof}

\subsubsection{The conditional law map}
\label{sec:conditional-law-map}

We now establish differentiability of the conditional-law map
$\lambda$ defined in \eqref{eq:lambda}. Recall that $\lambda(P)=(P_0,P_1,\pi_1(P)).$
Each conditional law $P_a$ can be written as a ratio of two expectation-type functionals: an arm-specific numerator $U_a(P)$ divided by the treatment probability $\pi_a(P)$. The bookkeeping map $\tau$ allows us to move between the empirical-process representation $v\mapsto Pv$ and the corresponding CDF representation $F_P=\tau(P)\in\mathbb D_\varphi$. Throughout this subsection, we use the convention from the overview and write $P$ both for a probability law and for its evaluation map $v\mapsto Pv$.

For $a\in\{0,1\}$ and $v_{s,t}\in\mathcal V_{XY}$, define
\begin{align}
\label{eq:pi-U-Pa-defs}
\pi_a(P)
:= P(A=a),  \qquad
U_a(P)(v_{s,t})
:= P(X\le s,A=a,Y\le t), \qquad
P_a v_{s,t}
:= \frac{U_a(P)(v_{s,t})}{\pi_a(P)}. \nonumber 
\end{align}
Equivalently, if $F_P=\tau(P)$, then
\[
U_a(P)(v_{s,t})
=
\int_{\mathcal Z}
\mathbf 1\{x\le s,\ a'=a,\ y\le t\}\,dF_P(x,a',y).
\]

For $P\in\mathbb D_\Phi$, let $F_P:=\tau(P)\in\mathbb D_\varphi$ denote the corresponding joint distribution function on $\mathcal Z$, and let $\mathbb D_{F_P}$ be as in \eqref{eq:Dphi-DF} with $F=F_P$. Define
\begin{equation}
\label{eq:tangent-set}
\mathcal T_P
:=
\left\{
h\in\ell^\infty(\mathcal V):
\tau(h)\in \mathbb D_{F_P}
\right\}. 
\end{equation}
Thus $\mathcal T_P$ consists of perturbation directions in $\ell^\infty(\mathcal V)$ whose CDF-indexed versions, obtained through $\tau$, lie in the tangent cone at $F_P$. In particular, this is the space in which we view empirical-process directions such as $\sqrt n(P_n-P)$. 
Because $\mathcal P$ contains the point masses $\delta_z$, the point-mass directions $\delta_z-P$ also belong to $\mathcal T_P$ for every $z\in\mathcal Z$. 

As in the CDF-indexed setting, the tangent cone $\mathcal T_P$ gives the allowable tangential directions, but it is not generally the linear space on which the derivative is defined. We therefore define its associated closed linear span
\begin{equation}
\label{eq:LP}
\mathcal L_P
:=
\overline{\operatorname{span}(\mathcal T_P)}
\subseteq \ell^\infty(\mathcal V),
\end{equation}
where the closure is taken with respect to the $\ell^\infty(\mathcal V)$ norm.

\begin{proposition}[Hadamard differentiability of $\lambda$]
\label{prop:lambda}
Fix $P\in\mathbb D_\Phi$ with $\pi_a(P)>0$ for $a\in\{0,1\}$. Then $\lambda$, defined in \eqref{eq:lambda}, is Hadamard differentiable at $P$ tangentially to $\mathcal T_P$. Its derivative is the continuous linear map $ \dot\lambda_P:\mathcal L_P\to \ell^\infty(\mathcal V_{XY})\times \ell^\infty(\mathcal V_{XY})\times \mathbb R, $ given, for $h\in\mathcal L_P$, by
\[
\dot\lambda_P[h]
=
\bigl(\dot P_{0,P}[h],\,\dot P_{1,P}[h],\,\dot\pi_{1,P}[h]\bigr),
\]
where, for $v=v_{s,t}\in\mathcal V_{XY}$,
\begin{align}
\label{eq:pi-U-Pa-dots}
\dot\pi_{a,P}[h]
&=
\int \mathbf 1\{A=a\}\,dh, \qquad
\dot U_{a,P}[h](v)
=
\int \mathbf 1\{x\le s,\ a'=a,\ y\le t\}\,dh,  \nonumber \\
\dot P_{a,P}[h](v)
&=
\frac{\dot U_{a,P}[h](v)}{\pi_a(P)}
-
\frac{U_a(P)(v)}{\pi_a(P)^2}\,\dot\pi_{a,P}[h].
\end{align}
\end{proposition}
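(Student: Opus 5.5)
The plan is to write $\lambda$ as a composition of the bookkeeping map $\tau$, two expectation functionals, and a smooth finite-dimensional ratio operation, and then to invoke the chain rule \citep[Theorem~20.9]{van2000asymptotic}. Concretely, set $F_P=\tau(P)$. Both $\pi_a(P)$ and $U_a(P)$ are integrals against $F_P$ of fixed integrands. For $\pi_a$ the integrand is $g=\mathbf 1\{a'=a\}$, which does not depend on $v$. For $U_a$ the integrands are $g_v=\mathbf 1\{x\le s,\ a'=a,\ y\le t\}$, indexed by $v=v_{s,t}\in\mathcal V_{XY}$.

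\textbf{Step 1.} I will check the Hardy--Krause hypothesis of Lemma~\ref{lemma:expectation}. Each $g_v$ is a product of one-dimensional indicators of lower orthants or points. Since $a'$ takes only two values, $\mathbf 1\{a'=a\}$ is a difference of the orthant indicators $\mathbf 1\{a'\le a\}$ and $\mathbf 1\{a'\le a-1\}$. A product of orthant indicators has Hardy--Krause variation bounded by a constant depending only on the dimension $k$. Hence $\sup_v\|g_v\|_{HK}<\infty$.

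Lemma~\ref{lemma:expectation} then shows that $F\mapsto \int g\,dF$ and $F\mapsto(\int g_v\,dF)_v$ are Hadamard differentiable at $F_P$ tangentially to $\mathbb D_{F_P}$. Their derivatives are $h\mapsto\int g\,dh$ and $h\mapsto(\int g_v\,dh)_v$ on $\mathcal L_{F_P}$. By the definition of $\mathcal T_P$, for $h_n\to h\in\mathcal T_P$ we have $\tau(h_n)\to\tau(h)\in\mathbb D_{F_P}$, because $\tau$ is an isometry (Lemma~\ref{lemma:tau}). Composing with the linear map $\tau$ therefore gives Hadamard differentiability of $P\mapsto(\pi_a(P),U_a(P))$ at $P$ tangentially to $\mathcal T_P$. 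The derivatives are $\dot\pi_{a,P}$ and $\dot U_{a,P}$ as displayed, understood as $\int(\cdot)\,d\tau(h)$. These extend continuously to $\mathcal L_P$, since $\tau$ maps $\operatorname{span}(\mathcal T_P)$ into $\operatorname{span}(\mathbb D_{F_P})$ and is continuous, so it maps closures into $\mathcal L_{F_P}$.

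\textbf{Step 2.} I will handle the ratio. Consider the map $R:(u,p)\mapsto u/p$ from $\ell^\infty(\mathcal V_{XY})\times(0,\infty)$ to $\ell^\infty(\mathcal V_{XY})$. It is Fréchet, hence Hadamard, differentiable at any $(u,p)$ with $p>0$, with derivative $(\dot u,\dot p)\mapsto \dot u/p-u\dot p/p^2$. This follows from an elementary bound. Write $(u+t\dot u_t)/(p+t\dot p_t)-u/p$ over a common denominator. The remainder is $O(t^2)$ uniformly in $v$, because $\|u\|_\infty\le1$ and $p+t\dot p_t$ stays bounded away from zero for small $t$, using $\pi_a(P)>0$.

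The chain rule then yields $\dot P_{a,P}$ in the stated form. The third coordinate $\pi_1$ is already covered by Step 1. Stacking the three coordinates gives differentiability of $\lambda$ in the product norm, because each coordinate converges separately, and the product derivative is continuous and linear.

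\textbf{Main obstacle.} The hard part is the tangential bookkeeping: making sure the chain rule applies with tangent set $\mathcal T_P$ rather than a linear space. In particular, I must confirm that the perturbed points $P+t_nh_n$ stay in $\mathbb D_\Phi$, so that $\tau(P+t_nh_n)\in\mathbb D_\varphi$ and Lemma~\ref{lemma:expectation} applies along the sequence. I also need the derivative on $\mathcal L_P$ to be well defined as the continuous extension. I will handle this by applying Lemma~\ref{lemma:expectation} directly along the sequence $\tau(h_n)\to\tau(h)$, rather than citing an abstract chain rule for the first composition. The ratio step needs only Fréchet differentiability on an open set, so no tangential issue arises there.
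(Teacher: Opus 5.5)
Your proposal is correct and follows essentially the same route as the paper: write $\pi_a$ and $U_a$ as expectation functionals of $\tau(P)$, apply Lemma~\ref{lemma:expectation} through the isometry $\tau$, then use the Fréchet-differentiable ratio map $(u,p)\mapsto u/p$ with the chain rule, and extend from $\operatorname{span}(\mathcal T_P)$ to $\mathcal L_P$ by continuity. You are slightly more careful than the paper in two places it glosses over: you write $\mathbf 1\{a'=a\}$ as a difference of orthant indicators (it is not itself a lower-rectangle indicator when $a=1$), and you check explicitly that $\tau$ maps $\mathcal L_P$ into $\mathcal L_{F_P}$.
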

\begin{proof}
We decompose $\lambda$ into the maps $U_a$, $\pi_a$, and the ratio $(u,p)\mapsto u/p$, and then apply the chain rule.

\noindent\textbf{Step 1: Differentiability of $U_a$ and $\pi_a$.} We first pass from the empirical-process representation of $P$ to its CDF representation. By Lemma~\ref{lemma:tau}, the map $\tau:\ell^\infty(\mathcal V)\to\ell^\infty(\mathcal Z)$ is Hadamard differentiable, with derivative $\dot\tau_P[h]=\tau(h)$. For such a $P$, we have $F_P=\tau(P)\in\mathbb D_\varphi$.

For each $(s,t)\in\mathbb R^{q+1}\times\mathcal Y$, define
\[
g_{s,a,t}:\mathcal Z\to\mathbb R,
\qquad
g_{s,a,t}(x,a',y)
:=
\mathbf 1\{x\le s,\ a'=a,\ y\le t\}.
\]
Then
\[
U_a(P)(v_{s,t})
=
\int_{\mathcal Z} g_{s,a,t}(z)\,dF_P(z)
=
\int_{\mathcal Z} g_{s,a,t}(z)\,d\tau(P)(z).
\]
Thus $U_a$ is the composition of $\tau$ with the expectation functional from Lemma~\ref{lemma:expectation}. The class $\{g_{s,a,t}:(s,t)\}$ consists of lower-rectangle indicators and hence satisfies the bounded Hardy--Krause variation condition in Lemma~\ref{lemma:expectation}. Therefore, by Lemma~\ref{lemma:tau}, Lemma~\ref{lemma:expectation}, and the chain rule, $U_a$ is Hadamard differentiable at $P$ tangentially to $\mathcal T_P$, with derivative
\[
\dot U_{a,P}[h](v_{s,t})
=
\int_{\mathcal Z} g_{s,a,t}(z)\,d\{\tau(h)\}(z).
\]
Equivalently, writing $h$ for the corresponding signed perturbation,
\[
\dot U_{a,P}[h](v_{s,t})
=
\int \mathbf 1\{x\le s,\ a'=a,\ y\le t\}\,dh.
\]

The treatment probability map $\pi_a$ is handled identically by taking $g_a(x,a',y):=\mathbf 1\{a'=a\}$. Hence $\pi_a$ is Hadamard differentiable at $P$ tangentially to $\mathcal T_P$, with
\[
\dot\pi_{a,P}[h]
=
\int_{\mathcal Z} g_a(z)\,d\{\tau(h)\}(z)
=
\int \mathbf 1\{a'=a\}\,dh.
\]

\medskip

\noindent\textbf{Step 2: Differentiability of $P_a$.}
For $v\in\mathcal V_{XY}$,
\[
P_a v
=
\frac{U_a(P)(v)}{\pi_a(P)}.
\]
We use the fact that the ratio map $ R:\ell^\infty(\mathcal V_{XY})\times(\mathbb R\setminus\{0\}) \to \ell^\infty(\mathcal V_{XY})$, $R(u,p):=u/p$, is continuously Fréchet differentiable, hence Hadamard differentiable, with derivative
\[
\dot R_{u,p}[h,r](v)
=
\frac{h(v)}{p}
-
\frac{u(v)}{p^2}r.
\]
Given that $P_a=R(U_a(P),\pi_a(P))$ and that $U_a$ and $\pi_a$ are Hadamard differentiable at $P$ tangentially to $\mathcal T_P$, the chain rule gives that $P_a$ is Hadamard differentiable at $P$ tangentially to $\mathcal T_P$. Its derivative is
\[
\dot P_{a,P}[h]
=
\dot R_{U_a(P),\pi_a(P)}
\bigl[\dot U_{a,P}[h],\,\dot\pi_{a,P}[h]\bigr],
\]
so, for $v\in\mathcal V_{XY}$,
\[
\dot P_{a,P}[h](v)
=
\frac{\dot U_{a,P}[h](v)}{\pi_a(P)}
-
\frac{U_a(P)(v)}{\pi_a(P)^2}\,\dot\pi_{a,P}[h].
\]

\medskip

\noindent\textbf{Step 3: Assembly of $\lambda$.}
Since $\lambda$ is formed by combining the maps $P\mapsto P_0$, $P\mapsto P_1$, and $P\mapsto \pi_1(P)$, each of which is Hadamard differentiable at $P$ tangentially to $\mathcal T_P$, it follows that $\lambda$ is Hadamard differentiable at $P$ tangentially to $\mathcal T_P$, with derivative
\[
\dot\lambda_P[h]
=
\bigl(\dot P_{0,P}[h],\,\dot P_{1,P}[h],\,\dot\pi_{1,P}[h]\bigr).
\]

As in the proof of Lemma~\ref{lemma:expectation}, the displayed derivative extends from $\mathcal T_P$ to $\operatorname{span}(\mathcal T_P)$ by linearity of its component maps. The componentwise bounds above imply that this linear extension is bounded on $\operatorname{span}(\mathcal T_P)$. Since $\operatorname{span}(\mathcal T_P)$ is dense in $\mathcal L_P$, it extends uniquely to a continuous linear map
$
\dot\lambda_P:\mathcal L_P\to
\ell^\infty(\mathcal V_{XY})\times \ell^\infty(\mathcal V_{XY})\times \mathbb R.
$
\end{proof}

\bigskip The next component map is the coordinate projection $\mathrm{pr}_a$ defined in \eqref{eq:projection}. Its restriction to $\mathbb D_{\mathcal V_{XY}}\times\mathbb D_{\mathcal V_{XY}}\times(0,1)$ extracts the arm-$a$ conditional law from the output of $\lambda$: for any $P\in\mathbb D_\Phi$,
\begin{equation}
\label{eq:Pa-as-projection}
    P_a=\mathrm{pr}_a\{\lambda(P)\}.
\end{equation}

The projection $\mathrm{pr}_a$ is linear and continuous. Under the
product norm
\[
\|(Q_0,Q_1,r)\|
=
\|Q_0\|_{\ell^\infty(\mathcal V_{XY})}
+
\|Q_1\|_{\ell^\infty(\mathcal V_{XY})}
+
|r|,
\]
we have
\[
\|\mathrm{pr}_a(Q_0,Q_1,r)\|_{\ell^\infty(\mathcal V_{XY})}
=
\|Q_a\|_{\ell^\infty(\mathcal V_{XY})}
\le
\|(Q_0,Q_1,r)\|.
\]
Therefore $\mathrm{pr}_a$ is Fréchet differentiable, hence Hadamard differentiable, with derivative equal to itself: $\dot{\mathrm{pr}}_{a,(Q_0,Q_1,r)}[h_0,h_1,\rho] = \mathrm{pr}_a(h_0,h_1,\rho) = h_a$. Equivalently,
\begin{equation}
\label{eq:projection-derivative}
\dot{\mathrm{pr}}_{a,(Q_0,Q_1,r)}[\cdot]
=
\mathrm{pr}_a\{\cdot\}.
\end{equation}
In particular, combining \eqref{eq:projection-derivative} with
Proposition~\ref{prop:lambda} gives
\begin{equation}
\label{eq:projected-lambda-derivative}
\dot{\mathrm{pr}}_{a,\lambda(P)}[\dot\lambda_P[h]]
=
\mathrm{pr}_a\{\dot\lambda_P[h]\}
=
\dot P_{a,P}[h],
\end{equation}
where $\dot P_{a,P}[h]$ is given in \eqref{eq:pi-U-Pa-dots}.

\subsubsection{Balancing coefficients $\gamma_a$}
\label{sec:balancing-coefficients}

We next study the differentiability of the balancing coefficient map $\gamma_a$, defined in \eqref{eq:gamma}. To do so, we express $\gamma_a$ as a composition of expectation functionals and a finite-dimensional matrix map.

Let $\mathcal B_{\mathrm{inv}}
:=
\{B\in\mathbb R^{(q+1)\times(q+1)}:B\text{ is invertible}\},$ and define
\begin{equation}
\label{eq:G-map}
G:\mathcal B_{\mathrm{inv}}\times\mathbb R^{q+1}\times\mathbb R^{q+1}
\to\mathbb R^{q+1},
\qquad
G(B,b,c):=B^{-1}(b-c).
\end{equation}

\begin{lemma}[Differentiability of $(B,b,c)\mapsto B^{-1}(b-c)$]
\label{lemma:G}
The map $G$ is Fréchet differentiable on
$\mathcal B_{\mathrm{inv}}\times\mathbb R^{q+1}\times\mathbb R^{q+1}$.
At $(B_0,b_0,c_0)$, its derivative in direction $(H,h,k)$ is
\[
\dot G_{B_0,b_0,c_0}[H,h,k]
=
- B_0^{-1} H B_0^{-1}(b_0-c_0)
+
B_0^{-1}(h-k).
\]
\end{lemma}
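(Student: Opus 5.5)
The plan is to write $G$ as a composition of two elementary maps. One is the matrix inversion map $\iota:B\mapsto B^{-1}$ on the open set $\mathcal B_{\mathrm{inv}}$. The other is the bilinear map $(M,u)\mapsto Mu$. The vector difference $(b,c)\mapsto b-c$ is linear and continuous, so it is trivially Fréchet differentiable with derivative $(h,k)\mapsto h-k$. We then have $G(B,b,c)=\iota(B)\,(b-c)$. The claimed derivative should follow from the product rule for the bilinear map combined with the chain rule for Fréchet derivatives on finite-dimensional spaces. All norms on these finite-dimensional spaces are equivalent, so I will work with the operator norm on matrices and the Euclidean norm on vectors.

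\textbf{Step 1: inversion.} I will show that $\iota$ is Fréchet differentiable at $B_0\in\mathcal B_{\mathrm{inv}}$ with derivative $\dot\iota_{B_0}[H]=-B_0^{-1}HB_0^{-1}$. First, $\mathcal B_{\mathrm{inv}}$ is open, because $\det$ is continuous. For $\|H\|<\|B_0^{-1}\|^{-1}$, the matrix $B_0+H=B_0(I+B_0^{-1}H)$ is invertible, and the Neumann series gives
\[
(B_0+H)^{-1}=\sum_{k\ge0}(-B_0^{-1}H)^kB_0^{-1}.
\]
Hence
\[
(B_0+H)^{-1}-B_0^{-1}+B_0^{-1}HB_0^{-1}=\sum_{k\ge2}(-B_0^{-1}H)^kB_0^{-1}.
\]
The norm of the right-hand side is at most $\|B_0^{-1}\|^3\|H\|^2/(1-\|B_0^{-1}\|\|H\|)=o(\|H\|)$.

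An alternative that avoids series uses the identity
\[
(B_0+H)^{-1}-B_0^{-1}=-(B_0+H)^{-1}HB_0^{-1}.
\]
Together with continuity of the inverse, this gives the same remainder bound.

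\textbf{Step 2: assembly.} I will expand the increment directly as
\[
G(B_0+H,b_0+h,c_0+k)-G(B_0,b_0,c_0)
=\{(B_0+H)^{-1}-B_0^{-1}\}(b_0-c_0)+(B_0+H)^{-1}(h-k).
\]
By Step 1, the first term equals $-B_0^{-1}HB_0^{-1}(b_0-c_0)+o(\|H\|)$. For the second term, I subtract the candidate $B_0^{-1}(h-k)$. What remains is $\{(B_0+H)^{-1}-B_0^{-1}\}(h-k)$, whose norm is at most $C\|H\|\,\|h-k\|=o(\|(H,h,k)\|)$.

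Combining the two terms, the remainder is $o(\|(H,h,k)\|)$, which is exactly Fréchet differentiability with the stated derivative. The derivative map is clearly linear in $(H,h,k)$. It is also bounded, with norm at most $\|B_0^{-1}\|^2\|b_0-c_0\|+2\|B_0^{-1}\|$. Continuity of this derivative in $(B_0,b_0,c_0)$ follows from continuity of $\iota$; this gives continuous differentiability, which will be useful later when we apply the chain rule for Hadamard derivatives.

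\textbf{Main obstacle.} The only nontrivial point is the remainder estimate for inversion in Step 1. That estimate needs a uniform bound on $\|(B_0+H)^{-1}\|$ for small $H$, which the Neumann series supplies: $\|(B_0+H)^{-1}\|\le 2\|B_0^{-1}\|$ once $\|H\|\le(2\|B_0^{-1}\|)^{-1}$. Everything else is routine bookkeeping of bilinear cross terms.
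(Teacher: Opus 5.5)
Your proof is correct and takes the same approach as the paper. The paper omits the proof and says only that it follows from the chain rule for matrix derivatives; your argument is a written-out version of that, combining the Neumann-series derivative $-B_0^{-1}HB_0^{-1}$ of inversion with the product rule for $(M,u)\mapsto Mu$, and your remainder bounds in both steps are right.
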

The proof follows by the chain rule for matrix derivatives, and so is omitted.

For the differentiability analysis of $\gamma_a$, define
\begin{align}
\label{eq:Ba-ba-c-defs}
B_a(P_0,P_1,\pi) &:= \mathbb E_{P_a}[XX^\top], \quad
b_a(P_0,P_1,\pi) := \mathbb E_{P_a}[X], \quad
c(P_0,P_1,\pi) := (1-\pi)\mathbb E_{P_0}[X]+\pi \mathbb E_{P_1}[X]. \nonumber
\end{align}
Then, by \eqref{eq:gamma},
\[
\gamma_a(P_0,P_1,\pi)
=
G\bigl(B_a(P_0,P_1,\pi),\,b_a(P_0,P_1,\pi),\,c(P_0,P_1,\pi)\bigr).
\]

We use the same bookkeeping convention for arm-specific laws. Let
$\mathcal Z_{XY}$ denote the support of $(X,Y)$, and define
\begin{equation}
\label{eq:tau-prime-map}
\tau_{XY}:\ell^\infty(\mathcal V_{XY})\to\ell^\infty(\mathcal Z_{XY}),
\qquad
\tau_{XY}(Q)(s,t):=Q(v_{s,t}).  \nonumber
\end{equation}
This is the arm-specific analogue of $\tau$. Hence $\tau_{XY}$ is linear
and continuous, and is Hadamard differentiable with derivative $\dot\tau_{XY, Q}[h]=\tau_{XY}(h).$
As with $\tau$, when $Q$ corresponds to a conditional distribution, meaning that $Q=P_a\in\mathbb{D}_{\mathcal V_{XY}}$,
this reindexing recovers the arm-specific CDF:
\[
\tau_{XY}(P_a)(s,t)
=
P_a v_{s,t}
=
P_a(X\le s,Y\le t)
=
F_{P_a}(s,t).
\]
We use $\mathbb D_{F_{P,a}}$ to denote the corresponding tangent cone
in the arm-specific Skorokhod space.

For $a\in\{0,1\}$, define the arm-specific tangent set
\begin{equation}
\label{eq:tangent-set-P_a}
    \mathcal T_{P_a}
    :=
    \left\{
    h_a\in\ell^\infty(\mathcal V_{XY}):
    \tau_{XY}(h_a)\in\mathbb D_{F_{P,a}}
    \right\}.
\end{equation}
Thus $\mathcal T_{P_a}$ consists of perturbation directions for the arm-specific law $P_a$, represented in $\ell^\infty(\mathcal V_{XY})$, whose CDF-indexed versions lie in the tangent cone at $F_{P_a}$.

As above, $\mathcal T_{P_a}$ gives the allowable tangential directions for perturbing the arm-specific law, while the derivative will be defined on the associated closed linear span. Define
\begin{equation}
\label{eq:LP-a}
\mathcal L_{P_a}
:=
\overline{\operatorname{span}(\mathcal T_{P_a})}
\subseteq \ell^\infty(\mathcal V_{XY}),
\end{equation}
where the closure is taken with respect to the
$\ell^\infty(\mathcal V_{XY})$ norm.

In the following proposition and throughout the remainder of this section,
integrals with respect to $h_a$ are shorthand for integrals with respect
to the CDF-indexed perturbation $\tau_{XY}(h_a)$. For example,
\[
    \int x\,dh_a
    :=
    \int_{\mathcal Z_{XY}} x\,d\{\tau_{XY}(h_a)\}(x,y).
\]

\begin{proposition}[Hadamard differentiability of $\gamma_a$]
\label{prop:gamma}
Fix $(P_0,P_1,\pi)\in \mathbb D_{\mathcal V_{XY}}\times\mathbb D_{\mathcal V_{XY}}\times(0,1),$ where $P_0$ and $P_1$ are probability-induced arm-specific laws, and assume $B_a(P_0,P_1,\pi)=\mathbb E_{P_a}[XX^\top]$ is invertible. Then $\gamma_a$ is Hadamard differentiable at $(P_0,P_1,\pi)$, tangentially to $\mathcal T_{P_0}\times \mathcal T_{P_1}\times \mathbb R.$ Its derivative is
the continuous linear map $\dot\gamma_{a,P_0,P_1,\pi}: \mathcal L_{P_0}\times \mathcal L_{P_1}\times \mathbb R \to \mathbb R^{q+1}.$ This derivative is given by:
\[
\dot\gamma_{a,P_0,P_1,\pi}[h_0,h_1,r]
=
- B_a^{-1}
\dot B_{a,P_0,P_1,\pi}[h_0,h_1,r]
\gamma_a
+
B_a^{-1}
\left\{
\dot b_{a,P_0,P_1,\pi}[h_0,h_1,r]
-
\dot c_{P_0,P_1,\pi}[h_0,h_1,r]
\right\},
\]
where $B_a$ and $\gamma_a$ on the right-hand side are evaluated at $(P_0,P_1,\pi)$, and
\begin{align*}
\dot b_{a,P_0,P_1,\pi}[h_0,h_1,r] &=
\int x\,dh_a, \qquad
\dot B_{a,P_0,P_1,\pi}[h_0,h_1,r]
=
\int xx^\top\,dh_a,\\
\dot c_{P_0,P_1,\pi}[h_0,h_1,r]
&=
(1-\pi)\int x\,dh_0
+
\pi\int x\,dh_1
+
r\{\mathbb E_{P_1}[X]-\mathbb E_{P_0}[X]\}.
\end{align*}
\end{proposition}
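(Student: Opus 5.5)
We prove the statement by writing $\gamma_a=G\circ(B_a,b_a,c)$ and applying the chain rule for Hadamard differentiability \citep[Theorem~20.9]{van2000asymptotic}. Lemma~\ref{lemma:G} handles $G$: it is Fréchet differentiable on the open set $\mathcal B_{\mathrm{inv}}\times\mathbb R^{q+1}\times\mathbb R^{q+1}$, hence Hadamard differentiable there. It therefore suffices to show that the inner map $(P_0,P_1,\pi)\mapsto(B_a,b_a,c)$ is Hadamard differentiable at $(P_0,P_1,\pi)$ tangentially to $\mathcal T_{P_0}\times\mathcal T_{P_1}\times\mathbb R$, and to identify its derivative.

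\textbf{Step 1: the moment maps.} Each of $B_a$, $b_a$, and $\mathbb E_{P_{a'}}[X]$ depends on a single coordinate $Q_{a'}$, through the composition $Q_{a'}\mapsto\tau_{XY}(Q_{a'})\mapsto\int g\,d\tau_{XY}(Q_{a'})$. The integrands are $g(x,y)=x$ (a vector) and $g(x,y)=xx^\top$ (vectorized, so $m=(q+1)^2$).

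\begin{itemize}
\item \emph{Applying the earlier lemmas.} The map $\tau_{XY}$ is linear and continuous, as for $\tau$ in Lemma~\ref{lemma:tau}. Lemma~\ref{lemma:vector_expectation} applies in the arm-specific space once we check that each coordinate $x_j$ and $x_jx_k$ has bounded Hardy--Krause variation on the compact rectangle containing $\mathcal Z_{XY}$.
\item \emph{Variation check.} This holds because $\mathcal X$ is bounded (assumption (ii)). These are smooth polynomials in $x$ that are constant in $y$, so their HK variation is finite, bounded by integrals of mixed partial derivatives over a bounded box. Here the index class is trivial: the integrand does not depend on $v$.
\item \emph{Conclusion.} The chain rule gives Hadamard differentiability of each of these maps tangentially to $\mathcal T_{P_{a'}}$, with derivatives $\int x\,dh_a$ and $\int xx^\top dh_a$.
\end{itemize}

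\textbf{Step 2: the pooled mean $c$.} Write $c=(1-\pi)m_0+\pi m_1$ with $m_{a'}=\mathbb E_{P_{a'}}[X]$. This is the composition of $(Q_0,Q_1,\pi)\mapsto(m_0,m_1,\pi)$ with the smooth finite-dimensional bilinear map $(m_0,m_1,\pi)\mapsto(1-\pi)m_0+\pi m_1$. The product rule, a special case of the chain rule with a Fréchet differentiable outer map, yields
\[
\dot c=(1-\pi)\int x\,dh_0+\pi\int x\,dh_1+r(m_1-m_0).
\]

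\textbf{Step 3: assembly.} Hadamard differentiability of a product map follows from that of its components, since convergence in a product norm is convergence in each coordinate. So the inner map is differentiable tangentially to the product tangent set. Composing with $\dot G$ at $(B_a,b_a,c)$, and using $B_a^{-1}(b_a-c)=\gamma_a$, gives exactly the displayed formula.

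\textbf{Main obstacle: the domain bookkeeping.} One issue is that the perturbed point $(P_0+t_nh_{0,n},P_1+t_nh_{1,n},\pi+t_nr_n)$ must land where $\gamma_a$ is given by the formula \eqref{eq:gamma}. By Step 1, $B_a$ at the perturbed point converges to $B_a(P_0,P_1,\pi)$, which is invertible. Since $\mathcal B_{\mathrm{inv}}$ is open, invertibility holds for all large $n$, so the default definition $\gamma_a=0$ never intervenes along the sequence. The other issue is extending the derivative from the tangent cone to $\mathcal L_{P_0}\times\mathcal L_{P_1}\times\mathbb R$. We handle it as in the proof of Lemma~\ref{lemma:expectation}. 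The formula is linear in $(h_0,h_1,r)$, and it is bounded by a constant times $\|h_0\|+\|h_1\|+|r|$, by the integration-by-parts bound and the finiteness of $\|B_a^{-1}\|$ and $\|\gamma_a\|$. Hence it extends uniquely and continuously to the closed span.
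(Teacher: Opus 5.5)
Your proposal is correct and follows essentially the same route as the paper: you write $\gamma_a=G(B_a,b_a,c)$, obtain the moment maps from Lemma~\ref{lemma:vector_expectation} via $\tau_{XY}$ and the arm-$a$ projection, get $\dot c$ by the product rule in $\pi$, and finish with Lemma~\ref{lemma:G} and the chain rule. You also check explicitly that $B_a$ stays invertible along perturbation sequences, so the default branch $\gamma_a=0$ never intervenes, and that the derivative extends continuously to $\mathcal L_{P_0}\times\mathcal L_{P_1}\times\mathbb R$; the paper's proof leaves both points implicit, and you handle them correctly.
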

\begin{proof}
We decompose $\gamma_a$ into expectation functionals and the finite-dimensional map $G(B,b,c)=B^{-1}(b-c)$, and then apply the chain rule. Throughout the proof, derivatives are first computed along directions in $\mathcal T_{P_0}\times\mathcal T_{P_1}\times\mathbb R,$ and the displayed derivative maps are continuous linear maps on $\mathcal L_{P_0}\times\mathcal L_{P_1}\times\mathbb R.$
\medskip

\noindent \textbf{Step 1: Differentiability of the expectation components.}
For $a\in\{0,1\}$, consider the maps
\[
P_a\mapsto \mathbb E_{P_a}[X],
\qquad
P_a\mapsto \mathbb E_{P_a}[XX^\top].
\]
Let $F_{P_a}:=\tau_{XY}(P_a)$ denote the distribution function corresponding to the arm-specific law $P_a$ on $\mathcal Z_{XY}$. By the arm-specific bookkeeping convention preceding \eqref{eq:tangent-set-P_a}, perturbations $h_a\in\mathcal T_{P_a}$ correspond to CDF-indexed perturbations $\tau_{XY}(h_a)\in \mathbb D_{F_{P,a}}$.

After the rescaling of the support to a compact rectangle, the coordinate map $x\mapsto x$ and the coordinatewise product map $x\mapsto \operatorname{vec}(xx^\top)$ are càdlàg and have bounded Hardy--Krause variation coordinatewise. Therefore, Lemma~\ref{lemma:vector_expectation}, together with the arm-specific bookkeeping map $\tau_{XY}$, implies that these expectation maps are Hadamard differentiable at $P_a$ tangentially to $\mathcal T_{P_a}$. For $h_a\in\mathcal T_{P_a}$,
\begin{equation}
\label{eq:multivariate_expectation_derivatives}
\dot{\mathbb E}_{P_a}[X][h_a]
=
\int x\,dh_a,
\qquad
\dot{\mathbb E}_{P_a}[XX^\top][h_a]
=
\int xx^\top\,dh_a.
\end{equation}

\medskip
\noindent \textbf{Step 2: Differentiability of $b_a$ and $B_a$.}
Recall that
\[
b_a(P_0,P_1,\pi):=\mathbb E_{P_a}[X],
\qquad
B_a(P_0,P_1,\pi):=\mathbb E_{P_a}[XX^\top].
\]
These maps depend on $(P_0,P_1,\pi)$ only through $P_a$, so their derivatives are obtained by first projecting onto the $a$th arm and then applying the expectation derivatives from Step 1. By \eqref{eq:projection-derivative}, the coordinate projection $\mathrm{pr}_a$ is Hadamard differentiable with derivative $(h_0,h_1,r)\mapsto h_a$. Combining this projection with the expectation derivatives in \eqref{eq:multivariate_expectation_derivatives}, the chain rule gives that $b_a$ and $B_a$ are Hadamard differentiable at $(P_0,P_1,\pi)$, tangentially to $\mathcal T_{P_0}\times \mathcal T_{P_1}\times \mathbb R$. For $(h_0,h_1,r)$ in this tangent set,
\[
\dot b_{a,P_0,P_1,\pi}[h_0,h_1,r]
=
\int x\,dh_a,
\qquad
\dot B_{a,P_0,P_1,\pi}[h_0,h_1,r]
=
\int xx^\top\,dh_a.
\]

\medskip

\noindent \textbf{Step 3: Differentiability of the mixture term $c$.}
Next consider
\[
c(P_0,P_1,\pi)
=
(1-\pi)\mathbb E_{P_0}[X]+\pi \mathbb E_{P_1}[X].
\]
Using the differentiability of the expectation maps from Step 1 and the product rule for scalar multiplication by $\pi$, we obtain
\[
\dot c_{P_0,P_1,\pi}[h_0,h_1,r]
=
(1-\pi)\int x\,dh_0
+
\pi\int x\,dh_1
+
r\left(\mathbb E_{P_1}[X]-\mathbb E_{P_0}[X]\right).
\]
Thus $c$ is Hadamard differentiable at $(P_0,P_1,\pi)$, tangentially to $\mathcal T_{P_0}\times \mathcal T_{P_1}\times \mathbb R$.

\medskip

\noindent \textbf{Step 4: Chain rule.} The maps $B_a$, $b_a$, and $c$ are Hadamard differentiable by Steps 2 and 3, and $G$ is Hadamard differentiable by Lemma~\ref{lemma:G}. Therefore, by the chain rule, $\gamma_a$ is Hadamard differentiable at $(P_0,P_1,\pi)$, tangentially to $\mathcal T_{P_0}\times \mathcal T_{P_1}\times \mathbb R$. Its derivative, defined on $\mathcal L_{P_0}\times\mathcal L_{P_1}\times\mathbb R$, is given by
\[
\dot\gamma_{a,P_0,P_1,\pi}[h_0,h_1,r]
=
\dot G_{B_a,b_a,c}
\left[
\dot B_{a,P_0,P_1,\pi}[h_0,h_1,r],
\dot b_{a,P_0,P_1,\pi}[h_0,h_1,r],
\dot c_{P_0,P_1,\pi}[h_0,h_1,r]
\right].
\]
Substituting the derivative of $G$ and using the fact that $\gamma_a=B_a^{-1}(b_a-c)$ gives the claimed expression.
\end{proof}

\subsubsection{Weighted distribution map $f$}
\label{sec:weighted-distribution-map}

We next analyze the weighted distribution map $f$, defined in \eqref{eq:f}. This map takes an arm-specific law $P_a$ and a coefficient vector $g$, and returns the signed weighted law obtained by weighting $P_a$ with $1-g^\top X$. We study $f$ as a function of both arguments $(P_a,g)$. This allows us to combine the derivative of $f$ with the derivatives of $\mathrm{pr}_a$ and $\gamma_a$ when analyzing the composite map
\[
(P_0,P_1,\pi)
\mapsto
f\bigl(\mathrm{pr}_a(P_0,P_1,\pi),\,\gamma_a(P_0,P_1,\pi)\bigr),
\]
and ultimately $\Phi$.

As in the preceding section, integrals with respect to $h_a$ are shorthand for integrals with respect to the CDF-indexed perturbation $\tau_{XY}(h_a)$.

\begin{lemma}[Hadamard differentiability of $f$]
\label{lemma:f}
Fix $P_a\in\mathbb D_{\mathcal V_{XY}}$, and fix $g\in\mathbb R^{q+1}$. Then $f$, defined in \eqref{eq:f}, is Hadamard differentiable at $(P_a,g)$ tangentially to $\mathcal T_{P_a}\times\mathbb R^{q+1}$, where $\mathcal T_{P_a}$ is defined in \eqref{eq:tangent-set-P_a}. Its derivative is the continuous linear map $ \dot f_{P_a,g}: \mathcal L_{P_a}\times\mathbb R^{q+1} \to \ell^\infty(\mathcal V_{XY}). $ For $v=v_{s,t}\in\mathcal V_{XY}$, this derivative is given by
\[
\dot f_{P_a,g}[h_a,u](v)
=
\int (1-g^\top x)v(x,y)\,dh_a(x,y)
-
u^\top \int x\,v(x,y)\,P_a(dx,dy).
\]
\end{lemma}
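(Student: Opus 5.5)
The plan is to verify the Hadamard condition directly, using the bilinear structure of $f$. The map is linear in $P_a$ for fixed $g$ and affine in $g$ for fixed $P_a$. So the difference quotient splits into a term that is linear in the perturbation, a term linear in the coefficient increment, and a product remainder of order $t_n$. Fix sequences $t_n\downarrow0$, $h_{a,n}\to h_a\in\mathcal T_{P_a}$ in $\ell^\infty(\mathcal V_{XY})$, and $u_n\to u$ in $\mathbb R^{q+1}$. For $v\in\mathcal V_{XY}$, expanding the definition \eqref{eq:f} with integrals against $h_{a,n}$ read as integrals against $\tau_{XY}(h_{a,n})$ gives
\[
\frac{f(P_a+t_nh_{a,n},g+t_nu_n)-f(P_a,g)}{t_n}(v)
=
\int(1-g^\top x)v\,dh_{a,n}
-
u_n^\top\int xv\,dP_a
-
t_n\,u_n^\top\int xv\,dh_{a,n}.
\]

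I would then handle the three terms in turn, each uniformly in $v$.

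\textbf{First term.} I apply Lemma~\ref{lemma:expectation}, or the integration-by-parts bound from its proof, to the integrand class $\{(1-g^\top x)v_{s,t}(x,y)\}$. This yields $\sup_v|\int(1-g^\top x)v\,d(h_{a,n}-h_a)|\le C\|h_{a,n}-h_a\|_\infty\to0$.

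\textbf{Second term.} Since $|\int xv\,dP_a|\le\sup_{\mathcal X}\|x\|$ uniformly in $v$ by boundedness of $\mathcal X$, this term converges to $u^\top\int xv\,dP_a$ uniformly in $v$.

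\textbf{Third term.} Lemma~\ref{lemma:vector_expectation} applied to the class $\{xv_{s,t}\}$ bounds $\sup_v\|\int xv\,dh_{a,n}\|$ by $C\|h_{a,n}\|_\infty$, which is bounded. Multiplied by $t_n\|u_n\|$, the term vanishes.

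The main point to check is that these integrand classes have uniformly bounded Hardy--Krause variation. I would argue that $x\mapsto 1-g^\top x$ and the coordinates of $x$ are affine, hence of bounded HK variation on the compact rectangle. The lower-rectangle indicators $v_{s,t}$ have HK variation bounded uniformly in $(s,t)$. A product of two bounded-HK functions on a rectangle has HK variation bounded by a constant times the product of their (HK plus sup) norms, so the $\sup_v$ bounds hold. This product bound is the only step that is not pure algebra. I would cite the standard algebra property of bounded HK variation (Owen 2005; Gill et al.\ 1993) rather than prove it.

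Finally, for the linear domain, the displayed formula is linear in $(h_a,u)$. The bounds above show it is bounded on $\operatorname{span}(\mathcal T_{P_a})\times\mathbb R^{q+1}$, so, as in the proof of Lemma~\ref{lemma:expectation}, it extends uniquely to a continuous linear map on $\mathcal L_{P_a}\times\mathbb R^{q+1}$.
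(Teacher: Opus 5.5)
Your proposal is correct and follows essentially the same route as the paper's proof. It uses the same three-term expansion of the difference quotient (your three terms correspond to the paper's $C_{1,n}$, $C_{2,n}$, $C_{3,n}$), the same uniform Hardy--Krause bounds for the classes $\{(1-g^\top x)v\}$ and $\{x_j v\}$ obtained from a product inequality (the paper cites Bl\"umlinger for this), and the same bounded-linear extension from $\operatorname{span}(\mathcal T_{P_a})\times\mathbb R^{q+1}$ to $\mathcal L_{P_a}\times\mathbb R^{q+1}$. The paper states explicitly that $h_{a,n}\in\mathcal T_{P_a}$, so that the Stieltjes integrals against $h_{a,n}$ are well defined; your version implicitly gets this from the domain constraint $P_a+t_nh_{a,n}\in\mathbb D_{\mathcal V_{XY}}$, and it is worth saying so.
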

\begin{proof}
Let $\epsilon_n\downarrow0$, let $h_{a,n}\to h_a$ in $\ell^\infty(\mathcal V_{XY})$, and let $u_n\to u$ in $\mathbb R^{q+1}$, with $h_{a,n},h_a\in\mathcal T_{P_a}$. We verify the Hadamard differentiability convergence in $\ell^\infty(\mathcal V_{XY})$.

For fixed $g$ and $v=v_{s,t}\in\mathcal V_{XY}$, define $m_{v,g}(x,y):=(1-g^\top x)v(x,y).$ The map $(x,y)\mapsto 1-g^\top x$ is bounded, càdlàg, and of bounded Hardy--Krause variation on the compact rectangle $\mathcal Z_{XY}$. The lower-rectangle indicators $v_{s,t}$ are uniformly bounded and have uniformly bounded Hardy--Krause variation over $(s,t)$. By the product inequality for functions of bounded Hardy--Krause variation \citep[bottom of p.~251]{blumlinger1989topological}, the products $m_{v,g}=(1-g^\top x)v_{s,t}$ have bounded Hardy--Krause variation uniformly over $v\in\mathcal V_{XY}$. Thus $\sup_{v\in\mathcal V_{XY}}\|m_{v,g}\|_{HK}<\infty,$ so the class $\{m_{v,g}:v\in\mathcal V_{XY}\}$ satisfies the conditions of Lemma~\ref{lemma:expectation}. Therefore, for some $C_g<\infty$,
\begin{equation}
\label{eq:f-proof-h-bound}
\sup_{v\in\mathcal V_{XY}}
\left|
\int m_{v,g}\,d(h_{a,1}-h_{a,2})
\right|
\le
C_g\|h_{a,1}-h_{a,2}\|_{\ell^\infty(\mathcal V_{XY})}
\end{equation}
for all $h_{a,1},h_{a,2}\in\mathcal T_{P_a}$. Also, boundedness of $\mathcal{X}$ gives
\begin{equation}
\label{eq:f-proof-X-bound}
C_X
:=
\sup_{v\in\mathcal V_{XY}}
\left\|
\int x\,v(x,y)\,P_a(dx,dy)
\right\|_2
<\infty.
\end{equation}

After dividing by $\epsilon_n$ and subtracting the candidate derivative, a direct expansion gives
\[
\frac{
f(P_a+\epsilon_n h_{a,n},\,g+\epsilon_n u_n)-f(P_a,g)
}{\epsilon_n}
-
\dot f_{P_a,g}[h_a,u]
=
C_{1,n}+C_{2,n}+C_{3,n},
\]
where, as elements of $\ell^\infty(\mathcal V_{XY})$,
\begin{align*}
C_{1,n}(v)
&:=
\int m_{v,g}\,d(h_{a,n}-h_a), \qquad
C_{2,n}(v)
:=
-(u_n-u)^\top\int x\,v(x,y)\,P_a(dx,dy),\\
C_{3,n}(v)
&:=
-\epsilon_n u_n^\top\int x\,v(x,y)\,dh_{a,n}.
\end{align*}
We show that each term converges to zero in $\ell^\infty(\mathcal V_{XY})$. By \eqref{eq:f-proof-h-bound},
\[
\|C_{1,n}\|_{\ell^\infty(\mathcal V_{XY})}
\le
C_g\|h_{a,n}-h_a\|_{\ell^\infty(\mathcal V_{XY})}
\to0.
\]
By Cauchy--Schwarz and \eqref{eq:f-proof-X-bound}, $\|C_{2,n}\|_{\ell^\infty(\mathcal V_{XY})}
\le
C_X\|u_n-u\|_2
\to0$. For the third term, Cauchy--Schwarz gives
\[
\|C_{3,n}\|_{\ell^\infty(\mathcal V_{XY})}
\le
\epsilon_n\|u_n\|_2
\sup_{v\in\mathcal V_{XY}}
\left\|
\int x\,v(x,y)\,dh_{a,n}(x,y)
\right\|_2.
\]

Since $u_n\to u$, the sequence $\{\|u_n\|_2\}$ is bounded. Also, $h_{a,n}\to h_a$ in $\ell^\infty(\mathcal V_{XY})$, so $\sup_n \|h_{a,n}\|_{\ell^\infty(\mathcal V_{XY})}<\infty.$ For each coordinate $j$, the functions $(x,y)\mapsto x_j v(x,y),$ with $v\in\mathcal V_{XY},$ have uniformly bounded Hardy--Krause variation. Indeed, $x_j$ is bounded, càdlàg, and of bounded Hardy--Krause variation on the compact support, and multiplication by the lower-rectangle indicator $v$ preserves this property uniformly by the product inequality of \citet[bottom of p.~251]{blumlinger1989topological}. Hence the same Lebesgue--Stieltjes bound used for \eqref{eq:f-proof-h-bound}, combined with the fact that there are only finitely many coordinates $j,$ gives
\[
\sup_n
\sup_{v\in\mathcal V_{XY}}
\left\|
\int x\,v(x,y)\,dh_{a,n}(x,y)
\right\|_2
<\infty.
\]
Since $\epsilon_n\to0$, it follows that $\|C_{3,n}\|_{\ell^\infty(\mathcal V_{XY})}\to0$. 
Therefore,
\[
\left\|
\frac{
f(P_a+\epsilon_n h_{a,n},\,g+\epsilon_n u_n)-f(P_a,g)
}{\epsilon_n}
-
\dot f_{P_a,g}[h_a,u]
\right\|_{\ell^\infty(\mathcal V_{XY})}
\to0.
\]

It remains to specify the linear domain of the derivative. The displayed formula extends from $\mathcal T_{P_a}\times\mathbb R^{q+1}$ to $\operatorname{span}(\mathcal T_{P_a})\times\mathbb R^{q+1}$ by linearity of the Lebesgue--Stieltjes integral in $h_a$ and linearity of the inner product in $u$. For $(h_{a,1},u_1)$ and $(h_{a,2},u_2)$ in this linear domain, \eqref{eq:f-proof-h-bound}, \eqref{eq:f-proof-X-bound}, and Cauchy--Schwarz give
\[
\begin{aligned}
\left\|
\dot f_{P_a,g}[h_{a,1},u_1]
-
\dot f_{P_a,g}[h_{a,2},u_2]
\right\|_{\ell^\infty(\mathcal V_{XY})}
&\le
C_g\|h_{a,1}-h_{a,2}\|_{\ell^\infty(\mathcal V_{XY})}
+
C_X\|u_1-u_2\|_2 .
\end{aligned}
\]
Hence $\dot f_{P_a,g}$ is bounded and linear on $\operatorname{span}(\mathcal T_{P_a})\times\mathbb R^{q+1}$. Since $\operatorname{span}(\mathcal T_{P_a})$ is dense in $\mathcal L_{P_a}$, this bounded linear map extends uniquely to a continuous linear map $ \dot f_{P_a,g}: \mathcal L_{P_a}\times\mathbb R^{q+1} \to \ell^\infty(\mathcal V_{XY}). $
\end{proof}

\subsubsection{Differentiability of $\Phi$}
\label{sec:differentiability-Phi}

We now combine the preceding componentwise differentiability results to establish Hadamard differentiability of the full SBW functional $\Phi$. Let $\eta_a$ be as in \eqref{eq:eta-map} and the full SBW functional $\Phi$ be as defined in \eqref{eq:Phi-composition}.

Fix $P\in\mathbb D_\Phi$ and assume:
\begin{enumerate}
    \item $\pi_a(P)>0$ for $a\in\{0,1\}$;
    \item $\mathbb E_{P_a}[XX^\top]$ is invertible for $a\in\{0,1\}$;
    \item Let $\mathbb D_\Psi\subseteq
    \ell^\infty(\mathcal V_{XY})\times \ell^\infty(\mathcal V_{XY})$ be a domain on which the target functional $\Psi$ is well defined, meaning that it contains the arm-specific inputs for which the estimand is meaningful, including $(\eta_0(P),\eta_1(P))$. Under randomization, $(\eta_0(P),\eta_1(P))=(P_0,P_1)$, so this condition is imposed at the arm-specific laws. Define
    \[
    \mathcal T_{\Psi,P}
    :=
    \left\{
    \bigl(\dot\eta_{0,P}[h],\dot\eta_{1,P}[h]\bigr):
    h\in\mathcal T_P
    \right\},
    \qquad
    \mathcal L_{\Psi,P}
    :=
    \overline{\operatorname{span}(\mathcal T_{\Psi,P})}.
    \]
    Assume that $\Psi:\mathbb D_\Psi\to\mathbb R^d$ is Hadamard differentiable at $(\eta_0(P),\eta_1(P))=(P_0,P_1)$ tangentially to $\mathcal T_{\Psi,P}$, with derivative given by a continuous linear map $\dot\Psi_{\eta_0(P),\eta_1(P)}: \mathcal L_{\Psi,P}\to\mathbb R^d.$ 
\end{enumerate}
We first analyze the intermediate maps $\eta_a$, which combine $\lambda$, $\mathrm{pr}_a$, $\gamma_a$, and $f$. We then apply the chain rule to the outer functional $\Psi$. 

\begin{theorem}[Hadamard differentiability of $\Phi$]
\label{thm:Phi}
Under the assumptions above, $\Phi$ is Hadamard differentiable at $P$
tangentially to $\mathcal T_P$. Its derivative is the continuous linear map $\dot\Phi_P:\mathcal L_P\to\mathbb R^d.$ For $h\in\mathcal L_P$, this derivative is given by
\begin{equation}
    \dot\Phi_P[h]
    =
    \dot\Psi_{\eta_0(P),\eta_1(P)}
    \bigl[
    \dot\eta_{0,P}[h],\,
    \dot\eta_{1,P}[h]
    \bigr]. \label{eq:dot-Phi_P}
\end{equation}
For $a\in\{0,1\}$,
$\dot\eta_{a,P}:\mathcal L_P\to\ell^\infty(\mathcal V_{XY})$ is a continuous linear map: for $h\in\mathcal L_P$ and $v=v_{s,t}\in\mathcal V_{XY}$,
\begin{align}
    \dot\eta_{a,P}[h](v) &=
    \int \bigl(1-\gamma_a\{\lambda(P)\}^\top x\bigr)\,v(x,y)\,
    d\!\left(\mathrm{pr}_a\{\dot\lambda_P[h]\}\right)(x,y) \nonumber \\
    &\quad
    -
    \bigl(\dot\gamma_{a,\lambda(P)}[\dot\lambda_P[h]]\bigr)^\top
    \int x\,v(x,y)\,P_a(dx,dy). \label{eq:eta-dot-definition}
\end{align}
\end{theorem}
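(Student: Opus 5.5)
The plan is to prove Theorem~\ref{thm:Phi} by repeatedly applying the chain rule for Hadamard differentiability \citep[Theorem~20.9]{van2000asymptotic}. The order of composition is the same as in the definition \eqref{eq:Phi-composition}. Let $\theta:=\lambda(P)=(P_0,P_1,\pi_1(P))$.

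\textbf{Step 1 (the map $\lambda$).} By Proposition~\ref{prop:lambda} and assumption~1, $\lambda$ is Hadamard differentiable at $P$ tangentially to $\mathcal T_P$, with derivative $\dot\lambda_P$ on $\mathcal L_P$.

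To chain this with $\gamma_a$, I need $\dot\lambda_P$ to map $\mathcal T_P$ into the tangent set $\mathcal T_{P_0}\times\mathcal T_{P_1}\times\mathbb R$ required by Proposition~\ref{prop:gamma}. I will check this directly. For $h=\alpha(Q-P)$ with $Q\in\mathcal P$, and with $\pi_a(Q)>0$, the formula \eqref{eq:pi-U-Pa-dots} gives
\[
\dot P_{a,P}[h]=\alpha\,\frac{\pi_a(Q)}{\pi_a(P)}\,(Q_a-P_a).
\]
This is a positive multiple of a difference of arm-specific laws, so it lies in the arm-specific cone $\mathcal T_{P_a}$.

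The case $\pi_a(Q)=0$ needs separate handling. There $\dot P_{a,P}[h]=-\alpha P_a$, which is not in the cone. I will deal with it in one of two ways. The first is to note that the chain rule only requires the inner derivative to land in the closed span, provided the outer derivative (Proposition~\ref{prop:gamma}, Lemma~\ref{lemma:f}) is differentiable tangentially to that span; the proofs of those results only use uniform Hardy--Krause bounds on $h_n-h$, which hold on $\operatorname{span}$ as well. The second is to enlarge the tangent sets to their spans. I expect this tangent-set bookkeeping to be the main obstacle. My first attempt will be to observe that the proofs of Lemmas~\ref{lemma:expectation} and~\ref{lemma:f} go through verbatim for directions in $\mathcal L_{P_a}$.

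\textbf{Step 2 (the map $\eta_a$).} The map $\theta\mapsto(\mathrm{pr}_a\theta,\gamma_a(\theta))$ is differentiable at $\theta$. The projection part follows from \eqref{eq:projection-derivative}. The coefficient part follows from Proposition~\ref{prop:gamma}, which applies by assumption~2. Because $\mathcal P$ contains the population law, $P_0$ and $P_1$ are probability-induced.

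I then compose with $f$ using Lemma~\ref{lemma:f} at $(P_a,g)$ with $g=\gamma_a(\theta)$. That lemma holds for every fixed $g$, so no condition on $g$ is needed. The chain rule gives the derivative of $\eta_a$ at $P$ tangentially to $\mathcal T_P$:
\[
\dot\eta_{a,P}[h]=\dot f_{P_a,\gamma_a(\theta)}\bigl[\mathrm{pr}_a\{\dot\lambda_P[h]\},\,\dot\gamma_{a,\theta}[\dot\lambda_P[h]]\bigr].
\]
Substituting the formula from Lemma~\ref{lemma:f} yields exactly \eqref{eq:eta-dot-definition}. Continuity and linearity of $\dot\eta_{a,P}$ on $\mathcal L_P$ follow because it is a composition of continuous linear maps. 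Pairing the two arms gives the joint map $(\eta_0,\eta_1)$ with derivative $(\dot\eta_{0,P},\dot\eta_{1,P})$.

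\textbf{Step 3 (the outer map $\Psi$).} By construction, $(\dot\eta_{0,P},\dot\eta_{1,P})$ maps $\mathcal T_P$ into $\mathcal T_{\Psi,P}$, and by continuity it maps $\mathcal L_P$ into $\mathcal L_{\Psi,P}$. Assumption~3 supplies Hadamard differentiability of $\Psi$ at $(\eta_0(P),\eta_1(P))$ tangentially to $\mathcal T_{\Psi,P}$.

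A last domain point must be settled: for the perturbed sequences $P+t_nh_n\in\mathbb D_\Phi$, the images $(\eta_0,\eta_1)(P+t_nh_n)$ must lie in $\mathbb D_\Psi$. I take this to be built into the definition of $\mathbb D_\Psi$ as a domain containing the relevant arm-specific inputs. If needed, I will restrict to sequences for which it holds, which suffices for the delta method.

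The chain rule \citep[Theorem~20.9]{van2000asymptotic} then yields \eqref{eq:dot-Phi_P}. The result is a continuous linear map $\mathcal L_P\to\mathbb R^d$, since it is the composition of the continuous linear maps $\dot\Psi$ and $(\dot\eta_{0,P},\dot\eta_{1,P})$.
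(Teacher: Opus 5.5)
Your proposal is correct and follows the paper's chain-rule argument. The paper groups the composition as $T_a=f\circ(\mathrm{pr}_a,\gamma_a)$ followed by $\eta_a=T_a\circ\lambda$, and only asserts that $\dot\lambda_P$ maps $\mathcal T_P$ into $\mathcal T_{P_0}\times\mathcal T_{P_1}\times\mathbb R$. Your formula $\dot P_{a,P}[\alpha(Q-P)]=\alpha\,\pi_a(Q)\,\pi_a(P)^{-1}(Q_a-P_a)$ actually verifies that step.

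The one slip is in your edge case $\pi_a(Q)=0$. There the second term $-U_a(P)\,\dot\pi_{a,P}[h]/\pi_a(P)^2=+\alpha P_a$ cancels the first term $\dot U_{a,P}[h]/\pi_a(P)=-\alpha P_a$. So $\dot P_{a,P}[h]=0$, which lies in $\mathcal T_{P_a}$ because it equals $\alpha'(F_{P_a}-F_{P_a})$ for any $\alpha'>0$. The detour through span-enlarged tangent sets is therefore unnecessary.
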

\begin{proof}
For each $a\in\{0,1\}$, define $ T_a(P_0,P_1,\pi) := f\bigl(\mathrm{pr}_a(P_0,P_1,\pi),\,\gamma_a(P_0,P_1,\pi)\bigr). $ Then $\eta_a=T_a\circ\lambda$.

\medskip

\noindent\textbf{Step 1: Differentiability of $T_a$.}
The map $T_a$ is the composition of
\[
(P_0,P_1,\pi)
\mapsto
\bigl(\mathrm{pr}_a(P_0,P_1,\pi),\gamma_a(P_0,P_1,\pi)\bigr)
\]
with the map $f$. By \eqref{eq:projection-derivative} and Proposition~\ref{prop:gamma}, the
pair map
\[
(P_0,P_1,\pi)
\mapsto
\bigl(\mathrm{pr}_a(P_0,P_1,\pi),\gamma_a(P_0,P_1,\pi)\bigr)
\]
is Hadamard differentiable at $\lambda(P)$ tangentially to
$\mathcal T_{P_0}\times\mathcal T_{P_1}\times\mathbb R$. Its derivative is
\[
(h_0,h_1,r)
\mapsto
\bigl(h_a,\dot\gamma_{a,\lambda(P)}[h_0,h_1,r]\bigr).
\]
For tangential directions $(h_0,h_1,r)\in\mathcal T_{P_0}\times\mathcal T_{P_1}\times\mathbb R$, the first component $h_a$ belongs to $\mathcal T_{P_a}$, so the derivative of the pair map takes tangential directions into the tangent set required by Lemma~\ref{lemma:f}. Applying the chain rule with Lemma~\ref{lemma:f}, we obtain that $T_a$ is Hadamard differentiable at $\lambda(P)$ tangentially to $\mathcal T_{P_0}\times\mathcal T_{P_1}\times\mathbb R$, with derivative
\[
\dot T_{a,\lambda(P)}[h_0,h_1,r]
=
\dot f_{P_a,\gamma_a\{\lambda(P)\}}
\bigl[
h_a,\dot\gamma_{a,\lambda(P)}[h_0,h_1,r]
\bigr],
\]
where $P_a=\mathrm{pr}_a\{\lambda(P)\}$. The displayed formula defines a continuous linear map on the corresponding linear domain because $\mathrm{pr}_a$, $\dot\gamma_{a,\lambda(P)}$, and $\dot f_{P_a,\gamma_a\{\lambda(P)\}}$ are continuous linear maps on their respective linear domains.

\medskip

\noindent\textbf{Step 2: Differentiability of $\eta_a$.} By Proposition~\ref{prop:lambda}, $\lambda$ is Hadamard differentiable at $P$ tangentially to $\mathcal T_P$, with derivative $ \dot\lambda_P:\mathcal L_P \to \ell^\infty(\mathcal V_{XY})\times \ell^\infty(\mathcal V_{XY})\times\mathbb R $ a continuous linear map. For tangential directions $h\in\mathcal T_P$, $\dot\lambda_P[h]$ lies in the tangent set for $T_a$, so the chain rule applied to $\eta_a=T_a\circ\lambda$ gives that $\eta_a$ is Hadamard differentiable tangentially to $\mathcal T_P$.

Its derivative is the continuous linear map
\[
\dot\eta_{a,P}
:=
\dot T_{a,\lambda(P)}\circ \dot\lambda_P
:
\mathcal L_P\to \ell^\infty(\mathcal V_{XY}).
\]
Thus, for $h\in\mathcal L_P$,
\[
\dot\eta_{a,P}[h]
=
\dot f_{P_a,\gamma_a\{\lambda(P)\}}
\Bigl[
\mathrm{pr}_a\{\dot\lambda_P[h]\},\,
\dot\gamma_{a,\lambda(P)}[\dot\lambda_P[h]]
\Bigr].
\]
Using the derivative formula for $f$ from Lemma~\ref{lemma:f}, we obtain \eqref{eq:eta-dot-definition}.

\medskip

\noindent\textbf{Step 3: Differentiability of $\Phi$.} By assumption, $\Psi$ is Hadamard differentiable at $(\eta_0(P),\eta_1(P))$ tangentially to $\mathcal T_{\Psi,P}$ with derivative a continuous linear map $ \dot\Psi_{\eta_0(P),\eta_1(P)}: \mathcal L_{\Psi,P}\to\mathbb R^d. $ Since the map $ h \mapsto \bigl(\dot\eta_{0,P}[h],\dot\eta_{1,P}[h]\bigr) $ is continuous and linear from $\mathcal L_P$ into $\mathcal L_{\Psi,P}$, the chain rule applied to $ \Phi(P)=\Psi\bigl(\eta_0(P),\eta_1(P)\bigr) $ implies that $\Phi$ is Hadamard differentiable at $P$ tangentially to $\mathcal T_P$. Its derivative is the continuous linear map in \eqref{eq:dot-Phi_P}.
\end{proof}

\subsection{Continuous extension of the derivative}
\label{sec:continuous-extension}

The Hadamard differentiability result above is sufficient for the ordinary functional delta method, and hence for asymptotic normality and bootstrap validity of the plug-in estimator. In this subsection we establish a stronger property needed for Corollary~\ref{thm:Phi_asymptotic_linearity}: the derivative $\dot\Phi_P$, already defined as a continuous linear map on $\mathcal L_P$, admits a continuous linear extension to a larger space of càdlàg perturbations.

\subsubsection{Extension of the derivative components}

We do not extend the derivative to all of $\ell^\infty(\mathcal V)$, since an arbitrary bounded function on $\mathcal V$ may not correspond, through $\tau$, to a càdlàg CDF-indexed perturbation, and the Lebesgue--Stieltjes integrals appearing in the derivative may then be ill-defined. Instead, define
\begin{equation}
\label{eq:C}
\mathcal C
:=
\left\{
h\in\ell^\infty(\mathcal V):\tau(h)\in\mathbb D_S
\right\},
\qquad
\|h\|_{\mathcal C}
:=
\|\tau(h)\|_\infty. \nonumber
\end{equation}
Since $\mathbb D_{F_P}\subseteq\mathbb D_S$, we have $\mathcal T_P\subseteq\mathcal C$, and because $\mathcal C$ is a closed linear subspace of $\ell^\infty(\mathcal V)$ under the ambient sup norm, $\mathcal L_P$ is also contained in $\mathcal C$. The spaces $\mathcal L_P$ and $\mathcal C$ play different roles: $\mathcal L_P$ is the linear derivative domain used in the Hadamard differentiability argument, while $\mathcal C$ is a càdlàg extension space on which the same derivative formula can be interpreted continuously.

We will also use the arm-specific analogue of $\mathcal C$. Let $\mathbb D_{XY}$ denote the
multivariate Skorokhod space of càdlàg functions on $\mathcal Z_{XY}$, equipped
with the sup norm, and define
\begin{equation}
\label{eq:C-prime}
\mathcal C_{XY}
:=
\left\{
h\in\ell^\infty(\mathcal V_{XY}):\tau_{XY}(h)\in\mathbb D_{XY}
\right\},
\qquad
\|h\|_{\mathcal C_{XY}}
:=
\|\tau_{XY}(h)\|_\infty. \nonumber
\end{equation}
For $h\in\mathcal C_{XY}$, integrals with respect to $h$ are interpreted through
$\tau_{XY}(h)$.

We first extend the derivative of the conditional-law map $\lambda$. For
$h\in\mathcal C$, write $F_h:=\tau(h)$ and define
\[
\tilde F_{h,a}(s,t)
:=
\begin{cases}
F_h(s,0,t), & a=0,\\
F_h(s,1,t)-F_h(s,0,t), & a=1.
\end{cases}
\]
Thus $\tilde F_{h,a}$ is the CDF-indexed version of the arm-specific numerator
perturbation for the event $A=a$. If $(\bar s,\bar t)$ denotes the upper endpoint
of the support of $(X,Y)$, then the extended versions of the numerator and
treatment-probability derivatives are
\[
\dot U_{a,P}[h](v_{s,t})
=
\tilde F_{h,a}(s,t),
\qquad
\dot\pi_{a,P}[h]
=
\tilde F_{h,a}(\bar s,\bar t).
\]
We therefore define
\begin{equation}
\label{eq:dot-P_a-as-cdfs}
\dot P_{a,P}[h](v_{s,t})
=
\frac{1}{\pi_a(P)}
\left\{
\tilde F_{h,a}(s,t)
-
P_a(v_{s,t})\tilde F_{h,a}(\bar s,\bar t)
\right\},
\qquad h\in\mathcal C.
\end{equation}
For $h\in\mathcal L_P$, this agrees with the derivative formula in
\eqref{eq:pi-U-Pa-dots}.

\begin{lemma}[Continuous extension of $\dot\lambda_P$]
\label{lemma:lambda_continuous_extension}
Assume $\pi_a(P)>0$ for $a\in\{0,1\}$. The derivative $\dot{\lambda}_P$ from Proposition~\ref{prop:lambda}, defined on $\mathcal L_P$, extends continuously to $\mathcal C\supseteq \mathcal L_P$, yielding a continuous linear map $ \dot\lambda_P: \mathcal C \to \mathcal C_{XY} \times \mathcal C_{XY} \times \mathbb R $ via $ \dot\lambda_P[h] = \bigl( \dot P_{0,P}[h], \dot P_{1,P}[h], \dot\pi_{1,P}[h] \bigr). $
\end{lemma}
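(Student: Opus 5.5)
The plan is to work directly with the CDF-indexed formula \eqref{eq:dot-P_a-as-cdfs} and verify four things in order. First, the map sends $\mathcal C$ into $\mathcal C_{XY}\times\mathcal C_{XY}\times\mathbb R$. Second, it is linear. Third, it is bounded in the norm $\|\cdot\|_{\mathcal C}$. Fourth, it agrees with the derivative of Proposition~\ref{prop:lambda} on $\mathcal L_P$. Since $\mathcal L_P\subseteq\mathcal C$, a bounded linear map on $\mathcal C$ that coincides with $\dot\lambda_P$ on $\mathcal L_P$ is the continuous extension. It is unique on $\mathcal L_P$ by density of $\operatorname{span}(\mathcal T_P)$ there.

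\textbf{Well-definedness.} For $h\in\mathcal C$, $F_h=\tau(h)$ is càdlàg on $\mathcal Z$. The sections $(s,t)\mapsto F_h(s,0,t)$ and $(s,t)\mapsto F_h(s,1,t)$ are restrictions of a multivariate càdlàg function to a coordinate slice. They are therefore càdlàg on $\mathcal Z_{XY}$, and so is their difference $\tilde F_{h,a}$. The function $(s,t)\mapsto P_a(v_{s,t})$ is the arm-$a$ CDF, which is càdlàg, and $\tilde F_{h,a}(\bar s,\bar t)$ is a scalar. Hence $\tau_{XY}(\dot P_{a,P}[h])$ is a linear combination of càdlàg functions, which gives $\dot P_{a,P}[h]\in\mathcal C_{XY}$. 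The third component $\dot\pi_{1,P}[h]=\tilde F_{h,1}(\bar s,\bar t)$ is a real number. Linearity in $h$ is immediate, because $h\mapsto F_h$ is linear (Lemma~\ref{lemma:tau}) and every subsequent operation is evaluation or subtraction.

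\textbf{Boundedness.} Each evaluation satisfies $|F_h(z)|\le\|h\|_{\mathcal C}$, so $\|\tilde F_{h,a}\|_\infty\le 2\|h\|_{\mathcal C}$. Using $0\le P_a(v_{s,t})\le1$, we get
\[
\|\dot P_{a,P}[h]\|_{\mathcal C_{XY}}\le \frac{4}{\pi_a(P)}\|h\|_{\mathcal C},\qquad |\dot\pi_{1,P}[h]|\le 2\|h\|_{\mathcal C}.
\]
Under the product norm, the map therefore has operator norm at most $4/\pi_0(P)+4/\pi_1(P)+2<\infty$, and this uses the assumption $\pi_a(P)>0$.

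\textbf{Agreement on $\mathcal L_P$.} This is the step needing the most care. For $h\in\mathcal T_P$, the perturbation $\tau(h)=\alpha(F_{P'}-F_P)$ is a difference of distribution functions. For such an integrator, the Lebesgue--Stieltjes integral of a lower-rectangle indicator equals the evaluation of the integrator at the corner. That gives $\int\mathbf 1\{x\le s,a'\le a,y\le t\}\,dh=F_h(s,a,t)$. The indicator $\mathbf 1\{a'=1\}$ splits as $\mathbf 1\{a'\le1\}-\mathbf 1\{a'\le0\}$. Together these show that $\dot U_{a,P}[h](v_{s,t})=\tilde F_{h,a}(s,t)$ and $\dot\pi_{a,P}[h]=\tilde F_{h,a}(\bar s,\bar t)$. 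Substituting into \eqref{eq:pi-U-Pa-dots}, with $U_a(P)(v)/\pi_a(P)=P_a(v)$, reproduces \eqref{eq:dot-P_a-as-cdfs}. Both sides are linear in $h$, so the identity extends to $\operatorname{span}(\mathcal T_P)$. Both sides are also continuous in the ambient sup norm: the left by Proposition~\ref{prop:lambda}, the right by the bound above, noting that $\|\cdot\|_{\mathcal C}$ equals the $\ell^\infty(\mathcal V)$ norm. The identity therefore passes to the closure $\mathcal L_P$. The one subtle point is the corner-evaluation identity, which relies on $\mathcal Z$ being a compact rectangle with the lower-left corner handled by the càdlàg and zero-boundary conventions for CDFs. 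I would state it explicitly for differences of CDFs rather than for general $h$.
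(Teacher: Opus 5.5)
Your proposal is correct and follows essentially the same route as the paper: you use the CDF-indexed formula, check that the image is c\`adl\`ag, note linearity, and bound the operator norm using $\pi_a(P)>0$. Your arm-$0$ constant ($4/\pi_0(P)$ versus the paper's $2/\pi_0(P)$) is slightly looser, which does not matter. The one real addition is your corner-evaluation argument showing agreement with Proposition~\ref{prop:lambda} on $\operatorname{span}(\mathcal T_P)$ and then, by continuity, on $\mathcal L_P$; the paper only asserts this agreement in the sentence preceding the lemma.
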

\begin{proof}
Linearity of $\dot \lambda_P$ follows from the definitions of $\tilde F_{h,a}$, $\dot\pi_{a,P}[h]$, and $\dot P_{a,P}[h]$. Indeed, $h\mapsto F_h=\tau(h)$ is linear by definition, and the remaining operations are fixed linear combinations and endpoint evaluations.

We now check that the image lies in the claimed space. If $h\in\mathcal C$, then $F_h\in\mathbb D_S$, so $(s,t)\mapsto\tilde F_{h,a}(s,t)$ is càdlàg on $\mathcal Z_{XY}$. Also, $(s,t)\mapsto P_a(v_{s,t})$ is the arm-specific distribution function, and hence is càdlàg. Equation~\eqref{eq:dot-P_a-as-cdfs} is therefore a linear combination of càdlàg functions, so $\dot P_{a,P}[h]\in\mathcal C_{XY}$.

It remains to show boundedness; continuity then follows from linearity. For any
$h\in\mathcal C$,
\[
\|\tilde F_{h,a}\|_\infty
\le
C_a\|F_h\|_\infty
=
C_a\|h\|_{\mathcal C},
\qquad
C_0=1,\quad C_1=2,
\]
and the same bound holds at the upper endpoint $(\bar s,\bar t)$. Hence,
using $\|P_a\|_{\ell^\infty(\mathcal V_{XY})}\le1$,
\begin{align*}
\left\|
\dot P_{a,P}[h]
\right\|_{\mathcal C_{XY}}
&\le
\frac{1}{\pi_a(P)}
\left\{
\|\tilde F_{h,a}\|_\infty
+
\|P_a\|_{\ell^\infty(\mathcal V_{XY})}
|\tilde F_{h,a}(\bar s,\bar t)|
\right\} \\
&\le
\frac{2C_a}{\pi_a(P)}
\|h\|_{\mathcal C}.
\end{align*}
Thus $h\mapsto\dot P_{a,P}[h]$ is a bounded linear map from
$\mathcal C$ to $\mathcal C_{XY}$, and hence is continuous. Similarly,
\[
\left|
\dot\pi_{1,P}[h]
\right|
=
|\tilde F_{h,1}(\bar s,\bar t)|
\le
2\|h\|_{\mathcal C},
\]
so $h\mapsto\dot\pi_{1,P}[h]$ is a bounded linear map from $\mathcal C$ to
$\mathbb R$, and hence is continuous.

Combining the two arm-specific components with the scalar component proves that $\dot\lambda_P$ is continuous and linear on $\mathcal C$.
\end{proof}
As an immediate consequence of Lemma~\ref{lemma:lambda_continuous_extension},
for each $a\in\{0,1\}$,
\begin{equation}
\label{eq:projected-lambda-continuous-extension}
h
\mapsto
\dot P_{a,P}[h]
=
\mathrm{pr}_a\{\dot\lambda_P[h]\}
\end{equation}
is a continuous linear map from $\mathcal C$ to $\mathcal C_{XY}$.

We next extend the derivative of the balancing coefficient map, evaluated along the derivative of $\lambda$. From this point forward in the extension argument, we use the randomized trial structure. Under randomization, $P_0$ and $P_1$ have the same covariate marginal distribution, so $\mathbb E_{P_0}[X]=\mathbb E_{P_1}[X]$. This removes the term involving $\dot\pi_1$ from the derivative of $\gamma_a$.

\begin{lemma}[Continuity of $\dot\gamma_a\circ\dot\lambda_P$]
\label{lemma:gamma_lambda_continuous_extension}
Fix $a\in\{0,1\}$. Assume $B_a=\mathbb E_{P_a}[XX^\top]$ is invertible, and that $\pi_b(P)>0$ for $b\in\{0,1\}$. Under randomization,
$\dot\gamma_{a,\lambda(P)}\circ\dot\lambda_P$, defined on $\mathcal L_P$,
extends continuously to $\mathcal C\supseteq\mathcal L_P$, yielding a
continuous linear map $\mathcal C\to\mathbb R^{q+1}$. For $h\in\mathcal C$,
this extension is given by
\begin{equation}
\label{eq:gamma-lambda-continuous-extension}
\dot\gamma_{a,\lambda(P)}[\dot\lambda_P[h]]
=
B_a^{-1}
\left[
\int x\,d\{\dot P_{a,P}[h]\}
-
(1-\pi)\int x\,d\{\dot P_{0,P}[h]\}
-
\pi\int x\,d\{\dot P_{1,P}[h]\}
\right].
\end{equation}
\end{lemma}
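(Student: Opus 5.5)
The plan is to start from the explicit formula for $\dot\gamma_{a,P_0,P_1,\pi}$ in Proposition~\ref{prop:gamma}, evaluate it along $\dot\lambda_P[h]=(\dot P_{0,P}[h],\dot P_{1,P}[h],\dot\pi_{1,P}[h])$, and simplify using randomization. Under $A\independent X$ we have $\mathbb E_{P_a}[X]=\mathbb E_P[X]=(1-\pi)\mathbb E_{P_0}[X]+\pi\mathbb E_{P_1}[X]$, where $\pi=\pi_1(P)$. Hence $\gamma_a\{\lambda(P)\}=0$, so the term $-B_a^{-1}\dot B_a[\cdot]\gamma_a$ vanishes. Also $\mathbb E_{P_1}[X]-\mathbb E_{P_0}[X]=0$, so the $r$-term in $\dot c$ vanishes, and with it all dependence on $\dot\pi_{1,P}[h]$. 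For $h\in\mathcal L_P$, what remains is exactly the right-hand side of \eqref{eq:gamma-lambda-continuous-extension}, using $\dot b_a=\int x\,d h_a$ and $\dot c=(1-\pi)\int x\,dh_0+\pi\int x\,dh_1$ with $h_b=\dot P_{b,P}[h]$.

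Next, I take the right-hand side of \eqref{eq:gamma-lambda-continuous-extension} as the definition of the map on all of $\mathcal C$. It is well defined: by Lemma~\ref{lemma:lambda_continuous_extension} (equivalently \eqref{eq:projected-lambda-continuous-extension}), $h\mapsto\dot P_{b,P}[h]$ is a continuous linear map $\mathcal C\to\mathcal C_{XY}$. The Lebesgue--Stieltjes integral $\int x\,d\{\tau_{XY}(k)\}$ therefore makes sense for each $k\in\mathcal C_{XY}$ (coordinatewise, $x_j$ is càdlàg of bounded Hardy--Krause variation on the compact rectangle $\mathcal Z_{XY}$). The same integration-by-parts bound of \citet{gill1993multivariate} used in Lemma~\ref{lemma:expectation} gives $\|\int x\,dk\|_2\le c\,\|k\|_{\mathcal C_{XY}}$. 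This bound holds on all of $\mathcal C_{XY}$, since the argument only requires $k$ to be càdlàg and bounded, not to lie in a tangent cone. Linearity is clear.

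Composing these bounds gives
\[
\bigl\|\dot\gamma_{a,\lambda(P)}[\dot\lambda_P[h]]\bigr\|_2\le \|B_a^{-1}\|\,c\,\bigl(1+(1-\pi)+\pi\bigr)\max_b\frac{2C_b}{\pi_b(P)}\,\|h\|_{\mathcal C},
\]
so the map is bounded linear, hence continuous, on $\mathcal C$. Because it agrees with $\dot\gamma_{a,\lambda(P)}\circ\dot\lambda_P$ on $\mathcal L_P$, it is the continuous extension. Uniqueness also holds on the closure of $\mathcal L_P$ in the $\|\cdot\|_{\mathcal C}$ norm, which coincides with the ambient sup norm via $\tau$.

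The main obstacle is the integration-by-parts bound on general elements of $\mathcal C_{XY}$ rather than on differences of distribution functions. I would check that the inequality from Eq.~19 of \citet{gill1993multivariate} requires only the integrand to have finite Hardy--Krause variation and the integrator to be càdlàg with finite sup norm. If it instead needs the integrator to be of bounded variation, I would define $\int x\,dk$ through the integration-by-parts formula itself, as boundary terms plus $\int k\,dx$. This definition is automatically sup-norm continuous in $k$ and agrees with the Stieltjes integral on $\mathcal L_P$.
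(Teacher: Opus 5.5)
Your proposal is correct and follows the paper's proof. You simplify the formula of Proposition~\ref{prop:gamma} under randomization, where $\gamma_a\{\lambda(P)\}=0$ and $\mathbb E_{P_1}[X]=\mathbb E_{P_0}[X]$ remove the $\dot B_a$ and $\dot\pi_{1,P}$ terms, and then compose the bounded map $h\mapsto\dot P_{b,P}[h]$ from Lemma~\ref{lemma:lambda_continuous_extension} with the integration-by-parts bound of Lemma~\ref{lemma:expectation}, which makes $Q\mapsto\int x\,dQ$ continuous on $\mathcal C_{XY}$. Your caveat that $\int x\,dk$ must be defined through integration by parts when $k$ is not of bounded variation addresses a point the paper passes over silently, and your fallback handles it correctly.
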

\begin{proof}
Under randomization, the general derivative formula for $\gamma_a$ reduces to \eqref{eq:gamma-lambda-continuous-extension}. For each $j$, $x\mapsto x_j$ is càdlàg and has bounded Hardy--Krause variation on the bounded support $\mathcal Z_{XY}$. Therefore the same Lebesgue--Stieltjes bound used in Lemma~\ref{lemma:expectation} implies that $ Q\mapsto \int x\,dQ $ is a continuous linear map from $\mathcal C_{XY}$ to $\mathbb R^{q+1}$. Combining this result with the continuity and linearity of $h\mapsto\dot P_{a,P}[h]$ in \eqref{eq:projected-lambda-continuous-extension}, we have that $ h \mapsto \int x\,d\{\dot P_{a,P}[h]\} $ is continuous and linear from $\mathcal C$ to $\mathbb R^{q+1}$ for each $a\in\{0,1\}$. The right-hand side of \eqref{eq:gamma-lambda-continuous-extension} is a fixed finite linear combination of these maps, followed by multiplication by the fixed matrix $B_a^{-1}$. Thus $h\mapsto\dot\gamma_{a,\lambda(P)}[\dot\lambda_P[h]]$ is continuous and linear on $\mathcal C$.
\end{proof}
We now extend the derivative of the arm-specific weighted distribution map $\eta_a$. Under randomization, $\gamma_a\{\lambda(P)\}=0$, so the derivative from Theorem~\ref{thm:Phi} reduces to
\begin{equation}
\label{eq:eta-derivative-extension}
\dot\eta_{a,P}[h](v)
=
\dot P_{a,P}[h](v)
-
\dot\gamma_{a,\lambda(P)}[\dot\lambda_P[h]]^\top M_a(v),
\qquad
M_a(v):=\int x\,v(x,y)\,P_a(dx,dy).
\end{equation}

\begin{lemma}[Continuity of $\dot\eta_{a,P}$]
\label{lemma:eta_continuous_extension}
Assume the conditions of Lemmas~\ref{lemma:lambda_continuous_extension} and~\ref{lemma:gamma_lambda_continuous_extension}. Under randomization, $\dot\eta_{a,P}$, defined on $\mathcal L_P$, extends continuously to $\mathcal C\supseteq\mathcal L_P$, yielding a continuous linear map $\dot\eta_{a,P}:\mathcal C\to\mathcal C_{XY}$.
\end{lemma}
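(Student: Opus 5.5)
The extension will be read off from formula \eqref{eq:eta-derivative-extension}, which expresses $\dot\eta_{a,P}[h]$ as a sum of two terms. We treat each term separately, using the continuity results already established on $\mathcal C$. The first term is $h\mapsto \dot P_{a,P}[h]$. By \eqref{eq:projected-lambda-continuous-extension}, which is a consequence of Lemma~\ref{lemma:lambda_continuous_extension}, it is a continuous linear map from $\mathcal C$ to $\mathcal C_{XY}$, with operator norm at most $2C_a/\pi_a(P)$. The second term factors as the composition of two maps:
\[
h\;\mapsto\;u(h):=\dot\gamma_{a,\lambda(P)}[\dot\lambda_P[h]]\in\mathbb R^{q+1}
\;\mapsto\; u(h)^\top M_a(\cdot).
\]
Lemma~\ref{lemma:gamma_lambda_continuous_extension} shows that the first map is continuous and linear from $\mathcal C$ into $\mathbb R^{q+1}$. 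It therefore remains to check that $u\mapsto u^\top M_a$ is a bounded linear map from $\mathbb R^{q+1}$ into $\mathcal C_{XY}$.

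\textbf{Step 1: the image lies in $\mathcal C_{XY}$.} Consider the function
\[
(s,t)\mapsto M_a(v_{s,t})=\int x\,\mathbf 1\{x\le s,\,y\le t\}\,P_a(dx,dy).
\]
Each coordinate of this function is a difference of two finite measures of lower rectangles: split $x_j=x_j^+-x_j^-$, and note that both parts are bounded because $\mathcal X$ is bounded. The distribution function of a finite measure is càdlàg, by continuity from above on decreasing rectangles and existence of limits along increasing ones. Hence $\tau_{XY}(u^\top M_a)$ is càdlàg, which puts $u^\top M_a$ in $\mathcal C_{XY}$. We expect this to be the only non-bookkeeping step. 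The key observation is that $M_a$ is a signed-measure CDF, so it is automatically càdlàg.

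\textbf{Step 2: boundedness.} By Cauchy--Schwarz and \eqref{eq:f-proof-X-bound},
\[
\|u^\top M_a\|_{\mathcal C_{XY}}\le C_X\|u\|_2.
\]
Combining this with the bound from Lemma~\ref{lemma:gamma_lambda_continuous_extension} (call its constant $K_a$) gives
\[
\|\dot\eta_{a,P}[h]\|_{\mathcal C_{XY}}\le \Bigl(2C_a/\pi_a(P)+C_XK_a\Bigr)\|h\|_{\mathcal C}.
\]
Linearity holds because the map is a sum and composition of linear maps. Thus $\dot\eta_{a,P}$ is a bounded, hence continuous, linear map from $\mathcal C$ to $\mathcal C_{XY}$.

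\textbf{Step 3: it is an extension.} For $h\in\mathcal L_P$, each component agrees with the corresponding derivative on $\mathcal L_P$: $\dot P_{a,P}$ by the remark after \eqref{eq:dot-P_a-as-cdfs}, and $\dot\gamma\circ\dot\lambda$ by Lemma~\ref{lemma:gamma_lambda_continuous_extension}. Under randomization $\gamma_a\{\lambda(P)\}=0$, so \eqref{eq:eta-dot-definition} reduces to \eqref{eq:eta-derivative-extension}. The new map therefore coincides with $\dot\eta_{a,P}$ from Theorem~\ref{thm:Phi} on $\mathcal L_P$, and is its continuous extension; it is unique because it is determined by continuity on the closure.
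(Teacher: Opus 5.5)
Your proposal is correct and follows the paper's proof essentially step for step. It uses the same split of \eqref{eq:eta-derivative-extension} into $\dot P_{a,P}[h]$ and $-u(h)^\top M_a$, the same observation that each coordinate of $M_a$ is the distribution function of a finite signed measure (hence càdlàg), and the same bound $\sup_{v}\|M_a(v)\|_2<\infty$. Your operator-norm bound is a repackaging of the paper's sequential continuity check, and your Step~3 spells out the agreement on $\mathcal L_P$ that the paper leaves implicit.

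The one thing to fix is the closing uniqueness remark, which is false, though not needed. $\mathcal L_P$ is already closed and is a proper subspace of $\mathcal C$: every $h\in\mathcal L_P$ has $\tau(h)$ vanishing at the upper endpoint of the support, whereas $h$ with $\tau(h)\equiv 1$ lies in $\mathcal C$. So continuity does not pin down the extension off $\mathcal L_P$, but the lemma asserts only existence, so the rest of your argument is unaffected.
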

\begin{proof}
The first term of \eqref{eq:eta-derivative-extension} belongs to $\mathcal C_{XY}$ by
\eqref{eq:projected-lambda-continuous-extension}. For the second term, write
$M_a(s,t):=M_a(v_{s,t})$. Each coordinate of $M_a$ has the form
\[
M_{a,j}(s,t)
=
\int x_j \mathbf 1\{x\le s,y\le t\}\,P_a(dx,dy),
\]
and is therefore the distribution function of the finite signed measure $B\mapsto\int_B x_j\,dP_a$. Thus each coordinate of $M_a$ is càdlàg on $\mathcal Z_{XY}$. Since $\dot\gamma_{a,\lambda(P)}[\dot\lambda_P[h]]\in\mathbb R^{q+1}$, the map $(s,t)\mapsto \dot\gamma_{a,\lambda(P)}[\dot\lambda_P[h]]^\top M_a(s,t)$ is a finite linear combination of càdlàg functions and is therefore càdlàg. Thus $\dot\eta_{a,P}[h]\in\mathcal C_{XY}$.

Linearity follows from the same display, since
$h\mapsto\dot P_{a,P}[h]$ and
$h\mapsto\dot\gamma_{a,\lambda(P)}[\dot\lambda_P[h]]$ are linear and $M_a$ is
fixed. To prove continuity, let $h_n\to h$ in $\mathcal C$. Then
\begin{align*}
\|\dot\eta_{a,P}[h_n]-\dot\eta_{a,P}[h]\|_{\mathcal C_{XY}}
&\le
\|\dot P_{a,P}[h_n]-\dot P_{a,P}[h]\|_{\mathcal C_{XY}} \\
&\quad+
\|\dot\gamma_{a,\lambda(P)}[\dot\lambda_P[h_n]]
-
\dot\gamma_{a,\lambda(P)}[\dot\lambda_P[h]]\|_2
\sup_{v\in\mathcal V_{XY}}\|M_a(v)\|_2 .
\end{align*}
The first term converges to zero by \eqref{eq:projected-lambda-continuous-extension}, and the second difference converges to zero by Lemma~\ref{lemma:gamma_lambda_continuous_extension}. In addition, $ \sup_{v\in\mathcal V_{XY}}\|M_a(v)\|_2 \le \sup_{x\in\mathcal X}\|x\|_2 <\infty, $ because $X$ has bounded support. Therefore $\dot\eta_{a,P}[h_n]\to\dot\eta_{a,P}[h]$ in $\mathcal C_{XY}$.
\end{proof}

\subsubsection{Continuity of the derivative of $\Phi$}
\label{sec:Phi-derivative-continuity}

The preceding lemmas extend the inner derivatives
$\dot\eta_{0,P}$ and $\dot\eta_{1,P}$ to $\mathcal C$. The remaining condition is
estimand-specific: the outer derivative of $\Psi$ must be continuous on the
corresponding arm-specific extension space.

\begin{assumption}[Continuous extension of the derivative of $\Psi$]
\label{assumption:Psi_continuous_extension}
The derivative $\dot\Psi_{\eta_0(P),\eta_1(P)}$ extends to a continuous linear map $ \dot\Psi_{\eta_0(P),\eta_1(P)} : \mathcal C_{XY}\times\mathcal C_{XY} \to \mathbb R^d . $
\end{assumption}
Assumption~\ref{assumption:Psi_continuous_extension} is the only additional estimand-specific condition used to obtain the asymptotically linear representation in Corollary~\ref{thm:Phi_asymptotic_linearity}. This type of continuity condition is standard in functional delta method arguments; it requires that the derivative of the estimand map acts continuously on the perturbations appearing in the limiting expansion. For the estimands considered in this paper, $\Psi$ depends on the arm-specific laws through finite-dimensional summaries, such as evaluations at fixed points or integrals against fixed functions, for which this continuity condition can be checked directly.

\begin{lemma}[Continuous extension of $\dot\Phi_P$]
\label{lemma:Phi_continuous_extension}
Assume the conditions of Lemma~\ref{lemma:eta_continuous_extension} hold for
$a\in\{0,1\}$, and suppose Assumption~\ref{assumption:Psi_continuous_extension}
holds. Then the derivative $\dot\Phi_P$ extends to a
continuous linear map $\dot\Phi_P:\mathcal C\to\mathbb R^d$, given by
\begin{equation}
\label{dot-Phi_P-continuous-extension}
\dot\Phi_P[h]
=
\dot\Psi_{\eta_0(P),\eta_1(P)}
\bigl[
\dot\eta_{0,P}[h],
\dot\eta_{1,P}[h]
\bigr],
\qquad h\in\mathcal C.
\end{equation}
\end{lemma}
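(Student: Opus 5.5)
The extension is obtained by composing continuous linear maps. Lemma~\ref{lemma:eta_continuous_extension} provides the inner pieces and Assumption~\ref{assumption:Psi_continuous_extension} provides the outer one. Concretely, I would define
\[
\mathcal J_P:\mathcal C\to\mathcal C_{XY}\times\mathcal C_{XY},
\qquad
\mathcal J_P[h]:=\bigl(\dot\eta_{0,P}[h],\dot\eta_{1,P}[h]\bigr),
\]
equip the product with the sum norm $\|(Q_0,Q_1)\|:=\|Q_0\|_{\mathcal C_{XY}}+\|Q_1\|_{\mathcal C_{XY}}$, and set $\dot\Phi_P:=\dot\Psi_{\eta_0(P),\eta_1(P)}\circ\mathcal J_P$ on $\mathcal C$. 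This is exactly the right-hand side of \eqref{dot-Phi_P-continuous-extension}.

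The steps, in order:
\begin{enumerate}
\item Invoke Lemma~\ref{lemma:eta_continuous_extension} for $a=0$ and $a=1$. It gives that each $\dot\eta_{a,P}:\mathcal C\to\mathcal C_{XY}$ is linear and bounded, say with $\|\dot\eta_{a,P}[h]\|_{\mathcal C_{XY}}\le K_a\|h\|_{\mathcal C}$. Hence $\mathcal J_P$ is linear with operator norm at most $K_0+K_1$.
\item By Assumption~\ref{assumption:Psi_continuous_extension}, $\dot\Psi_{\eta_0(P),\eta_1(P)}$ is continuous and linear on $\mathcal C_{XY}\times\mathcal C_{XY}$, so it is bounded by some constant $K_\Psi$. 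The composition is therefore linear and bounded by $K_\Psi(K_0+K_1)$, hence continuous from $\mathcal C$ to $\mathbb R^d$.
\item Check that this map really extends the derivative of Theorem~\ref{thm:Phi}. For $h\in\mathcal L_P\subseteq\mathcal C$, each extended $\dot\eta_{a,P}[h]$ coincides with the derivative in \eqref{eq:eta-dot-definition}, because the extensions in Lemmas~\ref{lemma:lambda_continuous_extension}--\ref{lemma:eta_continuous_extension} agree with the original formulas on $\mathcal L_P$. Moreover, on $\mathcal L_{\Psi,P}$ the extended $\dot\Psi$ agrees with the original one by the meaning of ``extends''. Comparing with \eqref{eq:dot-Phi_P} then gives equality on $\mathcal L_P$.
\end{enumerate}

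The only point needing care, and the main obstacle, is the domain bookkeeping in Step 3. Assumption~\ref{assumption:Psi_continuous_extension} only makes sense as an extension if $\mathcal L_{\Psi,P}\subseteq\mathcal C_{XY}\times\mathcal C_{XY}$, and if the norms are compatible there.
\begin{itemize}
\item \emph{Inclusion of the generating set.} The generators $(\dot\eta_{0,P}[h],\dot\eta_{1,P}[h])$ with $h\in\mathcal T_P$ lie in $\mathcal C_{XY}\times\mathcal C_{XY}$ by Lemma~\ref{lemma:eta_continuous_extension}.
\item \emph{Closure.} $\mathcal C_{XY}$ is closed in $\ell^\infty(\mathcal V_{XY})$, since a uniform limit of càdlàg functions is càdlàg. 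So taking span and closure keeps $\mathcal L_{\Psi,P}$ inside the product.
\item \emph{Norms.} On $\mathcal C_{XY}$ the norm $\|\cdot\|_{\mathcal C_{XY}}=\|\tau_{XY}(\cdot)\|_\infty$ equals the ambient $\ell^\infty(\mathcal V_{XY})$ norm, as in the proof of Lemma~\ref{lemma:tau}. The same holds for $\|\cdot\|_{\mathcal C}$ on $\mathcal C$.
\end{itemize}
With these three facts, uniqueness of continuous extensions from the dense subspace settles the agreement claimed in Step 3.
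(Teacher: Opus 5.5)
Your proposal is correct and is the same argument as the paper's. You compose the continuous linear map $h\mapsto(\dot\eta_{0,P}[h],\dot\eta_{1,P}[h])$ from Lemma~\ref{lemma:eta_continuous_extension} with the extended $\dot\Psi_{\eta_0(P),\eta_1(P)}$ from Assumption~\ref{assumption:Psi_continuous_extension}, and then check agreement with \eqref{eq:dot-Phi_P} on $\mathcal L_P$. Your extra bookkeeping only makes explicit what the paper leaves implicit: that $\mathcal L_{\Psi,P}\subseteq\mathcal C_{XY}\times\mathcal C_{XY}$, and that the $\mathcal C$- and $\mathcal C_{XY}$-norms equal the ambient sup norms. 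Note that the density/uniqueness argument fixes agreement on $\mathcal L_P$ (via $\operatorname{span}(\mathcal T_P)$), not uniqueness of the extension to all of $\mathcal C$.
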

\begin{proof}
By Lemma~\ref{lemma:eta_continuous_extension}, $ h \mapsto \bigl(\dot\eta_{0,P}[h],\dot\eta_{1,P}[h]\bigr) $ is continuous and linear from $\mathcal C$ to $\mathcal C_{XY}\times\mathcal C_{XY}$. Composing this map with the continuous linear extension of $\dot\Psi_{\eta_0(P),\eta_1(P)}$ from Assumption~\ref{assumption:Psi_continuous_extension} yields a continuous linear map from $\mathcal C$ to $\mathbb R^d$. On $\mathcal L_P$, this composition agrees with the derivative formula for $\dot\Phi_P$, so it is the desired continuous linear extension.
\end{proof}
The next subsection uses this continuous extension to obtain the asymptotically
linear representation and the corresponding influence-function interpretation.
The ordinary asymptotic normality and bootstrap conclusions follow from the
Hadamard differentiability result in Appendix~\ref{sec:differentiability-Phi}.

\subsection{Asymptotic normality and linearity, and bootstrap validity}
\label{sec:asymptotic-linearity-bootstrap}

The preceding sections establish Hadamard differentiability of the SBW plug-in functional $\Phi$. This is sufficient for the ordinary functional delta method and, separately, for the bootstrap delta method. We first use the functional delta method to obtain the weak limit of $\Phi(P_n)$. We then show that, under Assumption~\ref{assumption:Psi_continuous_extension}, this weak limit also admits an asymptotically linear representation with an influence function. Then we show bootstrap validity for the plug-in estimator $\Phi(P_n)$.

\subsubsection{Functional delta method and influence function}
\label{sec:fdm-and-if}

Let
\begin{equation}
\label{eq:empirical-process}
P_n:=\frac1n\sum_{i=1}^n\delta_{Z_i},
\qquad
\mathbb G_n:=\sqrt n(P_n-P), \nonumber
\end{equation}
where $\delta_z$ denotes the Dirac measure at $z$. Since $\mathcal V$ is a VC class of uniformly bounded indicator functions, it is $P$-Donsker. Hence $\mathbb G_n \rightsquigarrow \mathbb G_P$ in $\ell^\infty(\mathcal V),$ where $\mathbb G_P$ is the $P$-Brownian bridge indexed by $\mathcal V$ \citep[Theorem~2.6.7]{van1996weak}.

The functional-delta-method limit yields weak convergence of the plug-in estimator $\Phi(P_n)$ when $\Phi$ is Hadamard differentiable \citep[Theorem~20.8]{van2000asymptotic}, as established in Theorem~\ref{thm:Phi}.

\begin{theorem}[Asymptotic distribution of plug-in estimator]
\label{thm:Phi_asymptotic_distribution}
Under the conditions of Theorem~\ref{thm:Phi} and letting $\Sigma_{\mathrm{sbw}}
:=
\Var\{\dot\Phi_P[\mathbb G_P]\}$,
\begin{align}
\sqrt n\{\Phi(P_n)-\Phi(P)\}
\rightsquigarrow
\dot\Phi_P[\mathbb G_P]\overset{\mathrm{d}}{=}N_d(0,\Sigma_{\mathrm{sbw}}). \label{eq:asymptotic-normality}
\end{align}
\end{theorem}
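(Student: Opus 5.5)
The plan is to apply the functional delta method \citep[Theorem~20.8]{van2000asymptotic} to the map $\Phi:\mathbb D_\Phi\subseteq\ell^\infty(\mathcal V)\to\mathbb R^d$, viewed as a map on $\ell^\infty(\mathcal V)$. Three facts are needed:
\begin{itemize}
\item the empirical process converges weakly;
\item its limit concentrates on the tangent set along which $\Phi$ was differentiated;
\item the linear image of that limit is Gaussian.
\end{itemize}

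\textbf{Weak convergence of the empirical process.} Since $\mathcal V$ is a VC class of uniformly bounded indicators, it is $P$-Donsker. Hence $\mathbb G_n=\sqrt n(P_n-P)\rightsquigarrow\mathbb G_P$ in $\ell^\infty(\mathcal V)$, where $\mathbb G_P$ is a tight, mean-zero Gaussian $P$-Brownian bridge.

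\textbf{Directions and support of the limit.} Each $P_n$ lies in $\mathbb D_\Phi$, because $\mathcal P$ contains the empirical laws. Moreover, $P_n=P+n^{-1/2}\mathbb G_n$ with $\mathbb G_n=\sqrt n(P_n-P)\in\mathcal T_P$, since $\sqrt n$ times the difference $P_n-P$ of two laws in $\mathcal P$ lies in the cone $\mathbb D_{F_P}$ after applying $\tau$.

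Theorem~20.8 requires that the limit $\mathbb G_P$ concentrates on the closure of the tangent set, or at least that the derivative be defined there. I will argue that $\mathbb G_P$ takes values in $\mathcal L_P=\overline{\operatorname{span}(\mathcal T_P)}$ almost surely. The argument has two parts.
\begin{itemize}
\item The realizations of $\mathbb G_n$ lie in $\operatorname{span}(\mathcal T_P)$, because $\mathbb G_n$ is a sum of point-mass directions $\delta_{Z_i}-P\in\mathcal T_P$.
\item $\mathcal L_P$ is closed. By the portmanteau theorem, $\Pr(\mathbb G_P\in\mathcal L_P)\ge\limsup\Pr(\mathbb G_n\in\mathcal L_P)=1$.
\end{itemize}
This is the step I expect to be the main obstacle. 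It is the only place where the interplay between the tangential set and the Gaussian limit must be checked carefully. In particular, it requires measurability and the tightness of $\mathbb G_P$ as a Borel-measurable, separable limit; the latter holds because $\mathcal V$ is Donsker with a separable limit. If the portmanteau argument is delicate, the fallback is the version of the delta method in \citet[Theorem~3.9.4]{van1996weak}, which only requires the derivative to be defined and continuous on a set containing the support of the limit. With this in place, Theorem~\ref{thm:Phi} and the delta method give
\[
\sqrt n\{\Phi(P_n)-\Phi(P)\}\rightsquigarrow\dot\Phi_P[\mathbb G_P].
\]

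\textbf{Identification of the limit.} By Theorem~\ref{thm:Phi}, $\dot\Phi_P$ is continuous and linear on $\mathcal L_P$, and $\mathbb G_P$ is a tight centered Gaussian element of that space. For every $u\in\mathbb R^d$, the scalar $u^\top\dot\Phi_P[\mathbb G_P]$ is a continuous linear functional of a tight Gaussian process, and is therefore a centered normal variable. Hence $\dot\Phi_P[\mathbb G_P]$ is $d$-variate normal with mean zero. Its covariance is by definition $\Sigma_{\mathrm{sbw}}=\Var\{\dot\Phi_P[\mathbb G_P]\}$. This covariance is finite because Gaussian elements have all moments: by Fernique's theorem $\mathbb E\|\mathbb G_P\|^2<\infty$, and $\dot\Phi_P$ is bounded.
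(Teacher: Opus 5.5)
Your skeleton is the paper's: Donsker for $\mathcal V$, \citet[Theorem~20.8]{van2000asymptotic} applied to Theorem~\ref{thm:Phi}, and Gaussianity of a continuous linear image of $\mathbb G_P$. The paper's one-line proof does exactly this and never asks where $\mathbb G_P$ lives. You rightly flag that as the delicate step, but your resolution misreads what the delta method requires.

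\textbf{The gap.} Theorem~20.8 requires the limit to take values in the set $\mathbb D_0$ \emph{tangentially to which} $\Phi$ is Hadamard differentiable. The same hypothesis appears in \citet[Theorem~3.9.4]{van1996weak}, so your fallback is no weaker. In other words, the limit must lie in the set of directions $h$ for which $\{\Phi(P+t_nh_n)-\Phi(P)\}/t_n\to\dot\Phi_P[h]$ has been verified for all admissible $h_n\to h$. That $\dot\Phi_P$ is defined and continuous on a closed space containing the limit is a different property. In Theorem~20.8 it is used only for the second, asymptotic-linearity conclusion.

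Theorem~\ref{thm:Phi} verifies the difference-quotient convergence only for limits $h\in\mathcal T_P$. But $\mathbb G_P\notin\mathcal T_P$ almost surely as soon as some covariate has a continuous distribution. After applying $\tau$, every element of $\mathcal T_P$ is a positive multiple of a difference of two distribution functions, so it has bounded variation. By contrast, the one-dimensional sections of $\tau(\mathbb G_P)$ are time-changed Brownian bridges, whose paths have unbounded variation. Your portmanteau argument correctly places $\mathbb G_P$ in $\overline{\mathcal T_P}\subseteq\mathcal L_P$. However, tangential differentiability does not pass automatically from a set to its closure. The definition says nothing about difference quotients along $h_n\to h$ with $h\in\mathcal L_P\setminus\mathcal T_P$, so the hypothesis of Theorem~20.8 is not discharged.

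\textbf{How to close it.} You need Hadamard differentiability of $\Phi$ tangentially to a set containing the support of $\mathbb G_P$, such as $\overline{\mathcal T_P}$. For the inner maps this is routine. The proof of Lemma~\ref{lemma:expectation} uses only $\|h_n-h\|_\infty\to0$ and the integration-by-parts bound $|\int g\,d(h_n-h)|\le c_k\|g\|_{HK}\|h_n-h\|_\infty$. That bound remains valid for any c\`adl\`ag limit $h\in\overline{\mathcal T_P}\subseteq\mathcal C$, once $\int g\,dh$ is defined by integration by parts. The proofs of Lemma~\ref{lemma:f}, Proposition~\ref{prop:lambda} and Proposition~\ref{prop:gamma} carry over in the same way.

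For the outer map, condition (iv) must be strengthened to Hadamard differentiability of $\Psi$ tangentially to $\overline{\mathcal T_{\Psi,P}}$. This set contains $(\dot\eta_{0,P}[\mathbb G_P],\dot\eta_{1,P}[\mathbb G_P])$ by continuity of $\dot\eta_{a,P}$. This is how the standard results for the Kaplan--Meier (product-integral) and quantile maps are formulated: limit directions may have unbounded variation, and the bounded-variation restriction is carried by the domain.

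With that strengthening, your portmanteau step supplies exactly the remaining support condition, and your identification of the limit as $N_d(0,\Sigma_{\mathrm{sbw}})$ is fine. Since the paper's proof applies Theorem~20.8 in the same way, the same strengthening is implicitly needed there too.
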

\begin{proof}
By Theorem~\ref{thm:Phi}, $\Phi$ is Hadamard differentiable at $P$ tangentially to $\mathcal T_P$. Since $\sqrt n(P_n-P)\rightsquigarrow \mathbb G_P$ in $\ell^\infty(\mathcal V)$, the functional delta method \citep[Theorem~20.8]{van2000asymptotic} yields \eqref{eq:asymptotic-normality}. 
Since $\dot\Phi_P$ is linear and $\mathbb G_P$ is a mean-zero Gaussian process, the limit is $N(0,\sigma_{\mathrm{sbw}}^2)$, with $\sigma_{\mathrm{sbw}}^2=\Var\{\dot\Phi_P(\mathbb G_P)\}$.
\end{proof}
We next derive the stronger asymptotically linear representation. This uses the second conclusion of the functional delta method in \cite{van2000asymptotic}: if the derivative is defined and continuous on a larger space containing the empirical-process directions, then the first-order expansion can be written directly in terms of the derivative applied to $\sqrt n(P_n-P)$. 

Under the conditions of Lemma~\ref{lemma:Phi_continuous_extension}, $\dot\Phi_P$ extends continuously and linearly to $\mathcal C$. We next check that the relevant directions, including the empirical directions, lie in this extension space. First, $\mathbb D_{F_P}\subseteq\mathbb D_S$ implies $\mathcal T_P\subseteq\mathcal C$. Second, for every fixed
$z=(x,a',y)$,
\[
\tau(\delta_z-P)(s,a,t)
=
\mathbf 1\{x\le s,\ a'\le a,\ y\le t\}
-
F_P(s,a,t),
\]
which is càdlàg as a function of $(s,a,t)$; hence
$\delta_z-P\in\mathcal C$. Finally,
\[
\mathbb G_n
=
\frac1{\sqrt n}\sum_{i=1}^n(\delta_{Z_i}-P),
\]
so $\tau(\mathbb G_n)$ is a finite linear combination of càdlàg functions.
Thus $\mathbb G_n\in\mathcal C$, and
$\dot\Phi_P[\mathbb G_n]$ is well defined. This is the sense in which the
continuous extension permits the first-order term to be written as
$\dot\Phi_P\left[\sqrt n(P_n-P)\right]$.

\begin{corollary}[Asymptotic linearity and influence function]
\label{thm:Phi_asymptotic_linearity}
Assume the conditions of Lemma~\ref{lemma:Phi_continuous_extension}. Then
\[
\sqrt n\{\Phi(P_n)-\Phi(P)\}
=
\frac{1}{\sqrt n}\sum_{i=1}^n
\mathrm{IF}_{\Phi}(Z_i;P)
+
o_P(1),
\]
where $ \mathrm{IF}_{\Phi}(z;P) = \dot\Phi_P[\delta_z-P]. $
\end{corollary}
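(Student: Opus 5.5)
The proof rests on the second conclusion of the functional delta method \citep[Theorem~20.8]{van2000asymptotic}. If $\Phi$ is Hadamard differentiable at $P$ tangentially to a set, and its derivative is defined and continuous on a space containing the empirical-process directions, then $\sqrt n\{\Phi(P_n)-\Phi(P)\}-\dot\Phi_P[\sqrt n(P_n-P)]=o_P(1)$. The plan is to check its hypotheses, then use linearity of $\dot\Phi_P$ to rewrite the first-order term as an i.i.d.\ sum.

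\textbf{Step 1: delta-method hypotheses.} Theorem~\ref{thm:Phi} gives Hadamard differentiability of $\Phi$ at $P$ tangentially to $\mathcal T_P$. The Donsker property of $\mathcal V$ gives $\mathbb G_n\rightsquigarrow\mathbb G_P$ in $\ell^\infty(\mathcal V)$.

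\textbf{Step 2: tangential condition.} I will check that $\mathbb G_P$ concentrates on a separable subset of $\mathcal L_P$. I will also check that $P+n^{-1/2}\mathbb G_n=P_n\in\mathbb D_\Phi$, which holds because $\mathcal P$ contains the empirical laws. Lemma~\ref{lemma:Phi_continuous_extension} supplies a continuous linear extension of $\dot\Phi_P$ to $\mathcal C$. The discussion preceding the corollary shows $\mathbb G_n\in\mathcal C$, so $\dot\Phi_P[\mathbb G_n]$ is well defined. Theorem~20.8 then yields
\[
\sqrt n\{\Phi(P_n)-\Phi(P)\}=\dot\Phi_P[\mathbb G_n]+o_P(1).
\]

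\textbf{Step 3: i.i.d.\ sum.} Write $\mathbb G_n=n^{-1/2}\sum_{i=1}^n(\delta_{Z_i}-P)$. Each summand lies in $\mathcal C$, and linearity of the extended $\dot\Phi_P$ on $\mathcal C$ gives
\[
\dot\Phi_P[\mathbb G_n]=n^{-1/2}\sum_{i=1}^n\dot\Phi_P[\delta_{Z_i}-P]=n^{-1/2}\sum_{i=1}^n\mathrm{IF}_\Phi(Z_i;P).
\]
I would also check that $\mathrm{IF}_\Phi(\cdot;P)$ is bounded and measurable with mean zero. Boundedness follows because $\|\delta_z-P\|_{\mathcal C}\le 1$ and $\dot\Phi_P$ is bounded. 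The mean-zero property follows by exchanging $\dot\Phi_P$ with expectation, since $\mathbb E[\delta_Z-P]=0$ as an element of $\mathcal C$. These checks are not needed for the displayed identity itself.

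\textbf{Main obstacle.} The delicate point is Step 2. The extended derivative must be continuous in the norm in which $\mathbb G_n$ converges, and the norms must agree. The $\mathcal C$-norm $\|\tau(h)\|_\infty$ coincides with the $\ell^\infty(\mathcal V)$ norm by Lemma~\ref{lemma:tau}, so convergence in $\ell^\infty(\mathcal V)$ transfers directly.

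A subtler issue is that $\mathbb G_n\notin\mathcal T_P$ in general. The differentiability in Theorem~\ref{thm:Phi} is only along $\mathcal T_P$, but we need the difference quotient along the specific sequence $h_n=\mathbb G_n$. This is handled by the tangent-cone structure: $\mathbb G_n=\sqrt n(P_n-P)$ has the form $\alpha(F_1-F_P)$ with $F_1=F_{P_n}$ and $\alpha=\sqrt n$, so $\mathbb G_n\in\mathcal T_P$ exactly. Together with $\mathbb G_P\in\mathcal L_P$ almost surely, this means the almost-sure representation argument in Theorem~20.8 applies verbatim.
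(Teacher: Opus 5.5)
Your Steps 1--3 follow the paper's proof: the second conclusion of \citet[Theorem~20.8]{van2000asymptotic}, applied with the extension of $\dot\Phi_P$ to $\mathcal C$ from Lemma~\ref{lemma:Phi_continuous_extension}, then linearity to write $\dot\Phi_P[\mathbb G_n]=n^{-1/2}\sum_i\dot\Phi_P[\delta_{Z_i}-P]$. The flaw is in your ``main obstacle'' paragraph, which puts the tangential condition in the wrong place.

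In the definition of Hadamard differentiability used here, the approximating directions $h_n$ only need $P+t_nh_n\in\mathbb D_\Phi$. So $\mathbb G_n\in\mathcal T_P$ holds automatically because $P_n\in\mathcal P$; your earlier claim that it fails in general is wrong, and in any case it is not what matters. Theorem~20.8 requires the \emph{limit} to be separable and to take its values in the set along which differentiability was verified. By Theorem~\ref{thm:Phi}, that set is $\mathcal T_P$, not $\mathcal L_P$.

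Knowing $\mathbb G_P\in\mathcal L_P$ almost surely is not enough. (That fact does follow from $\mathbb G_n\in\mathcal T_P$, closedness of $\mathcal L_P$, and the portmanteau theorem.) Through $\tau$, each element of $\mathcal T_P$ is a positive multiple of a difference of distribution functions, so it has bounded variation. Brownian-bridge paths do not, as soon as $X$ or $Y$ has a continuous component. Hence $\mathbb G_P\notin\mathcal T_P$ almost surely, and the theorem does not apply verbatim.

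The repair is to establish differentiability tangentially to $\mathcal L_P$. For the inner maps this costs nothing. In Lemmas~\ref{lemma:expectation} and~\ref{lemma:f} and Propositions~\ref{prop:lambda} and~\ref{prop:gamma}, the limit direction enters only through bounds of the form $\lvert\int g\,d(h_n-h)\rvert\le c_k\|g\|_{HK}\|h_n-h\|_\infty$. These bounds persist for $h\in\mathcal L_P$ once $\int g\,dh$ is defined by continuous extension, and the remaining component maps ($G$, the ratio map, the projections) are Fr\'echet differentiable.

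For the outer map you must strengthen the third assumption preceding Theorem~\ref{thm:Phi}. It should require $\Psi$ to be Hadamard differentiable tangentially to $\mathcal L_{\Psi,P}$, or at least to a set carrying $(\dot\eta_{0,P}[\mathbb G_P],\dot\eta_{1,P}[\mathbb G_P])$; for quantile-type estimands, for example, functions continuous at the relevant point. The paper's proof also invokes Theorem~20.8 without addressing this, so the same strengthening is implicitly needed there. With it in place, your argument and the paper's are complete.
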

\begin{proof}
Because $\dot\Phi_P$ has a continuous linear extension to $\mathcal C$, the functional delta method \citep[Theorem~20.8]{van2000asymptotic} also gives
\[
\sqrt n\{\Phi(P_n)-\Phi(P)\}
=
\dot\Phi_P\left[\sqrt n(P_n-P)\right]
+
o_P(1).
\]
Using
\[
\sqrt n(P_n-P)
=
\frac{1}{\sqrt n}\sum_{i=1}^n(\delta_{Z_i}-P)
\]
and linearity of the extended derivative, we have
\[
\dot\Phi_P\left[\sqrt n(P_n-P)\right]
=
\frac{1}{\sqrt n}\sum_{i=1}^n
\dot\Phi_P[\delta_{Z_i}-P].
\]
\end{proof}

\subsubsection{Bootstrap validity}

Let $P_n^*$ denote the nonparametric bootstrap empirical distribution, formed by sampling $Z_1^*,\ldots,Z_n^*$ with replacement from
$Z_1,\ldots,Z_n$, and let $\mathbb G_n^*
:=
\sqrt n(P_n^*-P_n).$

\begin{theorem}[Bootstrap validity for $\Phi(P_n)$]
\label{thm:bootstrap_phi}
Assume the conditions of Theorem~\ref{thm:Phi}, so that $\Phi$ is Hadamard differentiable at $P$ tangentially to $\mathcal T_P$. Then
\begin{equation} \label{eq:bootstrap-validity}
    \sqrt n\bigl\{\Phi(P_n^*)-\Phi(P_n)\bigr\}
    \rightsquigarrow
    \dot\Phi_P(\mathbb G_P)
    \qquad
    \text{conditionally on $Z_1,Z_2,\dots$ in probability.} 
\end{equation}
\end{theorem}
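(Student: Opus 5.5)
The plan is to invoke the bootstrap functional delta method \citep[Theorem~23.9]{van2000asymptotic}. It requires three ingredients: (a) Hadamard differentiability of $\Phi$ at $P$ tangentially to a set containing the support of the limit $\mathbb G_P$; (b) the unconditional convergence $\sqrt n(P_n-P)\rightsquigarrow\mathbb G_P$ in $\ell^\infty(\mathcal V)$; and (c) conditional weak convergence of the bootstrap empirical process, $\mathbb G_n^*\rightsquigarrow\mathbb G_P$ in $\ell^\infty(\mathcal V)$, given $Z_1,Z_2,\dots$ in probability.

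Ingredient (a) is Theorem~\ref{thm:Phi}. Ingredient (b) was recorded in Section~\ref{sec:fdm-and-if}: $\mathcal V$ is a VC class of bounded indicators, hence $P$-Donsker. Ingredient (c) follows from the bootstrap Donsker theorem \citep[Theorem~3.6.1]{van1996weak}, since $\mathcal V$ is Donsker with a bounded envelope; this gives $\sup_{h\in BL_1}|\mathbb E\{h(\mathbb G_n^*)\mid Z\}-\mathbb E h(\mathbb G_P)|\to0$ in outer probability. Theorem~23.9 then yields
\[
\sqrt n\{\Phi(P_n^*)-\Phi(P_n)\}\rightsquigarrow\dot\Phi_P[\mathbb G_P]
\]
conditionally in probability. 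This is the claim, since the limit is the same as in Theorem~\ref{thm:Phi_asymptotic_distribution}.

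The main obstacle is the domain and tangent-set bookkeeping; the probabilistic parts are standard. Theorem~23.9 asks for differentiability tangentially to a set $\mathbb D_0$ with $\mathbb G_P\in\mathbb D_0$ almost surely, and for $P_n$ and $P_n^*$ to lie in the domain $\mathbb D_\Phi$. The domain condition holds because $\mathcal P$ contains all empirical laws, so $P_n,P_n^*\in\mathbb D_\Phi$. The tangential condition is more delicate, because $\mathcal T_P$ is a cone rather than a closed space. The derivative is defined and continuous on $\mathcal L_P$, and I would argue that the chain-rule verification in Theorem~\ref{thm:Phi} goes through for directions $h_n\to h\in\mathcal L_P$, provided $P+t_nh_n\in\mathbb D_\Phi$. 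The argument is that every component bound used there depends only on sup-norm convergence of $\tau(h_n)$ and on uniform Hardy--Krause variation of the integrands. It does not depend on $h$ lying in the cone.

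It then remains to show $\mathbb G_P\in\mathcal L_P$ almost surely. For this I would use that $\mathbb G_n=n^{-1/2}\sum_i(\delta_{Z_i}-P)$ lies in $\operatorname{span}(\mathcal T_P)$, because the point-mass directions belong to $\mathcal T_P$. Since $\mathcal L_P$ is closed, the portmanteau theorem gives $\Pr(\mathbb G_P\in\mathcal L_P)\ge\limsup_n\Pr(\mathbb G_n\in\mathcal L_P)=1$, using that the limit is tight and separable. The same reasoning covers $\mathbb G_n^*$.

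A secondary point is invertibility of $\mathbb E_{P_{n,a}}[XX^\top]$, which is needed for $\Phi(P_n^*)$ to equal the closed-form estimator. The convention $\gamma_a=0$ makes $\Phi$ well defined everywhere, so this does not affect the delta-method argument. Relating $\Phi(P_n^*)$ to the practical constrained estimator is deferred to Appendix~\ref{sec:equivalence-constrained-sbw}. Finally, consistency of the bootstrap standard error in Theorem~\ref{thm:asymp_boot} would follow by additionally establishing uniform integrability of the squared bootstrap statistic. In practice I would treat that statement via the percentile/quantile form, or with a truncated bootstrap variance.
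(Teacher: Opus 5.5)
Your argument is the paper's argument: the bootstrap Donsker theorem for the VC class $\mathcal V$ with envelope $1$ (the paper cites Theorem~23.7 of \citet{van2000asymptotic} rather than van der Vaart--Wellner), followed by the bootstrap delta method (Theorem~23.9 of \citet{van2000asymptotic}) with the differentiability from Theorem~\ref{thm:Phi}. Your tangent-set step is a real improvement on the paper: its proof invokes that theorem with tangent set $\mathcal T_P$ and never checks that $\mathbb G_P$ lies there, which it typically does not, because elements of $\mathcal T_P$ are scaled differences of distribution functions while Brownian-bridge paths are not of bounded variation; your passage to the closed subspace $\mathcal L_P$ via the portmanteau theorem fixes this.

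One caveat concerns the outer map. Your component bounds do extend the differentiability of $\lambda$, $\gamma_a$ and $f$ to directions converging to $h\in\mathcal L_P$. The chain rule, however, then also needs $\Psi$ to be Hadamard differentiable tangentially to the image of $\mathcal L_P$ under $(\dot\eta_{0,P},\dot\eta_{1,P})$, for instance tangentially to $\mathcal L_{\Psi,P}$. This is stronger than the third assumption listed before Theorem~\ref{thm:Phi}, which only gives differentiability tangentially to $\mathcal T_{\Psi,P}$, so you should state it as an explicit assumption on $\Psi$.
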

In view of Theorem~\ref{thm:Phi_asymptotic_distribution}, the nonparametric bootstrap consistently estimates the limiting distribution of $\sqrt n\bigl\{\Phi(P_n)-\Phi(P)\bigr\}$.

\begin{proof}
As discussed at the beginning of Appendix~\ref{sec:fdm-and-if}, $\mathcal V$ is $P$-Donsker and admits the square-integrable envelope of $1$. Therefore the nonparametric bootstrap empirical process satisfies $\mathbb G_n^*
\rightsquigarrow
\mathbb G_P$ conditionally given $Z_1,\dots,Z_n,$
in probability, as a random element of $\ell^\infty(\mathcal V)$ \citep[Theorem~23.7]{van2000asymptotic}.

By Theorem~\ref{thm:Phi}, $\Phi$ is Hadamard differentiable at $P$ tangentially to $\mathcal T_P$. Hence the bootstrap delta method
\citep[Theorem~23.9]{van2000asymptotic} applies, and \eqref{eq:bootstrap-validity} follows. Thus the conditional bootstrap law converges to the same limiting law as the original centered statistic.
\end{proof}
Theorem~\ref{thm:bootstrap_phi} justifies using the bootstrap for distributional approximation and variance estimation. In practice, if
$\Phi(P_n^{*(1)}),\ldots,\Phi(P_n^{*(B)})$ are bootstrap replicates, we estimate the sampling variance of $\Phi(P_n)$ by the empirical variance of these bootstrap estimates. Equivalently, multiplying this quantity by $n$ estimates the asymptotic variance of
$\sqrt n\{\Phi(P_n)-\Phi(P)\}$. Under the conditions of the theorem, the bootstrap variance estimator consistently estimates the limiting variance
$\mathrm{Var}\{\dot\Phi_P(\mathbb G_P)\}$, which agrees with
$\mathrm{Var}\{\mathrm{IF}_\Phi(Z;P)\}$ when the influence-function representation is available.

The same conditional distributional consistency also justifies percentile bootstrap intervals, provided the limiting distribution has a continuous distribution function at the relevant quantiles. Thus, for scalar estimands, one may form either a Wald-type interval using the bootstrap standard error or a percentile interval using the empirical quantiles of the bootstrap replicates. When inference is performed on a transformed scale, such as the log scale for ratio estimands, the bootstrap interval is constructed on that scale and then mapped back to the original scale.

\subsection{Variance reduction}\label{sec:variance_reduction}

We now compare the (Hampel) influence functions of the functionals used to define  the SBW and unadjusted plug-in estimators \citep{hampel1974influence}. For the SBW functional, define
\[
\theta_{\mathrm{sbw}}(z)
:=
\dot\Phi_P[\delta_z-P].
\]
Let $\theta_{\mathrm{unadj}}$ denote the corresponding influence function for the unadjusted plug-in functional, defined formally in \eqref{eq:influence-functions}. We show that, for a scalar-valued estimand $\Psi(P_0,P_1)$, the variance of $\theta_{\mathrm{sbw}}(Z)$ is weakly smaller than the variance of $\theta_{\mathrm{unadj}}(Z)$. The vector-valued case can be handled coordinatewise or, equivalently, by replacing variances with covariance matrices. Assume throughout this subsection that treatment is randomized, and define $c:=\mathbb E_P[X],$ and $B:=\mathbb E_P[XX^\top].$

\begin{assumption}[Basic variance comparison conditions]
\label{assumption:variance_basic_conditions}
Suppose that $B=\mathbb E_P[XX^\top]$ is nonsingular and $X$ includes an intercept as its first coordinate.
\end{assumption}
The next lemma simplifies the derivative of the population SBW coefficient. This finite-dimensional calculation will be useful in the variance comparison of Lemma~\ref{lem:derivative-level-variance-reduction}.

\begin{lemma}[Derivative of the population SBW coefficient under randomization]
\label{lemma: derivative of SBW coefficient under randomization}
Under Assumption~\ref{assumption:variance_basic_conditions}, for
$a\in\{0,1\}$ and $z=(x,a',y)$,
\[
\dot\gamma_{a,\lambda(P)}[\dot\lambda_P[\delta_z-P]]
=
B^{-1}
\left[
\left(
\frac{\mathbf 1\{a'=a\}}{\pi_a}
-
1
\right)(x-c)
\right].
\]
\end{lemma}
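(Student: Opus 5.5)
We plan to start from the extended formula \eqref{eq:gamma-lambda-continuous-extension} of Lemma~\ref{lemma:gamma_lambda_continuous_extension} and evaluate it at the point-mass direction $h=\delta_z-P$. That direction lies in $\mathcal C$, so the formula applies directly. Under randomization, $P_0$ and $P_1$ share the covariate marginal of $P$. Hence $B_a=\mathbb E_{P_a}[XX^\top]=\mathbb E_P[XX^\top]=B$, and also $\mathbb E_{P_a}[X]=c$ for both arms. The whole computation therefore reduces to evaluating the finite-dimensional vectors $\int x\,d\{\dot P_{b,P}[\delta_z-P]\}$ for $b\in\{0,1\}$.

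\textbf{Step 1: arm-specific mean perturbations.} Using \eqref{eq:pi-U-Pa-dots}, or equivalently its CDF form \eqref{eq:dot-P_a-as-cdfs}, we first compute $\dot\pi_{b,P}[\delta_z-P]=\mathbf 1\{a'=b\}-\pi_b$. Next we compute $\dot U_{b,P}[\delta_z-P]$, which is the signed measure $\mathbf 1\{a'=b\}\delta_{(x,y)}-\pi_bP_b$. Dividing by $\pi_b$ and subtracting $P_b\,\dot\pi_{b,P}/\pi_b$, the $\pi_bP_b$ pieces combine. This gives
\[
\dot P_{b,P}[\delta_z-P]=\frac{\mathbf 1\{a'=b\}}{\pi_b}\bigl(\delta_{(x,y)}-P_b\bigr).
\]
Integrating the identity function against this direction yields
\[
\int x\,d\{\dot P_{b,P}[\delta_z-P]\}=\frac{\mathbf 1\{a'=b\}}{\pi_b}(x-c).
\]
The integral is legitimate by the Lebesgue--Stieltjes interpretation through $\tau_{XY}$, because the integrand is linear in point masses and the support is bounded.

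\textbf{Step 2: the mixture term.} Substituting into the bracket of \eqref{eq:gamma-lambda-continuous-extension}, with $\pi=\pi_1$ and $1-\pi=\pi_0$, the mixture term becomes
\[
\pi_0\cdot\frac{\mathbf 1\{a'=0\}}{\pi_0}(x-c)+\pi_1\cdot\frac{\mathbf 1\{a'=1\}}{\pi_1}(x-c).
\]
This equals $(\mathbf 1\{a'=0\}+\mathbf 1\{a'=1\})(x-c)=x-c$. The bracket is therefore
\[
\Bigl(\frac{\mathbf 1\{a'=a\}}{\pi_a}-1\Bigr)(x-c).
\]
Premultiplying by $B^{-1}$ gives the claim.

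\textbf{Main obstacle.} The algebra is routine. The only delicate point is justifying that the extended derivative from Lemma~\ref{lemma:gamma_lambda_continuous_extension} may be evaluated at $\delta_z-P$, and that the resulting integrals against the CDF-indexed perturbation agree with the naive signed-measure integrals. For this we rely on the discussion preceding Corollary~\ref{thm:Phi_asymptotic_linearity}, which shows $\delta_z-P\in\mathcal C$. We also use the fact that $\tau_{XY}(\dot P_{b,P}[\delta_z-P])$ is the distribution function of the finite signed measure displayed above. Lebesgue--Stieltjes integration against a distribution function of a signed measure coincides with ordinary integration against that measure.

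We also check that the randomization step $\mathbb E_{P_1}[X]=\mathbb E_{P_0}[X]$ is what removes the $\dot\pi_1$ term, as noted before Lemma~\ref{lemma:gamma_lambda_continuous_extension}. Finally, Assumption~\ref{assumption:variance_basic_conditions} supplies the invertibility of $B=B_a$ needed there.
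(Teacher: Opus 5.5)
Your proposal is correct and follows essentially the same route as the paper. You reduce $\dot\gamma_{a,\lambda(P)}\circ\dot\lambda_P$ under randomization to $B^{-1}$ applied to the bracket of arm-specific mean perturbations, compute $\dot P_{b,P}[\delta_z-P]=\pi_b^{-1}\mathbf 1\{a'=b\}(\delta_{(x,y)}-P_b)$, integrate $x$ to get $\pi_b^{-1}\mathbf 1\{a'=b\}(x-c)$, and collapse the mixture term to $x-c$. The only cosmetic difference is the starting formula. The paper rederives the simplified expression from Proposition~\ref{prop:gamma}, using $\gamma_a\{\lambda(P)\}=0$, $B_a=B$ and $\mathbb E_{P_0}[X]=\mathbb E_{P_1}[X]$, along the tangent direction $\delta_z-P\in\mathcal T_P$. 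You instead quote that formula from Lemma~\ref{lemma:gamma_lambda_continuous_extension} on $\mathcal C$; the two agree on $\mathcal L_P$.
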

\begin{proof}
Apply Proposition~\ref{prop:gamma} with
$(P_0,P_1,\pi) = \lambda(P)$. Under randomization, the quantities appearing in
that derivative simplify as follows: $B_a=B$, $\gamma_a\{\lambda(P)\}=0$, and
$\mathbb E_{P_1}[X]=\mathbb E_{P_0}[X]=c$. Hence, for any perturbation $h$,
\begin{equation}
\label{eq:gamma-rand-simplified}
\dot\gamma_{a,\lambda(P)}[\dot\lambda_P[h]]
=
B^{-1}
\left\{
\int x\,d\{\dot P_{a,P}[h]\}
-
\pi_0\int x\,d\{\dot P_{0,P}[h]\}
-
\pi_1\int x\,d\{\dot P_{1,P}[h]\}
\right\}.
\end{equation}
Now, take $h=\delta_z-P$, where $z=(x,a',y)$. For indicator functions
$v\in\mathcal V_{XY}$, the derivative formula for the conditional law gives
\[
\dot P_{a,P}[\delta_z-P](v)
=
\frac{\mathbf 1\{a'=a\}}{\pi_a}
\{v(x,y)-P_av\}.
\]
The same calculation applies to any bounded-HK function
$m:\mathcal Z_{XY}\to\mathbb R$, giving
\[
\int m\,d\{\dot P_{a,P}[\delta_z-P]\}
=
\frac{\mathbf 1\{a'=a\}}{\pi_a}
\{m(x,y)-\mathbb E_{P_a}[m(X,Y)]\}.
\]
Applying this identity to each coordinate function $m_j(x,y)=x_j$ gives
\[
\int x\,d\{\dot P_{a,P}[\delta_z-P]\}
=
\frac{\mathbf 1\{a'=a\}}{\pi_a}
\{x-\mathbb E_{P_a}[X]\}
=
\frac{\mathbf 1\{a'=a\}}{\pi_a}
(x-c),
\]
where the last equality uses randomization. Substituting the above display for each arm into \eqref{eq:gamma-rand-simplified} and simplifying yields the desired result.
\end{proof}

\medskip

\begin{assumption}[Arm-specific derivative decomposition]
\label{assumption:arm_specific_derivative_decomposition}
For the directions appearing in this subsection, suppose that the derivative of $\Psi$ admits arm-specific linear components $\dot\Psi_0$ and $\dot\Psi_1$ such that, for each paired direction $(h_0,h_1)$ used below,
\[
\dot\Psi_{\eta_0(P),\eta_1(P)}[h_0,h_1]
=
\dot\Psi_0[h_0]+\dot\Psi_1[h_1].
\]
In particular, this decomposition holds for the paired directions generated by $\dot P_{a,P}[\delta_z-P]$, $\dot\eta_{a,P}[\delta_z-P]$,
$\dot\eta_{a,P}^{(2)}[\delta_z-P]$, $z\in\mathcal Z$, and for the
directions used to define $\beta_{\Psi,a}$ in \eqref{eq:beta-Psi-action}.
\end{assumption}
When Assumption~\ref{assumption:Psi_continuous_extension} holds, the arm-specific maps in Assumption~\ref{assumption:arm_specific_derivative_decomposition} are the partial derivatives of the continuous extension of $\dot\Psi$ at $(P_0,P_1)$.

Under Assumption~\ref{assumption:arm_specific_derivative_decomposition}, define the unadjusted and SBW influence functions by
\begin{align}
\label{eq:influence-functions}
\theta_{\mathrm{unadj}}(z)
&:=
\dot\Psi_0\!\left[\dot P_{0,P}[\delta_z-P]\right]
+
\dot\Psi_1\!\left[\dot P_{1,P}[\delta_z-P]\right],
 \nonumber \\
\theta_{\mathrm{sbw}}(z)
&:=
\dot\Psi_0\!\left[\dot\eta_{0,P}[\delta_z-P]\right]
+
\dot\Psi_1\!\left[\dot\eta_{1,P}[\delta_z-P]\right].
\end{align}
The second display is the chain-rule expansion of
$\dot\Phi_P[\delta_z-P]$.

For $z=(x,a',y)$, define the covariate-imbalance space
\begin{equation}
\label{eq:H-covariate-imbalance-space}
\mathcal H
:=
\left\{
z\mapsto (a'-\pi_1)\alpha^\top x:
\alpha\in\mathbb R^{q+1}
\right\}.
\end{equation}
We will show that $\theta_{\mathrm{sbw}}$ differs from
$\theta_{\mathrm{unadj}}$ by an element of $\mathcal H$, and that
$\theta_{\mathrm{sbw}}$ is orthogonal to $\mathcal H$ in $L_2(P)$.

For later use, define $\beta_{\Psi,a}\in\mathbb R^{q+1}$ by
\[
\beta_{\Psi,a,j}
:=
\dot\Psi_a\left[
v\mapsto
e_j^\top B^{-1}\int x v(x,y)\,P_a(dx,dy)
\right],
\qquad j=1,\ldots,q+1,
\]
where $e_1,\ldots,e_{q+1}$ are the standard basis vectors in
$\mathbb R^{q+1}$. By linearity of $\dot\Psi_a$, for any
$u\in\mathbb R^{q+1}$,
\begin{equation}
\label{eq:beta-Psi-action}
\dot\Psi_a\left[
v\mapsto
u^\top B^{-1}\int x v(x,y)\,P_a(dx,dy)
\right]
=
u^\top\beta_{\Psi,a}.
\end{equation}

\begin{lemma}[SBW influence function correction]
\label{lemma:sbw-derivative-level-correction}
Under Assumptions~\ref{assumption:variance_basic_conditions}
and~\ref{assumption:arm_specific_derivative_decomposition}, there exists
$\tilde\beta_\Psi\in\mathbb R^{q+1}$ such that
\begin{equation}
\label{eq:theta-sbw-unadj-decomposition}
\theta_{\mathrm{sbw}}(z)
=
\theta_{\mathrm{unadj}}(z)
+
(a'-\pi_1)(x-c)^\top\tilde\beta_\Psi .
\end{equation}
Consequently,
$\theta_{\mathrm{sbw}}-\theta_{\mathrm{unadj}}\in\mathcal H$.
\end{lemma}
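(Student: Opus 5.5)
Fix $z=(x,a',y)$ and set $h=\delta_z-P$. Under randomization we have $\gamma_a\{\lambda(P)\}=0$, so \eqref{eq:eta-derivative-extension} gives, for each arm,
\[
\dot\eta_{a,P}[h](v)=\dot P_{a,P}[h](v)-\dot\gamma_{a,\lambda(P)}[\dot\lambda_P[h]]^\top M_a(v),
\qquad M_a(v)=\int x\,v(x,y)\,P_a(dx,dy).
\]
The plan is to substitute this into the definition of $\theta_{\mathrm{sbw}}$ in \eqref{eq:influence-functions}. We then use the additive decomposition of Assumption~\ref{assumption:arm_specific_derivative_decomposition} together with linearity of $\dot\Psi_a$. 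This peels off the unadjusted part:
\[
\theta_{\mathrm{sbw}}(z)=\theta_{\mathrm{unadj}}(z)-\sum_{a\in\{0,1\}}\dot\Psi_a\!\left[v\mapsto u_a(z)^\top M_a(v)\right],
\qquad u_a(z):=\dot\gamma_{a,\lambda(P)}[\dot\lambda_P[h]].
\]

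Next we insert Lemma~\ref{lemma: derivative of SBW coefficient under randomization}, which gives $u_a(z)=B^{-1}\bigl(\mathbf 1\{a'=a\}/\pi_a-1\bigr)(x-c)$. The map $v\mapsto u_a^\top M_a(v)$ can then be rewritten as $v\mapsto \tilde u_a^\top B^{-1}M_a(v)$ with $\tilde u_a:=B u_a=\bigl(\mathbf 1\{a'=a\}/\pi_a-1\bigr)(x-c)$. By \eqref{eq:beta-Psi-action}, the $a$th correction term equals $\tilde u_a^\top\beta_{\Psi,a}$. The key scalar identities are
\[
\frac{\mathbf 1\{a'=1\}}{\pi_1}-1=\frac{a'-\pi_1}{\pi_1},
\qquad
\frac{\mathbf 1\{a'=0\}}{\pi_0}-1=-\frac{a'-\pi_1}{\pi_0},
\]
which hold because $a'\in\{0,1\}$ and $\pi_0=1-\pi_1$. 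With these, the total correction is
\[
-(a'-\pi_1)(x-c)^\top\left(\frac{\beta_{\Psi,1}}{\pi_1}-\frac{\beta_{\Psi,0}}{\pi_0}\right).
\]
This yields \eqref{eq:theta-sbw-unadj-decomposition} with $\tilde\beta_\Psi:=\beta_{\Psi,0}/\pi_0-\beta_{\Psi,1}/\pi_1$.

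For the consequence, $x$ includes the intercept coordinate (Assumption~\ref{assumption:variance_basic_conditions}), so $(x-c)^\top\tilde\beta_\Psi=\alpha^\top x$ for a suitable $\alpha\in\mathbb R^{q+1}$. We take $\alpha=\tilde\beta_\Psi-(c^\top\tilde\beta_\Psi)e_1$, which absorbs the constant $-c^\top\tilde\beta_\Psi$ into the intercept slot. Hence the difference lies in $\mathcal H$.

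The main obstacle is justifying the step in which the arm-specific decomposition and linearity are applied to the directions $v\mapsto u^\top M_a(v)$. These are exactly the directions covered by Assumption~\ref{assumption:arm_specific_derivative_decomposition} and \eqref{eq:beta-Psi-action}. Still, one must check that the point-mass direction $\delta_z-P$ produces perturbations in the domain of $\dot\Psi$. I would handle this by citing the continuous extension to $\mathcal C$ (Lemma~\ref{lemma:eta_continuous_extension}), under which $\dot\eta_{a,P}[\delta_z-P]$ is well defined and $\dot\Psi_a$ acts linearly on it.
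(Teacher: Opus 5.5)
Your proposal is correct and follows the paper's proof essentially step for step. It uses the same splitting $\dot\eta_{a,P}[h]=\dot P_{a,P}[h]+\dot\eta_{a,P}^{(2)}[h]$ under randomization, the same substitution of Lemma~\ref{lemma: derivative of SBW coefficient under randomization} into \eqref{eq:beta-Psi-action}, the same scalar identities, the same $\tilde\beta_\Psi=\beta_{\Psi,0}/\pi_0-\beta_{\Psi,1}/\pi_1$, and the same intercept absorption for membership in $\mathcal H$. One small remark: your ``obstacle'' does not need the continuous extension to $\mathcal C$. The direction $\delta_z-P$ already lies in $\mathcal T_P$ because point masses are in $\mathcal P$, and Assumption~\ref{assumption:arm_specific_derivative_decomposition} explicitly covers the directions $\dot P_{a,P}[\delta_z-P]$, $\dot\eta_{a,P}^{(2)}[\delta_z-P]$ and those defining $\beta_{\Psi,a}$; Lemma~\ref{lemma:eta_continuous_extension} concerns only $\dot\eta_{a,P}$ and says nothing about how $\dot\Psi_a$ acts.
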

\begin{proof}
As shown in \eqref{eq:eta-derivative-extension}, under randomization, the derivative formula for $\eta_a$ gives
\[
\dot\eta_{a,P}[h]
=
\dot P_{a,P}[h]
+
\dot\eta_{a,P}^{(2)}[h],
\qquad
\dot\eta_{a,P}^{(2)}[h](v)
:=
-
\dot\gamma_{a,\lambda(P)}[\dot\lambda_P[h]]^\top
\int x v(x,y)\,P_a(dx,dy).
\]
Thus the first summand in $\dot\eta_{a,P}[h]$ is exactly the unadjusted
arm-specific derivative $\dot P_{a,P}[h]$. Using
\eqref{eq:influence-functions}, we therefore have
\begin{equation}
\label{eq:theta-second-piece-decomposition}
\theta_{\mathrm{sbw}}(z)
=
\theta_{\mathrm{unadj}}(z)
+
\dot\Psi_0\!\left[
\dot\eta_{0,P}^{(2)}[\delta_z-P]
\right]
+
\dot\Psi_1\!\left[
\dot\eta_{1,P}^{(2)}[\delta_z-P]
\right].
\end{equation}

Now, fix $z=(x,a',y)$. By
Lemma~\ref{lemma: derivative of SBW coefficient under randomization},
\[
\dot\eta_{a,P}^{(2)}[\delta_z-P](v)
=
-
\left(\frac{\mathbf 1\{a'=a\}}{\pi_a}-1\right)
(x-c)^\top B^{-1}
\int x v(x,y)\,P_a(dx,dy).
\]
Applying \eqref{eq:beta-Psi-action} gives, for each $a\in\{0,1\}$,
\begin{equation}
\label{eq:eta-second-piece-Psi-action}
\dot\Psi_a\!\left[
\dot\eta_{a,P}^{(2)}[\delta_z-P]
\right]
=
-
\left(\frac{\mathbf 1\{a'=a\}}{\pi_a}-1\right)
(x-c)^\top\beta_{\Psi,a}.
\end{equation}
Since $\mathbf 1\{a'=0\}=1-a'$ and $\pi_0=1-\pi_1$, we have 
\[
-
\left(\frac{\mathbf 1\{a'=0\}}{\pi_0}-1\right)
=
\frac{a'-\pi_1}{\pi_0}, \qquad -
\left(\frac{\mathbf 1\{a'=1\}}{\pi_1}-1\right)
=
-\frac{a'-\pi_1}{\pi_1}.
\]
Substituting the above display into
\eqref{eq:theta-second-piece-decomposition} yields
\[
\theta_{\mathrm{sbw}}(z)
=
\theta_{\mathrm{unadj}}(z)
+
\frac{a'-\pi_1}{\pi_0}(x-c)^\top\beta_{\Psi,0}
-
\frac{a'-\pi_1}{\pi_1}(x-c)^\top\beta_{\Psi,1}.
\]
Thus \eqref{eq:theta-sbw-unadj-decomposition} holds with
\[
\tilde\beta_\Psi
:=
\frac{\beta_{\Psi,0}}{\pi_0}
-
\frac{\beta_{\Psi,1}}{\pi_1}.
\]

Finally, because $X$ includes an intercept, $e_1^\top X=1$ and
$e_1^\top c=1$. Hence
\[
(X-c)^\top\tilde\beta_\Psi
=
X^\top\{\tilde\beta_\Psi-(c^\top\tilde\beta_\Psi)e_1\},
\]
meaning that the centered function is also an uncentered linear function of $X$, which yields the result $\theta_{\mathrm{sbw}}-\theta_{\mathrm{unadj}}\in\mathcal H$.
\end{proof}
The representation in Lemma~\ref{lemma:sbw-derivative-level-correction} shows that the SBW influence function differs from the unadjusted influence function by an element of the covariate-imbalance space. The next lemma shows that this correction is orthogonal to the SBW influence function, yielding the variance comparison.

\begin{lemma}[Variance reduction]
\label{lem:derivative-level-variance-reduction}
Under Assumptions~\ref{assumption:variance_basic_conditions} and 
\ref{assumption:arm_specific_derivative_decomposition},
$\mathbb E_P\{\theta_{\mathrm{sbw}}(Z)h(Z)\}=0$ for every $h\in\mathcal H$.
Consequently,
\[
\Var\{\theta_{\mathrm{sbw}}(Z)\}
\le
\Var\{\theta_{\mathrm{unadj}}(Z)\}.
\]
\end{lemma}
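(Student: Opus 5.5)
The argument has two parts. First we prove the orthogonality $\mathbb E_P\{\theta_{\mathrm{sbw}}(Z)\,(A-\pi_1)\alpha^\top X\}=0$ for every $\alpha\in\mathbb R^{q+1}$. The variance inequality then follows from Lemma~\ref{lemma:sbw-derivative-level-correction} by a Pythagorean argument. Both influence functions have mean zero, since $\mathbb E_P[\delta_Z-P]=0$ and the derivatives are linear and continuous on $\mathcal C$. Write $\theta_{\mathrm{unadj}}=\theta_{\mathrm{sbw}}-h_0$ with $h_0:=(a'-\pi_1)(x-c)^\top\tilde\beta_\Psi\in\mathcal H$. Orthogonality then gives
\[
\Var\{\theta_{\mathrm{unadj}}(Z)\}=\Var\{\theta_{\mathrm{sbw}}(Z)\}+\mathbb E_P\{h_0(Z)^2\}\ge\Var\{\theta_{\mathrm{sbw}}(Z)\}.
\]

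\textbf{Step 1 (arm-specific structure).} Write each arm's contribution through a scalar function. For $z=(x,a',y)$, the same calculation as in Lemma~\ref{lemma: derivative of SBW coefficient under randomization} (with $m$ a generic bounded-HK integrand) gives
\[
\dot\Psi_a\bigl[\dot P_{a,P}[\delta_z-P]\bigr]=\frac{\mathbf 1\{a'=a\}}{\pi_a}\,\phi_a(x,y),
\]
where $\phi_a$ is the arm-$a$ influence function of $\Psi$, i.e.\ $\phi_a(x,y)=\dot\Psi_a[v\mapsto v(x,y)-P_av]$, which has $\mathbb E_{P_a}\phi_a=0$. To justify this step, write $\dot P_{a,P}[\delta_z-P]=\pi_a^{-1}\mathbf 1\{a'=a\}(\delta_{(x,y)}-P_a)$ and use linearity. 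The only delicate point is that $\dot\Psi_a$ acts on $\delta_{(x,y)}-P_a$, which I would take from Assumption~\ref{assumption:arm_specific_derivative_decomposition} together with the extension in Assumption~\ref{assumption:Psi_continuous_extension}.

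Next, evaluate $\beta_{\Psi,a}$ through $\phi_a$. Because $v\mapsto\int xv\,dP_a$ is the derivative direction of the tilted law with density proportional to $x$, linearity gives
\[
\beta_{\Psi,a}=B^{-1}\mathbb E_{P_a}[X\phi_a(X,Y)].
\]
I expect this identification to be the main obstacle. The direction $v\mapsto\int xv\,dP_a$ is an integral of the point-mass directions $\delta_{(x',y')}$ against $x'\,P_a(dx',dy')$. Interchanging $\dot\Psi_a$ with that integral, and noting that the $-P_a$ part contributes $\mathbb E_{P_a}[X]\cdot 0$ because $\mathbb E_{P_a}\phi_a=0$, requires the continuity of $\dot\Psi_a$ on $\mathcal C_{XY}$ together with a Riemann/Bochner-type approximation argument.

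\textbf{Step 2 (orthogonality computation).} By Lemma~\ref{lemma:sbw-derivative-level-correction},
\[
\theta_{\mathrm{sbw}}(Z)=\frac{1-A}{\pi_0}\phi_0+\frac{A}{\pi_1}\phi_1+(A-\pi_1)(X-c)^\top\tilde\beta_\Psi .
\]
Multiply by $(A-\pi_1)X^\top\alpha$ and take expectations, using $A\independent X$.
\begin{itemize}
\item The arm-0 term gives $-\pi_1\,\mathbb E_{P_0}[\phi_0X]^\top\alpha$, since $(1-A)(A-\pi_1)=-(1-A)\pi_1$.
\item The arm-1 term gives $\pi_0\,\mathbb E_{P_1}[\phi_1X]^\top\alpha$.
\item The correction term gives $\pi_0\pi_1\,\tilde\beta_\Psi^\top\mathbb E[(X-c)X^\top]\alpha$.
\end{itemize}

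\textbf{Step 3 (cancellation).} Substitute $\tilde\beta_\Psi=\beta_{\Psi,0}/\pi_0-\beta_{\Psi,1}/\pi_1$ and the formula for $\beta_{\Psi,a}$ from Step 1. The correction term becomes
\[
\bigl(\pi_1\,\mathbb E_{P_0}[\phi_0X]-\pi_0\,\mathbb E_{P_1}[\phi_1X]\bigr)^\top B^{-1}\,\mathbb E[(X-c)X^\top]\,\alpha .
\]
It remains to check that
\[
B^{-1}\mathbb E[(X-c)X^\top]=I-B^{-1}cc^\top
\]
acts as the identity in the relevant pairing. Since $X$ contains an intercept, $B^{-1}c=e_1$. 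Hence $u^\top B^{-1}cc^\top\alpha=(u^\top e_1)(c^\top\alpha)$, where the first coordinate $u^\top e_1$ is
\[
\pi_1\,\mathbb E_{P_0}[\phi_0]-\pi_0\,\mathbb E_{P_1}[\phi_1]=0 .
\]
So the correction term equals $\bigl(\pi_1\,\mathbb E_{P_0}[\phi_0X]-\pi_0\,\mathbb E_{P_1}[\phi_1X]\bigr)^\top\alpha$, which cancels the two arm terms exactly. This gives the orthogonality, and the variance inequality follows as described at the start.
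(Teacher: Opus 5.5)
Your proof is correct in substance but takes a different route from the paper's. The paper never introduces the scalar functions $\phi_a$. It computes $\mathbb E_P\{\omega_\alpha(Z)(\delta_Z-P)\}=\omega_\alpha P$ in $\ell^\infty(\mathcal V)$ and pushes this expectation through the continuous linear maps $h\mapsto\dot\Psi_a[\dot P_{a,P}[h]]$. It then finds $\dot P_{1,P}[\omega_\alpha P](v)=\pi_0\bar\alpha^\top\int xv\,dP_1$ and $\dot P_{0,P}[\omega_\alpha P](v)=-\pi_1\bar\alpha^\top\int xv\,dP_0$, with $\bar\alpha=\alpha-(\alpha^\top c)e_1$. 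These are exactly the directions in \eqref{eq:beta-Psi-action}, so $\mathbb E_P\{\theta_{\mathrm{unadj}}\omega_\alpha\}=\pi_0\bar\alpha^\top B\beta_{\Psi,1}-\pi_1\bar\alpha^\top B\beta_{\Psi,0}$ holds by definition. It then cancels $\pi_0\pi_1\bar\alpha^\top B\tilde\beta_\Psi$ by algebra alone, and $\beta_{\Psi,a}$ is never identified. You instead reduce orthogonality to a finite-dimensional identity in $\mathbb E_{P_a}[X\phi_a]$, at the cost of the separate identification step you flag. In exchange, your route shows the familiar augmentation structure: up to $e_1$, $\tilde\beta_\Psi$ is built from the arm-wise population regressions $B^{-1}\mathbb E_{P_a}[X\phi_a]$ of $\phi_a$ on $X$.

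That identification is misstated, though harmlessly. Since $B^{-1}c=e_1$, the tilt $x\mapsto e_j^\top B^{-1}x$ has $P_a$-mean $\mathbf 1\{j=1\}$. So the direction defining $\beta_{\Psi,a,1}$ is not centered, and the correct statement is $\beta_{\Psi,a}=B^{-1}\mathbb E_{P_a}[X\phi_a]+\kappa_a e_1$, where $\kappa_a$ is the value of $\dot\Psi_a$ at the direction $v\mapsto P_av$. The fact $\mathbb E_{P_a}\phi_a=0$ does not make $\kappa_a$ vanish. For the ATE with the natural extension $\dot\Psi_1[h]=\int y\,dh$, one gets $\kappa_1=\mathbb E_{P_1}[Y]$, and in general $\kappa_a$ is not even determined by $\dot\Psi_a$ on centered directions. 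Your argument survives because $(x-c)^\top e_1=0$ and $e_1^\top\mathbb E[(X-c)X^\top]=c^\top-c^\top=0$. Hence the $e_1$-component of $\tilde\beta_\Psi$ enters neither $\theta_{\mathrm{sbw}}$ nor your Step~3 pairing. State the identity modulo $e_1$ (it is exact for $j\ge 2$), and Step~3 goes through unchanged.

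For the interchange itself, a Bochner argument is unavailable because $(x,y)\mapsto\delta_{(x,y)}$ has non-separable range in $\ell^\infty(\mathcal V_{XY})$. Instead, approximate $v\mapsto\int m\,v\,dP_a-\mathbb E_{P_a}[m]\,P_av$ uniformly over lower rectangles by weighted empirical combinations of the directions $\delta_{(X_i,Y_i)}-P_a$ (Glivenko--Cantelli). Then use sup-norm continuity of $\dot\Psi_a$ and the law of large numbers for the bounded function $m\phi_a$. That continuity comes from Assumption~\ref{assumption:Psi_continuous_extension}, which is not among the lemma's stated hypotheses, although the paper's own interchange step relies on comparable continuity. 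Finally, you do not need the mean-zero property of the $\theta$'s. Since $\mathbb E_P h_0=0$, it follows that $\Var\{\theta_{\mathrm{sbw}}-h_0\}=\Var\{\theta_{\mathrm{sbw}}\}+\mathbb E_P h_0^2-2\,\mathbb E_P\{\theta_{\mathrm{sbw}}h_0\}$.
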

\begin{proof}
Fix $\alpha\in\mathbb R^{q+1}$ and define $\omega_\alpha(Z):=(A-\pi_1)\alpha^\top X.$ Since every element of $\mathcal H$ can be written as $\omega_\alpha$ for some $\alpha\in\mathbb R^{q+1}$, it suffices to show that $ \mathbb E_P\{\theta_{\mathrm{sbw}}(Z)\omega_\alpha(Z)\}=0. $ By Lemma~\ref{lemma:sbw-derivative-level-correction},
\begin{align}
\label{eq:orthogonality-decomposition}
\mathbb E_P\{\theta_{\mathrm{sbw}}(Z)\omega_\alpha(Z)\}
&=
\mathbb E_P\{\theta_{\mathrm{unadj}}(Z)\omega_\alpha(Z)\} 
+
\mathbb E_P\left[
(A-\pi_1)(X-c)^\top\tilde\beta_\Psi\,
\omega_\alpha(Z)
\right].
\end{align}
We evaluate the two terms separately. First, under randomization,
$\mathbb E_P\{\omega_\alpha(Z)\}=0$. Define
$\omega_\alpha P\in\ell^\infty(\mathcal V)$ by
\[
(\omega_\alpha P)(v)
:=
\int \omega_\alpha(z)v(z)\,P(dz),
\qquad v\in\mathcal V.
\]
Then, for each $v\in\mathcal V$,
\[
\mathbb E_P\{\omega_\alpha(Z)(\delta_Z-P)(v)\}
=
\mathbb E_P\{\omega_\alpha(Z)v(Z)\}
-
P(v)\mathbb E_P\{\omega_\alpha(Z)\}
=
(\omega_\alpha P)(v).
\]
Thus, as an element of $\ell^\infty(\mathcal V)$,
\[
\mathbb E_P\{\omega_\alpha(Z)(\delta_Z-P)\}
=
\omega_\alpha P.
\]

We next justify applying the derivative maps after taking this expectation. Since $\mathcal P$ contains the point masses, $\delta_z-P\in\mathcal T_P$ for each $z\in\mathcal Z$. Hence, for each fixed $z$, $\omega_\alpha(z)(\delta_z-P)$ belongs to $\mathcal L_P$. The map
\[
z\mapsto \omega_\alpha(z)(\delta_z-P)
\]
is bounded as an $\ell^\infty(\mathcal V)$-valued map, and $\mathcal L_P$ is
closed. Therefore its expectation, $\omega_\alpha P$, also belongs to
$\mathcal L_P$. The composite maps
\[
T_a:h\mapsto \dot\Psi_a[\dot P_{a,P}[h]],
\qquad a\in\{0,1\},
\]
are continuous and linear on the relevant closed linear span of tangent
directions. Consequently,
\[
\mathbb E_P\{T_a[\omega_\alpha(Z)(\delta_Z-P)]\}
=
T_a\left[
\mathbb E_P\{\omega_\alpha(Z)(\delta_Z-P)\}
\right],
\qquad a\in\{0,1\}.
\]
Using the definition of $\theta_{\mathrm{unadj}}$ and linearity of
$\dot P_{a,P}$ and $\dot\Psi_a$, we get
\begin{align}
\label{eq:unadj-orthogonality-term}
\mathbb E_P\{\theta_{\mathrm{unadj}}(Z)\omega_\alpha(Z)\}
&=
\mathbb E_P\left[
\omega_\alpha(Z)
\left\{
\dot\Psi_0[\dot P_{0,P}[\delta_Z-P]]
+
\dot\Psi_1[\dot P_{1,P}[\delta_Z-P]]
\right\}
\right] \nonumber \\
&=
\dot\Psi_0\!\left[
\dot P_{0,P}\left\{
\mathbb E_P[\omega_\alpha(Z)(\delta_Z-P)]
\right\}
\right]
+
\dot\Psi_1\!\left[
\dot P_{1,P}\left\{
\mathbb E_P[\omega_\alpha(Z)(\delta_Z-P)]
\right\}
\right] \nonumber \\
&=
\dot\Psi_0[\dot P_{0,P}\{\omega_\alpha P\}]
+
\dot\Psi_1[\dot P_{1,P}\{\omega_\alpha P\}].
\end{align}
The second equality follows from the continuous linearity of the composite maps $T_a:h\mapsto\dot\Psi_a[\dot P_{a,P}[h]]$ on the closed linear span of the relevant tangent directions, as justified above.

Next define $\bar\alpha:=\alpha-(\alpha^\top c)e_1.$ Because $X$ includes an intercept, $\bar\alpha^\top X=\alpha^\top(X-c)$. We now compute the two conditional-law derivatives appearing in \eqref{eq:unadj-orthogonality-term}. For arm 1, the derivative of the conditional law gives
\[
\dot P_{1,P}[h](v)
=
\frac{h(Av)}{\pi_1}
-
\frac{\mathbb E_P[Av(X,Y)]}{\pi_1^2}h(A).
\]
With $h=\omega_\alpha P$, and using
$A(A-\pi_1)=A\pi_0$ and randomization, this becomes
\begin{align*}
\dot P_{1,P}[\omega_\alpha P](v)
&=
\frac{1}{\pi_1}
\mathbb E_P[A\,v(X,Y)(A-\pi_1)\alpha^\top X]
-
\frac{\mathbb E_P[Av(X,Y)]}{\pi_1^2}
\mathbb E_P[A(A-\pi_1)\alpha^\top X] \\
&=
\pi_0\mathbb E_{P_1}\{v(X,Y)\alpha^\top X\}
-
\pi_0\mathbb E_{P_1}\{v(X,Y)\}\alpha^\top c \\
&=
\pi_0\mathbb E_{P_1}\{v(X,Y)\alpha^\top(X-c)\} \\
&=
\pi_0\,\bar\alpha^\top
\int x v(x,y)\,P_1(dx,dy).
\end{align*}
An analogous calculation for arm 0 gives $ \dot P_{0,P}[\omega_\alpha P](v) = -\pi_1\,\bar\alpha^\top \int x v(x,y)\,P_0(dx,dy) $. Therefore, by \eqref{eq:beta-Psi-action},
\[
\dot\Psi_1[\dot P_{1,P}\{\omega_\alpha P\}]
=
\pi_0\bar\alpha^\top B\beta_{\Psi,1},
\qquad
\dot\Psi_0[\dot P_{0,P}\{\omega_\alpha P\}]
=
-\pi_1\bar\alpha^\top B\beta_{\Psi,0}.
\]
Substituting these two displays into
\eqref{eq:unadj-orthogonality-term} gives
\begin{equation}
\label{eq:unadj-orthogonality-eval}
\mathbb E_P\{\theta_{\mathrm{unadj}}(Z)\omega_\alpha(Z)\}
=
\pi_0\bar\alpha^\top B\beta_{\Psi,1}
-
\pi_1\bar\alpha^\top B\beta_{\Psi,0}.
\end{equation}

We now evaluate the second term in
\eqref{eq:orthogonality-decomposition}. We use the fact that 
$\mathbb E_P[(A-\pi_1)^2]=\pi_1\pi_0$ and $\mathbb E_P\{(X-c)^\top\tilde\beta_\Psi\,\alpha^\top X\}
=
\bar\alpha^\top B\tilde\beta_\Psi$ to show that
\begin{align}
\label{eq:correction-orthogonality-eval}
\mathbb E_P\left[
(A-\pi_1)(X-c)^\top\tilde\beta_\Psi\,
\omega_\alpha(Z)
\right]
&=
\pi_1\pi_0\,\bar\alpha^\top B\tilde\beta_\Psi =
\pi_1\bar\alpha^\top B\beta_{\Psi,0}
-
\pi_0\bar\alpha^\top B\beta_{\Psi,1}.
\end{align}
Therefore, \eqref{eq:unadj-orthogonality-eval} and
\eqref{eq:correction-orthogonality-eval} cancel in
\eqref{eq:orthogonality-decomposition}, yielding
\[
\mathbb E_P\{\theta_{\mathrm{sbw}}(Z)\omega_\alpha(Z)\}=0.
\]
Since $\alpha$ was arbitrary, this proves $\mathbb E_P\{\theta_{\mathrm{sbw}}(Z)h(Z)\}=0$ for every $h\in\mathcal H$.

Finally, Lemma~\ref{lemma:sbw-derivative-level-correction} shows that
$\theta_{\mathrm{unadj}}-\theta_{\mathrm{sbw}}\in\mathcal H$, so this orthogonality result applied to
$h=\theta_{\mathrm{unadj}}-\theta_{\mathrm{sbw}}$ gives 
\[
\mathbb E_P\left[
\theta_{\mathrm{sbw}}(Z)
\{\theta_{\mathrm{unadj}}(Z)-\theta_{\mathrm{sbw}}(Z)\}
\right]=0.
\]
The functions $\theta_{\mathrm{sbw}}$ and $\theta_{\mathrm{unadj}}$ are
mean-zero because they are linear derivatives applied to the centered direction
$\delta_Z-P$. Hence the $L_2(P)$ orthogonal decomposition above gives
\[
\Var\{\theta_{\mathrm{unadj}}(Z)\}
=
\Var\{\theta_{\mathrm{sbw}}(Z)\}
+
\Var\{\theta_{\mathrm{unadj}}(Z)-\theta_{\mathrm{sbw}}(Z)\},
\]
and the variance inequality follows.
\end{proof}
This decomposition also shows that the variance inequality is strict whenever
\[
\Var\{\theta_{\mathrm{unadj}}(Z)-\theta_{\mathrm{sbw}}(Z)\}
=
\Var\{(A-\pi_1)(X-c)^\top\tilde\beta_\Psi\}
>0.
\]
Thus, strict variance reduction occurs when $X$ has a nonzero linear association with the arm-specific components of the estimand.

\subsection{Equivalence with the constrained SBW estimator}
\label{sec:equivalence-constrained-sbw}

The asymptotic theory developed so far is based on a closed-form representation of the SBWs, corresponding to the solution of \eqref{eq:sbw_primal} without the nonnegativity constraint. In particular, the plug-in estimator $\Phi(P_n)$ equals the closed-form SBW estimator $\tilde\psi_n^{\mathrm{\,cf}}$. In practice, however, the weights are computed by solving the quadratic program \eqref{eq:sbw_primal}, which additionally imposes a nonnegativity constraint.

We now show that these two formulations coincide with probability tending to one. We establish that the closed-form weights are strictly positive with probability tending to one, implying that the nonnegativity constraint is asymptotically inactive. This allows us to conclude that the estimator analyzed in our theory coincides with the practical SBW estimator. We then use this result to prove the main asymptotic results stated in Section~\ref{section:theory} of the main text.

\subsubsection{Closed-form representation and its properties}

Let $N_a := \sum_{i=1}^n \mathbbm{1}\{A_i = a\}$ denote the number of units in arm $a$, let $\mathbf 1_a$ denote the $N_a$-vector of ones, and let $\mathbf X_a \in \mathbb{R}^{N_a \times (q+1)}$ denote the covariate matrix in arm $a$, including an intercept column. Let $\bar X_n := \frac{1}{n}\sum_{i=1}^n X_i$ denote the full-sample mean of the covariate vector. The unconstrained, or closed-form, SBWs in arm $a$ can be written as
\begin{align}
\tilde w_a^{\mathrm{cf}}
&=
\mathbf 1_a-\mathbf X_a \tilde\beta_a^{\,\mathrm{cf}}\ \ \textnormal{ with }\ \ \tilde\beta_a^{\,\mathrm{cf}}
:=
(\mathbf X_a^\top \mathbf X_a)^{-1}
\bigl(\mathbf X_a^\top \mathbf 1_a - N_a \bar X_n\bigr).
\label{eq:unconstrained_weights_appendix}
\end{align}
Equivalently, if we define $ \tilde h_a^{\mathrm{cf}}(x) := 1 - x^\top \tilde\beta_a^{\,\mathrm{cf}}$, then for each unit $i$ in arm $a$, $\tilde w_{a,i}^{\mathrm{cf}} = \tilde h_a^{\mathrm{cf}}(X_i).$

Because $\tilde w_{a,i}^{\mathrm{cf}}=1-X_i^\top\tilde\beta_a^{\,\mathrm{cf}}$, the data are identically distributed, and $X_i$ is a.s. bounded, positivity of all $n$ closed-form weights follows once $\tilde\beta_a^{\,\mathrm{cf}}$ is small. 

\begin{lemma}[Convergence of the closed-form coefficient]
\label{lemma:beta_to_0}
Assume that $B:=\mathbb E_P[XX^\top]$ is nonsingular. Then $\tilde\beta_a^{\,\mathrm{cf}} \xrightarrow{P} 0$.
\end{lemma}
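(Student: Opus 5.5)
The plan is to rewrite $\tilde\beta_a^{\,\mathrm{cf}}$ as a continuous function of empirical averages, apply the weak law of large numbers, and finish with the continuous mapping theorem. Normalizing by $N_a$ gives
\[
\tilde\beta_a^{\,\mathrm{cf}}
=
\Bigl(\tfrac{1}{N_a}\mathbf X_a^\top\mathbf X_a\Bigr)^{-1}
\Bigl(\tfrac{1}{N_a}\mathbf X_a^\top\mathbf 1_a-\bar X_n\Bigr),
\]
which is well defined whenever $N_a>0$ and $\tfrac{1}{N_a}\mathbf X_a^\top\mathbf X_a$ is invertible. The target limit is $\mathbb E_{P_a}[XX^\top]^{-1}\{\mathbb E_{P_a}[X]-\mathbb E_P[X]\}$, which is zero under randomization.

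\textbf{Step 1: arm-specific averages.} Write
\[
\tfrac{1}{N_a}\sum_{i:A_i=a}X_iX_i^\top=\frac{n^{-1}\sum_i \mathbf 1\{A_i=a\}X_iX_i^\top}{n^{-1}\sum_i\mathbf 1\{A_i=a\}}.
\]
Because $X$ is bounded, all moments exist, so the weak law of large numbers applies to the numerator and denominator. They converge to $\pi_a\mathbb E_{P_a}[XX^\top]$ and $\pi_a>0$, respectively. The ratio map is continuous at a positive denominator, so the arm-specific second-moment matrix converges to $\mathbb E_{P_a}[XX^\top]$. The same argument gives $\tfrac1{N_a}\mathbf X_a^\top\mathbf 1_a\to\mathbb E_{P_a}[X]$, and the plain weak law gives $\bar X_n\to\mathbb E_P[X]=c$. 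Since $\pi_a>0$, we also have $P(N_a=0)\to0$.

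\textbf{Step 2: invertibility.} Randomization makes $X$ independent of $A$, so $\mathbb E_{P_a}[XX^\top]=B$, which is nonsingular by hypothesis; this is also assumption (iii) of the main text. Matrix inversion is continuous on the open set $\mathcal B_{\mathrm{inv}}$. Hence the event that $\tfrac1{N_a}\mathbf X_a^\top\mathbf X_a$ is invertible has probability tending to one. On the complementary event I would define $\tilde\beta_a^{\,\mathrm{cf}}$ arbitrarily, for instance as $0$ to mirror the convention for $\gamma_a$; this does not affect convergence in probability.

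\textbf{Step 3: continuous mapping.} The map $G(B,b,c)=B^{-1}(b-c)$ is continuous on $\mathcal B_{\mathrm{inv}}\times\mathbb R^{q+1}\times\mathbb R^{q+1}$ (see Lemma~\ref{lemma:G}). The continuous mapping theorem therefore gives
\[
\tilde\beta_a^{\,\mathrm{cf}}\xrightarrow{P}B^{-1}\{\mathbb E_{P_a}[X]-c\}.
\]
Under randomization $\mathbb E_{P_a}[X]=c$, so the limit is $0$. The only delicate point is the bookkeeping in Step 2: the estimator may be undefined on a vanishing-probability event, and convergence must be stated on the event where it is defined. Everything else is a routine application of the law of large numbers.
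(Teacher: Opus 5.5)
Your proposal is correct and follows essentially the same route as the paper. You write $\tilde\beta_a^{\,\mathrm{cf}}$ as a normalized second-moment matrix inverse times a centered arm mean, use the law of large numbers via the ratio representation together with randomization to get convergence to $B^{-1}$ and $0$, and conclude by continuous mapping/Slutsky, where your explicit handling of the vanishing-probability event on which $N_a=0$ or $\mathbf X_a^\top\mathbf X_a$ is singular is a small bit of extra care that the paper leaves implicit.
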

\begin{proof}
Write
\[
M_{n,a}:=\frac{1}{N_a}\mathbf X_a^\top\mathbf X_a,
\qquad
D_{n,a}:=\frac{1}{N_a}\mathbf X_a^\top\mathbf 1_a-\bar X_n,
\]
so that $\tilde\beta_a^{\,\mathrm{cf}}=M_{n,a}^{-1}D_{n,a}$. By the law of large numbers and
the continuous mapping theorem,
\[
M_{n,a}
=
\frac{n^{-1}\sum_{i=1}^n \mathbbm 1\{A_i=a\}X_iX_i^\top}
     {n^{-1}\sum_{i=1}^n \mathbbm 1\{A_i=a\}}
\xrightarrow{P}
\frac{\mathbb E_P[\mathbbm 1\{A=a\}XX^\top]}{\pi_a}
=
\mathbb E_P[XX^\top]
=
B,
\]
where the penultimate equality uses randomization. Since $B$ is nonsingular,
$M_{n,a}^{-1}\xrightarrow{P}B^{-1}$. Similarly,
\[
\frac{1}{N_a}\mathbf X_a^\top \mathbf 1_a
=
\frac{n^{-1}\sum_{i=1}^n\mathbbm 1\{A_i=a\}X_i}
     {n^{-1}\sum_{i=1}^n\mathbbm 1\{A_i=a\}}
\xrightarrow{P}
\frac{\mathbb E_P[\mathbbm 1\{A=a\}X]}{\pi_a}
=
\mathbb E_P[X].
\]
Also, $\bar X_n\xrightarrow{P}\mathbb E_P[X]$, so
$D_{n,a}\xrightarrow{P}0$. Slutsky's theorem gives $\tilde\beta_a^{\,\mathrm{cf}}=M_{n,a}^{-1}D_{n,a}\xrightarrow{P}0.$
\end{proof}
The next lemma converts this vector convergence into a uniform bound over the support of $X$, which will allow us to control the minimum weight.

\begin{lemma}[Asymptotic positivity of closed-form weights]
\label{lemma:uniform_and_positivity}
Under the conditions of Lemma~\ref{lemma:beta_to_0},
\[
P\!\left(\min_{1\le i \le N_a} \tilde w_{a,i}^{\mathrm{cf}} > 0\right) \to 1.
\]
Consequently, the nonnegativity constraint in \eqref{eq:sbw_primal} is asymptotically inactive.
\end{lemma}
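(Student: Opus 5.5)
The plan is to reduce positivity of all closed-form weights to smallness of the coefficient vector. Every unit $i$ in arm $a$ has $\tilde w_{a,i}^{\mathrm{cf}}=1-X_i^\top\tilde\beta_a^{\,\mathrm{cf}}$, and the covariates are bounded, so Cauchy--Schwarz controls the whole arm at once. Set $K:=\sup_{x\in\mathcal X}\|x\|_2$, which is finite because $\mathcal X$ is bounded and the intercept coordinate is bounded. Then, almost surely,
\[
\min_{1\le i\le N_a}\tilde w_{a,i}^{\mathrm{cf}}
\;\ge\;
1-\max_{i}|X_i^\top\tilde\beta_a^{\,\mathrm{cf}}|
\;\ge\;
1-K\|\tilde\beta_a^{\,\mathrm{cf}}\|_2 .
\]
This is a deterministic bound that holds uniformly over the support. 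Its main point is that the growing number $N_a$ of weights causes no difficulty: a bounded envelope replaces any union bound.

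Next, I would invoke Lemma~\ref{lemma:beta_to_0}, which gives $\tilde\beta_a^{\,\mathrm{cf}}\xrightarrow{P}0$. Combining this with the bound above,
\[
P\!\left(\min_i \tilde w_{a,i}^{\mathrm{cf}}\le 0\right)
\le
P\!\left(\|\tilde\beta_a^{\,\mathrm{cf}}\|_2\ge 1/K\right)\to 0 .
\]
One technicality needs care: $\tilde\beta_a^{\,\mathrm{cf}}$ is only defined when $\mathbf X_a^\top\mathbf X_a$ is invertible and $N_a>0$. I would note that both events have probability tending to one. For $N_a>0$ this follows because $\pi_a$ is bounded away from zero. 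For invertibility it follows because $N_a^{-1}\mathbf X_a^\top\mathbf X_a\xrightarrow{P}B$ with $B$ nonsingular, which is already used in the proof of Lemma~\ref{lemma:beta_to_0}. On the complementary events, which have vanishing probability, the inequality can simply absorb the failure.

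Finally, I would deduce that the nonnegativity constraint is inactive. On the event that all closed-form weights are positive, the argument in Appendix~\ref{sec:SBW-Optimization-Problem} applies. There $\tilde w_a^{\mathrm{cf}}$ is feasible for \eqref{eq:sbw_primal}, and it minimizes the objective over the larger equality-constrained set, so it is the unique solution of the quadratic program. Applying this to both arms and intersecting the two events, each of which has probability tending to one, gives the same conclusion for both arms jointly. I do not expect a real obstacle here. The only delicate point is the uniform control of the minimum over a growing number of units, and the boundedness assumption (ii) removes it.
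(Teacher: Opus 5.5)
Your proposal is correct and takes essentially the same route as the paper. Both arguments bound $\min_i \tilde w_{a,i}^{\mathrm{cf}}$ below by $1-C\|\tilde\beta_a^{\,\mathrm{cf}}\|_2$, using Cauchy--Schwarz and the bounded support of $X$, and then invoke Lemma~\ref{lemma:beta_to_0}. Your handling of the events $\{N_a>0\}$ and invertibility of $\mathbf X_a^\top\mathbf X_a$ is a harmless refinement, since the paper leaves it implicit. Likewise, your explicit feasibility-plus-uniqueness deduction of constraint inactivity is what the paper defers to Proposition~\ref{prop:equivalence-constrained-sbw}.
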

\begin{proof}
Let $C$ be a finite constant such that $P(\|X\|_2 \le C)=1$, which necessarily exists since $\mathcal X$ was assumed to be bounded throughout the paper.
For any $1\le j\le N_a$,
\begin{align*}
    1&= \tilde h_a^{\mathrm{cf}}(X_j) + [1-\tilde h_a^{\mathrm{cf}}(X_j)]\le \tilde h_a^{\mathrm{cf}}(X_j) + \max_{1\le i\le N_a} [1-\tilde h_a^{\mathrm{cf}}(X_i)].
\end{align*}
This is true, in particular, for $j=\argmin_{1\le i\le N_a} \tilde h_a^{\mathrm{cf}}(X_i)$, and so
\begin{align*}
    \min_{1\le i\le N_a} \tilde h_a^{\mathrm{cf}}(X_i)\ge 1 - \max_{1\le i\le N_a} [1-\tilde h_a^{\mathrm{cf}}(X_i)].
\end{align*}
Finally, note that
\begin{align*}
    \max_{1\le i\le N_a}[1-\tilde h_a^{\mathrm{cf}}(X_i)]
=
\max_{1\le i\le N_a}[X_i^\top\tilde\beta_a^{\,\mathrm{cf}}]
\le
\max_{1\le i\le N_a}\|X_i\|_2\|\tilde\beta_a^{\,\mathrm{cf}}\|_2
\le
C\|\tilde\beta_a^{\,\mathrm{cf}}\|_2
\xrightarrow{P}0.
\end{align*}
Combining the preceding two displays gives the result.
\end{proof}

\subsubsection{Equivalence with the quadratic-program estimator}

We now relate the closed-form weights to the constrained quadratic-program solution. When the closed-form weights are strictly positive, they are feasible for the constrained problem; since they already minimize the objective over the larger equality-constrained set, they also solve the constrained problem.

For $a\in\{0,1\}$, define the objective $Q_a(w) := \|w - \mathbf 1_a\|_2^2$ and the feasible sets $\mathcal W_a^{\mathrm{cf}}
:=
\{w \in \mathbb R^{N_a} : \mathbf X_a^\top w = N_a \bar X_n\}$ and $\mathcal W_a^{\mathrm{sbw}}
:=
\{w \in \mathbb R^{N_a} : \mathbf X_a^\top w = N_a \bar X_n,\; w \ge 0\}.$ Note that $\mathcal W_a^{\mathrm{sbw}} \subseteq \mathcal W_a^{\mathrm{cf}}$.

Let $\tilde w_a^{\mathrm{cf}}$ denote the unique minimizer of $Q_a$ over $\mathcal W_a^{\mathrm{cf}}$, given by the closed-form expression \eqref{eq:unconstrained_weights_appendix}, and let $\hat w_a^{\mathrm{sbw}}$ denote the unique minimizer of $Q_a$ over $\mathcal W_a^{\mathrm{sbw}}$, i.e., the solution to the quadratic program \eqref{eq:sbw_primal}. Our final result tells us that the practical SBW estimator computed from the quadratic program \eqref{eq:sbw_primal} coincides with the plug-in estimator $\Phi(P_n)$ with probability tending to one.

\begin{proposition}[Equivalence of constrained and closed-form SBW estimators]
\label{prop:equivalence-constrained-sbw}
Under the assumptions of Lemma~\ref{lemma:uniform_and_positivity}, for each $a\in\{0,1\}$, the $N_a$-dimensional weight vectors $\hat w_a^{\mathrm{sbw}}$ and $\tilde w_a^{\mathrm{cf}}$ coincide with probability tending to $1$ as $n\to\infty$. Consequently, 
\[
\widehat{\psi}^{\,\mathrm{sbw}}=\tilde\psi_n^{\mathrm{\,cf}}\ \ \textnormal{ with probability tending to $1$.}
\]
\end{proposition}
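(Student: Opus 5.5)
We will argue on the event $E_n:=\{\min_{1\le i\le N_0}\tilde w_{0,i}^{\mathrm{cf}}>0\}\cap\{\min_{1\le i\le N_1}\tilde w_{1,i}^{\mathrm{cf}}>0\}$, further intersected with the event that $\mathbf X_a^\top\mathbf X_a$ is invertible for both arms.

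\textbf{Step 1: $E_n$ has probability tending to one.} By Lemma~\ref{lemma:uniform_and_positivity}, applied to each arm, each of the two positivity events has probability tending to one. The invertibility event also has probability tending to one. Indeed, the proof of Lemma~\ref{lemma:beta_to_0} shows $N_a^{-1}\mathbf X_a^\top\mathbf X_a\xrightarrow{P}B$ with $B$ nonsingular, and $N_a\to\infty$ in probability because $\pi_a>0$. A union bound over the finitely many events then gives $P(E_n)\to1$.

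\textbf{Step 2: on $E_n$ the two weight vectors coincide.} The closed-form vector $\tilde w_a^{\mathrm{cf}}$ lies in $\mathcal W_a^{\mathrm{cf}}$ by construction, and it is nonnegative on $E_n$. Hence $\tilde w_a^{\mathrm{cf}}\in\mathcal W_a^{\mathrm{sbw}}$, and in particular $\mathcal W_a^{\mathrm{sbw}}$ is nonempty, so the quadratic program is feasible. Since $\mathcal W_a^{\mathrm{sbw}}\subseteq\mathcal W_a^{\mathrm{cf}}$, we have
\[
Q_a(\tilde w_a^{\mathrm{cf}})=\min_{\mathcal W_a^{\mathrm{cf}}}Q_a\le \min_{\mathcal W_a^{\mathrm{sbw}}}Q_a\le Q_a(\tilde w_a^{\mathrm{cf}}),
\]
so equality holds throughout and $\tilde w_a^{\mathrm{cf}}$ minimizes $Q_a$ over $\mathcal W_a^{\mathrm{sbw}}$. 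The feasible set $\mathcal W_a^{\mathrm{sbw}}$ is convex and closed, and $Q_a$ is strictly convex, so this minimizer is unique. Therefore $\hat w_a^{\mathrm{sbw}}=\tilde w_a^{\mathrm{cf}}$ on $E_n$ for both $a\in\{0,1\}$. This is the argument already sketched in Appendix~\ref{sec:SBW-Optimization-Problem}.

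\textbf{Step 3: the estimators coincide on $E_n$.} The weighted distributions $\widehat P_a^{\,\mathrm{sbw}}$ in Algorithm~\ref{alg:gsbw} depend on the data only through the weight vectors. So on $E_n$ they equal the closed-form weighted laws, which are $\eta_a(P_n)$ by construction of $\Phi$. Consequently $\widehat\psi^{\,\mathrm{sbw}}=\Psi(\eta_0(P_n),\eta_1(P_n))=\Phi(P_n)=\tilde\psi_n^{\mathrm{cf}}$ on $E_n$. Since $P(E_n)\to1$, the claim follows.

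\textbf{Main obstacle.} The only delicate point is bookkeeping, namely making sure the closed-form estimator is well defined on the event considered. This requires invertibility of $\mathbf X_a^\top\mathbf X_a$, which is handled in Step 1. It also requires confirming that $\Phi(P_n)$ uses the same coefficient as $\tilde\beta_a^{\mathrm{cf}}$. To check this, observe that $\gamma_a\{\lambda(P_n)\}$ computed under $P_n$ equals $M_{n,a}^{-1}D_{n,a}=\tilde\beta_a^{\mathrm{cf}}$. Everything else is immediate from earlier results.
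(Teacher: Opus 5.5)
Your proposal is correct and follows essentially the same argument as the paper. On the event from Lemma~\ref{lemma:uniform_and_positivity} that the closed-form weights are positive, they are feasible for the quadratic program and already minimize over the larger equality-constrained set, so uniqueness forces $\hat w_a^{\mathrm{sbw}}=\tilde w_a^{\mathrm{cf}}$ and hence $\widehat\psi^{\,\mathrm{sbw}}=\Phi(P_n)$. Your extra bookkeeping (invertibility of $\mathbf X_a^\top\mathbf X_a$ with probability tending to one, and the identity $\gamma_a\{\lambda(P_n)\}=\tilde\beta_a^{\,\mathrm{cf}}$) is a harmless refinement that the paper leaves implicit.
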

\begin{proof}
Fix $a\in\{0,1\}$. By Lemma~\ref{lemma:uniform_and_positivity}, $ P^n\!\left(\min_{1\le i\le N_a}\tilde w_{a,i}^{\mathrm{cf}}>0\right)\to1. $ On this event, $\tilde w_a^{\mathrm{cf}}\in\mathcal W_a^{\mathrm{sbw}}$. Since $\mathcal W_a^{\mathrm{sbw}}\subseteq\mathcal W_a^{\mathrm{cf}}$ and $\tilde w_a^{\mathrm{cf}}$ minimizes $Q_a$ over $\mathcal W_a^{\mathrm{cf}}$, it also minimizes $Q_a$ over $\mathcal W_a^{\mathrm{sbw}}$. By uniqueness of the constrained minimizer, $\hat w_a^{\mathrm{sbw}}=\tilde w_a^{\mathrm{cf}}$ on this event. Applying the same argument to both treatment arms, we have
\[
P^n(\hat w_0^{\mathrm{sbw}}=\tilde w_0^{\mathrm{cf}},\ \hat w_1^{\mathrm{sbw}}=\tilde w_1^{\mathrm{cf}})\to1.
\]
Since $\Phi(P_n)=\tilde\psi_n^{\mathrm{\,cf}}$ uses the closed-form weights
$\tilde w_a^{\mathrm{cf}}$, whereas $\widehat{\psi}^{\,\mathrm{sbw}}$ uses the constrained
weights $\hat w_a^{\mathrm{sbw}}$, it follows that
\[
P^n(\widehat{\psi}^{\,\mathrm{sbw}}=\tilde\psi_n^{\mathrm{\,cf}})\to1.
\]
\end{proof}

\subsubsection{Proofs of the main asymptotic results}

We first transfer the asymptotic distribution and bootstrap consistency from the
closed-form plug-in estimator analyzed above to the practical estimator computed
by the quadratic program. 

\begin{theorem}[Multivariate asymptotic normality and bootstrap consistency]
\label{thm:asymp_boot_multivariate}
Let $\widehat{\psi}^{\,\mathrm{sbw}}$ denote the $d$-dimensional version of the SBW estimator computed from Algorithm~\ref{alg:gsbw}, and let $\psi=\Psi(P_0,P_1)\in\mathbb R^d$. Under the regularity conditions of Theorem~\ref{thm:Phi_asymptotic_distribution},
there exists a finite covariance matrix $\Sigma_{\mathrm{sbw}}$ such that
\[
\sqrt n(\widehat{\psi}^{\,\mathrm{sbw}}-\psi)
\rightsquigarrow
N(0,\Sigma_{\mathrm{sbw}}).
\]
Furthermore, the nonparametric bootstrap consistently estimates this limiting
distribution.
\end{theorem}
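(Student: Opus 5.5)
The plan is to transfer the conclusions already obtained for the closed-form plug-in estimator $\Phi(P_n)=\tilde\psi_n^{\mathrm{\,cf}}$ to the practical estimator $\widehat\psi^{\,\mathrm{sbw}}$, using the fact that the two agree with probability tending to one. The limiting covariance matrix is $\Sigma_{\mathrm{sbw}}:=\Var\{\dot\Phi_P[\mathbb G_P]\}$ from Theorem~\ref{thm:Phi_asymptotic_distribution}. It is finite because $\dot\Phi_P$ is a continuous linear map into $\mathbb R^d$ and $\mathbb G_P$ is a tight Gaussian process.

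First, by randomization, $P\in\mathcal M$. Hence \eqref{eq:Phi-target-on-model} gives $\Phi(P)=\Psi(P_0,P_1)=\psi$. Theorem~\ref{thm:Phi_asymptotic_distribution} then yields $\sqrt n(\tilde\psi_n^{\mathrm{\,cf}}-\psi)\rightsquigarrow N_d(0,\Sigma_{\mathrm{sbw}})$. Next, write
\[
\sqrt n(\widehat\psi^{\,\mathrm{sbw}}-\psi)=\sqrt n(\tilde\psi_n^{\mathrm{\,cf}}-\psi)+\sqrt n(\widehat\psi^{\,\mathrm{sbw}}-\tilde\psi_n^{\mathrm{\,cf}}).
\]
By Proposition~\ref{prop:equivalence-constrained-sbw}, the second term is zero on an event whose probability tends to one. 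It is therefore $o_P(1)$ irrespective of its scaling, and Slutsky's lemma gives the first claim. The hypotheses needed here are nonsingularity of $B$ and bounded support. They are implied by the standing conditions, since $\mathbb E_{P_a}[XX^\top]=B$ under randomization.

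For the bootstrap, let $\widehat\psi^{\,\mathrm{sbw},*}$ denote Algorithm~\ref{alg:gsbw} applied to $P_n^*$. Theorem~\ref{thm:bootstrap_phi} gives conditional weak convergence of $\sqrt n\{\Phi(P_n^*)-\Phi(P_n)\}$ to $\dot\Phi_P(\mathbb G_P)$. We must show that replacing $\Phi(P_n^*)$ by $\widehat\psi^{\,\mathrm{sbw},*}$ and $\Phi(P_n)$ by $\widehat\psi^{\,\mathrm{sbw}}$ changes nothing asymptotically, conditionally in probability. For the centering this follows from the previous paragraph. For the bootstrap replicate, the plan is to rerun Lemmas~\ref{lemma:beta_to_0}--\ref{lemma:uniform_and_positivity} under the bootstrap law.

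This bootstrap step is the main obstacle, because it requires a conditional, in-probability statement rather than an unconditional one. The bootstrap coefficient $\tilde\beta_a^{*}$ is a continuous function of the bootstrap moments $P_n^*[\mathbf 1\{A=a\}XX^\top]$, $P_n^*[\mathbf 1\{A=a\}X]$, $P_n^*[\mathbf 1\{A=a\}]$ and $P_n^*X$. Each of these converges to its population counterpart in outer probability, by the bootstrap law of large numbers or by $\mathbb G_n^*=O_P(1)$ combined with consistency of $P_n$. Consequently $\tilde\beta_a^*\to0$ in probability, unconditionally. Bounded support then gives $\min_i\tilde w^{*}_{a,i}>0$ with unconditional probability tending to one. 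By Markov's inequality, the conditional probability of the complementary event also tends to zero in probability. On the good event, the bootstrap quadratic-program weights equal the closed-form ones, so $\widehat\psi^{\,\mathrm{sbw},*}=\Phi(P_n^*)$.

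Finally, conditional weak convergence in the bounded-Lipschitz sense is unaffected by perturbations that vanish off events of conditional probability $o_P(1)$. This gives bootstrap consistency for the practical estimator. One further point concerns the standard error: Algorithm~\ref{alg:bootstrap-z} relies on moments, not just the limiting law. To obtain the standard-error statement for scalar $d$, I would add a uniform-integrability step. This step can be justified by boundedness of the replicates on the good event, or it can be stated at the level of distributional consistency with quantile-based scale estimates.
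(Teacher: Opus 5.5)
Your proposal is correct and follows the paper's proof. The limit law and bootstrap consistency are first obtained for the closed-form plug-in $\Phi(P_n)=\tilde\psi_n^{\mathrm{\,cf}}$ via Theorems~\ref{thm:Phi_asymptotic_distribution} and~\ref{thm:bootstrap_phi}, then transferred to $\widehat\psi^{\,\mathrm{sbw}}$ through Proposition~\ref{prop:equivalence-constrained-sbw} and a bootstrap rerun of Lemmas~\ref{lemma:beta_to_0}--\ref{lemma:uniform_and_positivity}, with your Markov-inequality and bounded-Lipschitz details making precise what the paper only asserts in one sentence (``the same arguments apply conditionally''). Your closing standard-error remark goes beyond this statement, which is distributional, and boundedness of the replicates would not by itself give uniform integrability of $n(\widehat\psi^{\,\mathrm{sbw},*}-\widehat\psi^{\,\mathrm{sbw}})^2$, so that part would need a genuine moment bound if pursued.
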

\begin{proof}
By construction, $\Phi(P)=\psi$ and
$\Phi(P_n)=\tilde\psi_n^{\mathrm{\,cf}}$, where
$\tilde\psi_n^{\mathrm{\,cf}}$ denotes the closed-form SBW plug-in estimator.
Therefore, Theorem~\ref{thm:Phi_asymptotic_distribution} gives
\[
\sqrt n(\tilde\psi_n^{\mathrm{\,cf}}-\psi)
\rightsquigarrow
\dot\Phi_P(\mathbb G_P)
\sim N(0,\Sigma_{\mathrm{sbw}}),
\qquad
\Sigma_{\mathrm{sbw}}
:=
\Var\{\dot\Phi_P(\mathbb G_P)\}.
\]
Since $\widehat{\psi}^{\,\mathrm{sbw}}=\tilde\psi_n^{\mathrm{\,cf}}$ with probability tending
to one by Proposition~\ref{prop:equivalence-constrained-sbw}, the same weak
limit holds for $\widehat{\psi}^{\,\mathrm{sbw}}$.

Similarly, Theorem~\ref{thm:bootstrap_phi} gives bootstrap consistency for the closed-form plug-in estimator $\tilde\psi_n^{\mathrm{\,cf}}=\Phi(P_n)$. The same arguments used in Appendix~\ref{sec:equivalence-constrained-sbw} to show equivalence of the constrained and closed-form estimators apply conditionally to the bootstrap versions, since the bootstrap empirical means satisfy the corresponding conditional convergence statements. Thus the bootstrap constrained estimator and bootstrap closed-form estimator agree with conditional probability tending to one, and bootstrap consistency transfers to the practical estimator $\widehat{\psi}^{\,\mathrm{sbw}}$.
\end{proof}
\begin{proof}[Proof of Theorem~\ref{thm:asymp_boot}]
Theorem~\ref{thm:asymp_boot} is the scalar case $d=1$ of
Theorem~\ref{thm:asymp_boot_multivariate}, with
$\sigma_{\mathrm{sbw}}^2=\Sigma_{\mathrm{sbw}}$.
\end{proof}

\bigskip
We next prove the asymptotic variance reduction statement in the main text.

\begin{proof}[Proof of Theorem~\ref{thm:variance_reduction}]
Proposition~\ref{prop:equivalence-constrained-sbw} implies that the practical SBW estimator computed from Algorithm~\ref{alg:gsbw} and the closed-form SBW plug-in estimator have the same first-order asymptotic distribution and the same asymptotic variance. For the closed-form SBW plug-in estimator, this variance is $\Var\{\theta_{\mathrm{sbw}}(Z)\}$. Similarly, the unadjusted plug-in estimator has asymptotic variance $\Var\{\theta_{\mathrm{unadj}}(Z)\}$. Under Assumption~\ref{assumption:arm_specific_derivative_decomposition}, Lemma~\ref{lem:derivative-level-variance-reduction} gives $\sigma_{\mathrm{sbw}}^2\le\sigma_{\mathrm{unadj}}^2$, as claimed.
\end{proof}
For vector-valued estimands, the same argument applies to any fixed linear combination of the components. Therefore, the multivariate version of the result can be stated as a covariance-matrix comparison: SBW weakly reduces the large-sample variance of every fixed linear contrast of the estimand vector. The coordinatewise variance comparisons are obtained as special cases.

\end{document}